\newcommand\citeposs[1]{\citeauthor{#1}'s \citeyearpar{#1}}
\newcommand\citepossall[1]{\citeauthor*{#1}'s \citeyearpar{#1}}
\newlength{\tempparskip}
\newlength{\halfparskip}
\newtheorem{thm}{Theorem}
\newtheorem{prop}{Proposition}
\newtheorem{fact}{Fact}
\newtheorem{lemma}{Lemma}
\newtheorem{claim}{Claim}
\providecommand{\customgenericname}{}
\newcommand{\newcustomtheorem}[2]{%
  \newenvironment{#1}[1]
  {\renewcommand\customgenericname{#2}%
   \renewcommand\theinnercustomgeneric{\ref*{##1}$'$}%
   \innercustomgeneric
  }
  {\endinnercustomgeneric}
}
\theoremstyle{definition}
\newtheorem{defn}{Definition}
\providecommand{\customgenericname}{}
\newcommand{\newcustomdef}[2]{%
  \newenvironment{#1}[1]
  {\renewcommand\customgenericname{#2}%
   \renewcommand\theinnercustomgenericdef{\ref*{##1}$'$}%
   \innercustomgenericdef
  }
  {\endinnercustomgenericdef}
}
\declaretheoremstyle[
bodyfont=\normalfont,
]{mystyle}
\declaretheorem[name=Example,style=mystyle]{eg}
\renewcommand{\thmcontinues}[1]{continued}
\newcommand\ssm{\smallsetminus}
\newcommand\valFn{V^j}
\newcommand\val[1]{\valFn(#1)}
\newcommand\dQLFn{D^j}
\newcommand\dQL[1]{\dQLFn(#1)}
\newcommand\utilFn{U^j}
\newcommand\util[1]{\utilFn(#1)}
\newcommand\dHFn{\dQLFn_{\mathrm{H}}}
\newcommand\dH[2]{\dHFn(#1;#2)}
\newcommand\valHFn{\valFn_{\mathrm{H}}}
\newcommand\valH[2]{\valHFn(#1;#2)}
\newcommand\dBFn{\widetilde{D}^j}
\newcommand\dHBFn{\dBFn_{\mathrm{H}}}
\newcommand\dHB[3]{\dHBFn(#1;#2,#3)}
\newcommand\dB[2]{\dBFn(#1;#2)}
\newcommand\e[1]{\mathbf{e}^{#1}}
\newcommand\zero{\mathbf{0}}
\newcommand\p{\mathbf{p}}
\newcommand\tp{\widetilde{\mathbf{p}}}
\newcommand\pComp[1]{p_{#1}}
\newcommand\tpComp[1]{\tilde{p}_{#1}}
\newcommand\pnewi{\pprComp{i},\p_{I \ssm \{i\}}}
\newcommand\tpnewb{\tpprComp{i},\tp_{\B \ssm \{\b\}}}
\newcommand\ppr{\p'}
\newcommand\pprComp[1]{p'_{#1}}
\newcommand\tpprComp[1]{\tilde{p}'_{#1}}
\newcommand\hp{\hat{\p}}
\newcommand\hpComp[1]{\hat{p}_{#1}}
\newcommand\bun{\mathbf{x}}
\newcommand\bunComp[1]{x_{#1}}
\newcommand\bunpr{\bun'}
\newcommand\bunprComp[1]{x'_{#1}}
\newcommand\hbun{\hat{\bun}}
\newcommand\hbunComp[1]{\hat{x}_{#1}}
\newcommand\dvec{\mathbf{d}}
\newcommand\dvecComp[1]{d_{#1}}
\newcommand\dvecpr{\dvec'}
\newcommand\dvecprComp[1]{\dvecComp{#1}'}
\newcommand\bundow{\mathbf{w}}
\newcommand\buny{{\bf y}}
\newcommand\bunj{\bun^j}
\newcommand\bundowj{\bundow^j}
\newcommand\bunndowjnopar{\numerdowj,\bundowj}
\newcommand\bunndowj{(\bunndowjnopar)}
\newcommand\numerdowj{w^j_0}
\DeclareMathOperator*\argmax{arg\,max}
\DeclareMathOperator*\argmin{arg\,min}
\DeclareMathOperator\Conv{Conv}
\newcommand\D{\mathcal{D}}
\newcommand\B{\mathcal{B}}
\newcommand\cvec{\mathbf{c}}
\newcommand\cvecComp[1]{c_{#1}}
\title{The Economics of Equilibrium with Indivisible Goods%
\footnote{%
Part of this work was supported by the National Science Foundation under Grant No.~DMS-1928930 and by the Alfred P. Sloan Foundation under grant G-2021-16778, while the authors were in residence at the Simons Laufer Mathematical Sciences Institute (formerly MSRI) in Berkeley, California, during the Fall 2023 semester.
We thank Fuhito Kojima, Paul Milgrom, and several seminar audiences for helpful comments.
Teytelboym is grateful to Elizabeth Baldwin for discussions about unimodularity.
}
}
\author{Ravi Jagadeesan%
\footnote{Department of Economics, Stanford University.  Email address: \texttt{ravi.jagadeesan@gmail.com}.
}
\and Alexander Teytelboym%
\footnote{Department of Economics, Institute for New Economic Thinking, and St.~Catherine's College, University of Oxford.  Email address: {\tt alexander.teytelboym@economics.ox.ac.uk}.
Teytelboym has received funding from the European Research Council (ERC) under the European Union's Horizon 2020 research and innovation programme (grant agreement No.~949699).
}
}
\date{%
\today}
\begin{document}

\maketitle

\begin{abstract}
This paper develops a theory of competitive equilibrium with indivisible goods
based entirely on economic conditions on %
demand.
The key idea is to analyze complementarity and substitutability between %
bundles of goods, rather than merely between goods themselves.
This approach allows us to formulate sufficient, and essentially necessary, conditions for equilibrium existence,
which unify
settings with complements and settings with substitutes.
Our analysis has implications for auction design.
\end{abstract}

\onehalfspacing

\section{Introduction}

Much of economic analysis relies on goods being perfectly divisible.
For example, standard convexity conditions---which are critical to ensuring the existence of market-clearing prices \citep{ArDe:54,mckenzie1954equilibrium}---are incompatible with indivisibilities.

Allowing for indivisibilities is an issue of economic importance due to their relevance for market design.
For example, an underlying goal behind many auction designs is to guide markets toward competitive equilibrium outcomes.%
\footnote{See, e.g., \cite{Milg:00,milgrom2009assignment}, \cite{AuMi:02}, \cite*{ausubel2006clock}, \cite{klemperer2010product}, and \cite{milgrom2020clock}.}
Moreover, some prominent auction formats directly rely on equilibrium existence as part of their protocols (see, e.g., \cite{milgrom2009assignment} and \cite{klemperer2010product}).
Developing an economic theory of equilibrium with indivisible goods is therefore an issue not only of theoretical, but also of practical, importance.

One celebrated case for which there is a well-behaved theory of auctions and markets with indivisible goods is the case of gross substitutes.
With divisible goods,
gross substitutes provides a condition under which competitive equilibria are stable and t\^atonnement price dynamics converge to equilibrium \citep*{arrow1958stability,arrow1959stability}.
With indivisible goods,
\cite{KeCr:82} showed that
gross substitutes also provides a sufficient condition for the existence of competitive equilibria. %
This theoretical result underpins practical auction designs for substitute indivisible goods \citep{GuSt:00,Milg:00,milgrom2009assignment,AuMi:02,%
klemperer2010product}.
But the gross substitutes condition is also rather restrictive because it entirely rules out complementarities.

If the market contains only two indivisible goods,
the case of complements is isomorphic to the case of substitutes.
However, with more than two indivisible goods,
the case of complements is more delicate.
For example, equilibrium does not generally exist when all goods are complements \citep{BiMa:97},
yet certain patterns of complementarities are compatible with the existence of equilibrium \citep*{greenberg1986strong,danilov2013equilibria}.
As a result,
the equilibrium existence problem has been difficult to penetrate beyond cases with a close economic connection to substitutes.%
\footnote{\label{fn:inconsistentGoods}See, e.g., \cite{greenberg1986strong},
\cite{SuYa:06,SuYa:09}, %
\citet*{danilov2013equilibria}, and \citet{HaKoNiOsWe:11}.
A notable exception is \cite{rostek2018matching}, who showed equilibria exist in matching markets in which all contracts are complementary.
However, analogues of this result for exchange economies require %
including brokers in the economy
in a particular way \citep{rostek2020complementarity}.
}

The goal of the current paper is to develop an economic theory of equilibrium for settings with indivisible goods that allows for complementarities.
Our main result provides
a condition
on the structure of complementarity and substitutability that is sufficient, and essentially necessary,
for the existence of competitive equilibria in markets for indivisible goods.
To formulate this result,
we analyze complementarity and substitutability between bundles of goods, rather than merely between goods themselves.

For ease of exposition,
we first focus on the case in which agents demand at most one unit of each good.
Recall that an essentially necessary condition for guaranteeing equilibrium existence is that no two goods be substitutes to one agent and complements to another---that is, that each pair of goods either be consistently substitutes or consistently complements.%
\footnote{This observation was made in the quasilinear case to provide a counterexample to equilibrium existence by \cite{KeCr:82} and to provide a maximal domain result by \cite{GuSt:99}, and extended to the case with income effects by \cite*{baldwin2021equilibrium};
see also \citet[Section 3.3]{henry1970indivisibilites} and \cite{yang2017maximal}.
As we formalize in Section~\ref{sec:necessityBinary}, %
this condition is only necessary to guarantee equilibrium existence for domains that are \emph{invariant} in the sense that they include all additive valuations and are closed under addition of additive valuations.
}
For example, if Lex regards apples and bananas as perfect complements and values the basket of fruit at $\$1$,
and Vincent would like to buy either piece of fruit for $\$1$,
then it is impossible to clear the market if Claude has one apple and one banana available for sale and is unwilling to cut the fruit \citep{henry1970indivisibilites,KeCr:82}.

This consistency condition is not sufficient to guarantee the existence of equilibrium (beyond the case of substitutes).
For example,
it is always satisfied if all goods are complements.

Our key conceptual insight is to analyze the comparative statics of demand for bundles of goods %
and impose an analogous consistency condition on complementarity and substitutability.
More precisely, we introduce a \emph{bundle consistency} condition
that requires that each pair of appropriate bundles of goods be either consistently complementary or consistently substitutable.
We then show that as equilibrium existence does not depend on how goods are bundled,
bundle consistency is an essentially necessary condition for equilibrium~existence.

Our main technical contribution is to show that,
more surprisingly, bundle consistency is also a sufficient condition for equilibrium existence.
Thus, once bundles are also considered, the \emph{only} essential obstruction to equilibrium existence with indivisible goods is the inconsistency between substitutability and complementarity,
illustrated by the example with Lex, Vincent, and Claude above.
By contrast,
\citeposs{KeCr:82} substitutability property is not the only  fundamental reason for equilibrium existence:
having a family of substitute bundles is sufficient for equilibrium existence, %
but not necessary.%
\footnote{See \citet*{danilov2013equilibria} and \cite{BaKl:19}.
\cite{greenberg1986strong}, \cite{SuYa:06}, and \citet{HaKoNiOsWe:11} developed important domains for equilibrium existence that can be derived by identifying appropriate families of substitute bundles.
}

The observation that underpins our results is that the structure of complementarity and substitutability across bundles of goods can expose obstructions to equilibrium existence that are not apparent from the structure of complementarity and substitutability between goods themselves.
In particular, our bundle consistency condition has bite even if all goods are (consistently) complements.
The following example illustrates these points. %

\begin{eg}[name=Bundle Inconsistency with a Three-Cycle of Complements, label=eg:threecycle]
Suppose that there are three goods---apples, bananas, and coconuts---and that agents see pairs of goods as complements.
It turns out that competitive equilibria do not generally exist in this setting \citep[page 390]{BiMa:97}.
Even though there is no inconsistency between substitutability and complementarity for any pair of goods,
an inconsistency emerges when goods are bundled.
Indeed, suppose that apples and bananas are bundled,
but apples (and coconuts) can also be bought and sold alone.
To buy a banana without an apple, an agent would buy the bundle and sell an apple.
Due to the complementarity between bananas and coconuts, an increase in the price of coconuts can lower demand for bananas.
With bundling, this leads to selling fewer apples, thereby raising excess demand for apples and exposing a substitutability between coconuts and apples.%
\footnote{The general principle at play here is that,
even with divisible goods,
whether a pair of goods (or bundles) is substitutes or complements depends on the choice of definition of other goods or bundles \citep{weinstein2022direct}.
}
To see a bundle inconsistency, note that apples and coconuts also remain complements (for reasons unrelated to bananas).
\end{eg}

Other patterns of complementarities are bundle-consistent;
these patterns turn out to be compatible with equilibrium existence.
The following variation on Example~\ref{eg:threecycle} illustrates.

\begin{eg}[Bundle Consistency in Consecutive Games]
\label{eg:consecutive}
Suppose that apples and bananas are complements,
as are bananas and coconuts,
but that unlike in Example~\ref{eg:consecutive},
coconuts are only complementary to apples in combination with bananas.
This pattern of complementarity is compatible with equilibrium existence \citep{greenberg1986strong}.
And indeed, this pattern of complementarity is bundle-consistent.
As in Example~\ref{eg:threecycle},
if apples and bananas are bundled and apples can be bought and sold alone,
then coconuts and apples still become substitutes.
However, this substitutability does not create an inconsistency of complementarity and substitutability between apples and coconuts:
bundling apples and bananas makes coconuts become complementary only to the apple-banana bundle---rather than to apples.  %
\end{eg}

The difference between the patterns of complementarity between goods across Examples~\ref{eg:threecycle} and~\ref{eg:consecutive} is subtle---%
whether coconuts are complementary directly to apples, or to apples only in conjunction with bananas.
In particular, all pairs of goods are complements in both examples.
Considering the structure of complementarity and substitutability when goods are bundled
makes the economic distinction between the two examples transparent.

A conceptual challenge in formalizing the concept of bundle consistency is to identify \emph{which} bundles that agents must see as consistently substitutable or complementary. %
When goods themselves are complementary,
the relevant bundles are the ones consisting of sets of goods that some agent sees as (strict) complements.
Moreover,
when two goods are substitutes,
there is an underlying complementarity between either good and the opportunity to sell the other good \citep{Ostr:08,HaKoNiOsWe:11,HaKoNiOsWe:13}.
In light of this ``hidden complementarity,''
we consider bundles that include opportunities to sell. %
The relevant bundles then consist of goods and opportunities to sell that some agent sees as (strict) complements.
With indivisible goods,
we show that price effects can capture which bundles consist of strict complements;
this logic applies equally well to bundles that include opportunities to sell.
Our analysis %
therefore %
unifies settings with substitutes with those with complements.

Our results extend to the more general case in which agents can demand multiple units of each good.
In that case,
in addition to inconsistencies between substitutability and complementarity \emph{across goods or bundles},
it is possible that there be inconsistency between substitutability and complementarity \emph{across units of the same good}.
Intuitively,
units of the same good are mechanically substitutes,
so complementarities between units of the same good lead to inconsistencies between substitutability and complementarity at the unit level.
We therefore require that units of the same good be substitutes for each other---%
though they may be complementary to (units of) other goods.
Our \emph{unit consistency} condition %
is closely related to the condition in standard equilibrium theory that there be no increasing returns to scale:
with one indivisible good,
increasing returns arise if and only if units of the good can be complements.
We show that unit and bundle consistency are together sufficient, and essentially necessary, for equilibrium existence with multiunit demand.

\paragraph*{Related literature.}
Our results have a mathematical connection
to geometric approaches to the existence of equilibrium with indivisible goods.
\cite*{DaKoMu:01} developed a theory of convexity for settings with indivisible goods by using methods from combinatorial geometry. %
In their framework,
each set $\mathcal{D}$ of integer vectors satisfying~a~condition called ``total unimodularity''
gives a ``class of discrete convexity,''
defined as the %
preferences for which the edge directions of the convex hulls of (Hicksian) demand sets are elements of $\mathcal{D}$.%
\footnote{\label{fn:introDKMpseudo}These demand sets are also required to include all integer points in their convex hulls.}
They showed each class of discrete convexity gives a domain for equilibrium existence.%

For the quasilinear case,
\cite{BaKl:19} took a dual geometric approach to constructing  \citepossall{DaKoMu:01} domains for equilibrium existence.
\cite{BaKl:19} represented preferences in terms of
the set of price vectors at which demand is nonunique. %
Their key insight was to observe that for quasilinear preferences over indivisible goods and money, %
this set
is a space of the type studied in the mathematical area of tropical geometry.
They then classified valuations into ``demand types'': %
each demand type is defined by a set $\D$
of integer vectors %
that places mathematically natural constraints on the tropical geometric representation.
They showed that when $\D$ is totally unimodular, the class of %
valuations of demand type $\D$ recovers the quasilinear case of the class of discrete convexity corresponding to $\D$;
the class of %
valuations of each (totally) unimodular demand type thus gives a domain for equilibrium existence.%
\footnote{To recover classes of discrete convexity and obtain equilibrium existence, valuations must also be \emph{pseudoconcave} in that demand sets include all integer points in their convex hulls.  See also Footnote~\ref{fn:introDKMpseudo}.}
\citet*{baldwin2020equilibrium} extended part of \citeposs{BaKl:19} work to settings with income effects %
and recovered
classes of discrete convexity beyond~the~quasilinear~case.

\cite*{DaKoLa:2003,danilov2013equilibria} showed that there are classes of discrete convexity corresponding to well-understood domains for equilibrium existence with transferable utility, including the case of substitutes.%
\footnote{%
\cite*{DaKoLa:2003}
showed that there is a unimodular demand type corresponding to the class of (strong) substitutes valuations \citep{KeCr:82,GuSt:99,MiSt:09}; see also \cite{shioura2015gross}.
\cite*{danilov2013equilibria} showed that there are unimodular demand types corresponding to the class of ``(gross) substitutes and complements'' valuations \citep{SuYa:06,SuYa:09},
and to the class of valuations for ``consecutive games'' \citep{greenberg1986strong}; see also \cite{BaKl:19}.
}
Most classes of discrete convexity, and the corresponding totally unimodular demand types, introduced novel domains.
However,
the economic content of these new domains for equilibrium existence has remained elusive.%
\footnote{%
\citet[page 868]{BaKl:19} asserted, based on their Proposition 3.3,
that ``a demand type is defined by a list of vectors that give the possible ways in which [...]~demand can change in response to a small generic price change''
(see also \citet*[page 24]{baldwin2020equilibrium} for a similar assertion for settings with income effects).
However, this assertion is incorrect even in the quasilinear setting, for reasons that Footnote~\ref{fn:falseClaimBK19} in Appendix~\ref{app:bk19} will explain.
}
Our approach defines mathematically equivalent domains to \cite*{DaKoMu:01},
but in terms of economic conditions on the structure of complementarity and substitutability across bundles,
rather than relying on mathematical concepts from combinatorial, convex, or tropical geometry.
We discuss the relationship in more detail in Appendix~\ref{app:connectionsToGeo}. %

Our focus on demand for bundles of goods, rather than goods for themselves,
has a conceptual relationship to three recent analyses of preferences for divisible goods.
\cite{weinstein2022direct} introduced a critique of the standard definitions of substitutes and complements based on their dependence on exactly how goods are bundled.
This issue arises in our setting,
where how all goods are bundled can affect whether two goods are substitutes or complements.
We address this issue by focusing on a natural family of feasible bundles.
\cite{galeotti2022taxes} used bundles corresponding to eigenvectors of the Slutsky matrix to investigate optimal tax and subsidy policy.
In our setting,
there is no analogue of the Slutsky matrix 
due to the presence of indivisibilities.
\cite{rostek2023reallocative} showed that it is possible to design a core-selecting reallocative auction if there is a way to bundle goods to make bundles substitutes for (and efficiently allocated in nonnegative quantities to) all bidders;
they also give a sufficient condition for this property to hold in terms of Slutsky matrices.
In our setting,
the ability to bundle goods in a way that restores substitutability is sufficient for equilibrium existence 
but not necessary %
\citep*{danilov2013equilibria,BaKl:19};
bundle consistency is a strictly weaker condition. %

Subsequent to our original work,
\cite{nguyen2024equilibrium} have applied our approach to developing essentially necessary conditions for equilibrium existence in transferable utility economies to relate the equilibrium existence problem for transferable utility economies to the implementability of random allocations in pseudomarkets.

\section{Setting}
\label{sec:geometry}

The setup follows \cite*{baldwin2021equilibrium}.

There is a set $I$ of indivisible goods
and a set $J$ of agents.
Agents have preferences over consumption vectors $\bun$ of indivisible goods (or bads) and a continuously divisible num\'eraire, which we call money.
Formally, each agent $j$ has a non-empty domain $X^j \subseteq \{0,1,\ldots,M\}^I$ (where $M$ is a positive integer) of feasible consumption vectors of indivisible goods,
and a utility function $\utilFn = \util{x_0,\bun}: \mathbb{R} \times X^j \to \mathbb{R}$,
which we assume is continuous, strictly increasing in the quantity $x_0$ of money, and satisfies
\[\lim_{x_0 \to -\infty} \util{x_0,\bun} = -\infty \quad \text{and} \quad \lim_{x_0 \to \infty} \util{x_0,\bun} = \infty \qquad \text{for all } \bun \in X.%
\footnote{%
Utility can obviously take range in any open interval instead. %
Here, agents are allowed to borrow money, but would never choose to borrow and spend infinitely much money.  All the results continue to hold if agents cannot borrow and find it essential to end up with a positive amount of money (as in \cite{henry1970indivisibilites} and \citet*{baldwin2021equilibrium}).}
\]

Each agent $j$ is endowed with an amount $\numerdowj$ of money and an feasible consumption vector $\bundowj$.
Given an endowment $\bunndowj \in \mathbb{R} \times X^j$ for each agent $j$,
a \emph{competitive equilibrium} consists of a price vector $\p \in \mathbb{R}^I$ and a profile of consumption vectors $(\bunj)_{j \in J}$ such that: (i) 
 $\bunj \in \argmax_{\bun \in X^j} \left\{\util{\numerdowj - \p \cdot (\bun - \bundowj),\bun}\right\}$ for each agent $j$, and (ii)
$\sum_{j \in J} \bunj = \sum_{j \in J} \bundowj.$
Here, $\p \cdot (\bun - \bundowj)$ is the net cost of consuming $\bun$ given an endowment $\bundow$ of goods.

In order to analyze settings both with and without income effects, we focus on (Hicksian) demand
which, following \cite*{baldwin2021equilibrium}, we define by analogy with the standard divisible good setting as the set of feasible bundles that minimize the expenditure of achieving a target utility level at competitive prices.
Formally, define the (Hicksian) \emph{demand} of an agent $j$ at a price vector $\p$ for a utility level $u$
by%
\[\dH{\p}{u} = \argmin_{\bun \in X^j} \left\{\min_{x_0 | \util{x_0,\bun} \ge u} \{x_0 + \p \cdot \bun\}\right\}.
\]

When an agent does not experience income effects, their preferences over goods can be summarized by a \emph{valuation} $\valFn : X^j\to\mathbb{R}$ and their utility function $\util{x_0,\bun}=x_0+ \val{\bun}$ becomes quasilinear.
Without income effects, and agent's demand is simply
\[\dQL{\p} = \argmax_{\bun \in X^j} \{\val{\bun} - \p \cdot \bun\}.\]

Intuitively, a good $i$ is substitutable (resp. complementary) to good $k$ if whenever the price of good $i$ increases, the demand for good $k$ weakly increases (resp. weakly decreases). 
Formally, we say that a
good $i$ is \emph{substitutable} (resp.~\emph{complementary}) to good $k$
if for all agents $j$,
utility levels $u$
and price vectors $\p$ with $\dH{\p}{u} = \{\bun\}$,
and all new prices $\pprComp{i} < \pComp{i}$,
there exists $\bunpr \in \dH{\pnewi}{u}$ such that $\bunprComp{k} \le \bunComp{k}$
(resp.~$\bunprComp{k} \ge \bunComp{k}$).%
\footnote{%
The condition that all goods be substitutes corresponds to \citepossall{baldwin2021equilibrium} ``net substitutes'' condition.
We omit the modifier ``net'' as in standard consumer theory.
}
In Appendix~\ref{app:subsComps}, we show that our definition is equivalent to alternative definitions of substitutability and complementarity that have been proposed in the literature.

\section{Complementarity and substitutability between bundles}\label{sec:bundling}

Rather than only focusing on whether a pair of goods are substitutes or complements, we will need to consider whether bundles of goods are substitutes or complements.

\subsection{Preferences over bundles}

\newcommand\q{\mathbf{q}}
\newcommand\qComp[1]{q_{#1}}
\newcommand\qpr{\mathbf{q}'}
\newcommand\qprComp[1]{q'_{#1}}
\renewcommand\b{\mathbf{b}}
\newcommand\bComp[1]{b_{#1}}
Formally,
a \emph{bundling} is a set $\B \subseteq \{-1,0,1\}^I$ of bundles $\b$ that forms a basis for $\mathbb{R}^I$.

In particular, consumption vectors that include at most one unit of each good are potential bundles.
Moreover, we allow bundles in a bundling to include negative components, which represent sale opportunities included a bundle. 
For example, bundle $\b=(1,-1)$ represents the combination of the first good combined with an opportunity to sell the second good.
Considering bundles that include sale opportunities is important as sale opportunities can be complementary to goods.

Given a (potentially fractional and negative) quantity vector $\q \in \mathbb{R}^\B$ of bundles,
there is a corresponding (potentially fractional and negative)  bundle
$\bun(\q;\B) = \sum_{\b \in \B} \qComp{\b}\b \in \mathbb{R}^I$.
Here, $\qComp{\b}$ represents the quantity of bundle $\b$ included in the quantity vector $\q$. Since bundlings form bases for $\mathbb{R}^I$, any feasible consumption bundle can be uniquely achieved by consuming only bundles in any bundling, though potentially in fractional and/or negative quantities.

We next define demand for bundles when it is bundles, rather than goods, that are priced.
In the case of an agent $j$ with quasilinear preferences,
we can define the \emph{bundled demand}, given a bundling $\B$, at a bundle price vector $\tp \in \mathbb{R}^\B$ by
\[\dB{\tp}{\B} = \argmax_{\q \in \mathbb{R}^\B \mid \bun(\q;\B) \in X^j} \left\{\val{\bun(\q;\B)} - \tp \cdot \q\right\}.\]
Here, bundled demand simply expresses demand in terms of bundles from the given bundling. For example, if an agent's demand at certain prices were $(1,0)$ and the bundling were $\{(1,1),(0,1)\}$, then bundled demand at the corresponding bundle prices would be given by $(q_{(1,1)},q_{(0,1)})=(1,-1)$.
More generally, allowing for income effects, given an agent $j$, a utility level $u$, and a bundling $\B$,
we define \emph{bundled demand}
at bundle price vector $\tp \in \mathbb{R}^\B$~by
\[\dHB{\tp}{u}{\B} = \argmin_{\q \in \mathbb{R}^\B \mid \bun(\q;\B) \in X^j} \left\{\min_{x_0 | \util{x_0,\bun(\q;\B)} \ge u} \{x_0 + \tp \cdot \q\}\right\}.\]

To define substitutability and complementarity between bundles,
analogously to case of goods, we look at the effect of bundled demand by changing the price of one bundle holding the prices of other bundles fixed.
We also consider whether a pair of bundles is consistent in their substitutability or complementarity.

\begin{defn}
Given a bundling $\B$ and bundles $\b,\cvec \in \B$:
\begin{itemize}
\item Bundle $\b$ is \emph{substitutable} (resp.~\emph{complementary}) to bundle $\cvec$
if for all agents $j$,
utility levels $u$,
and price vectors $\tp$ with $\dHB{\tp}{u}{\B} = \{\q\}$,
and all new prices $\tpprComp{\b} < \tpComp{\b}$,
there exists $\qpr \in \dHB{\tpnewb}{u}{\B}$ such that $\qprComp{\b} \le \qComp{\b}$
(resp.~$\qprComp{\b} \ge \qComp{\b}$).%
\footnote{%
\cite{weinstein2022direct} observed that in standard consumer theory with divisible goods, the definitions of substitutes and complements are sensitive to the precise definition (or basis) for goods. %
As our examples below show, whether agents view goods or bundles as substitutes or complements in the indivisible good setting also depends on the bundling.
}
\item Bundle $\b$ is \emph{consistent} with bundle $\cvec$ if $\cvec$ is substitutable to, or complementary, to $\b$.
\end{itemize}
\end{defn}

With two indivisible goods, if each agent can demand at most one unit of each good, consistency is sufficient and essentially necessary for equilibrium existence \citep{henry1970indivisibilites,KeCr:82}.
But with more than two goods,
the situation is more complicated.

\subsection{Leading examples revisited}\label{sec:leading}

We now revisit the example from the introduction that shows that even if all agents view goods as complements, certain bundlings can reveal inconsistencies between substitutability and complementarity
which obstruct equilibrium existence.

\newcommand{\apple}{\mathsf{apple}}
\newcommand{\banana}{\mathsf{banana}}
\newcommand{\coconut}{\mathsf{coconut}}
\newcommand{\date}{\mathsf{date}}
\newcommand{\eld}{\mathsf{elderberry}}
\newcommand{\ga}{\mathsf{a}}

\newcommand{\gb}{\mathsf{b}}

\newcommand{\gc}{\mathsf{c}}

\newcommand{\gd}{\mathsf{d}}

\newcommand{\gel}{\mathsf{e}}

\begin{eg}[continues=eg:threecycle]
Let the set of goods be $I = \{\apple,\banana,\coconut\}$, and let the set of agents be $J=\{1,2,3\}$. Suppose that agents' preferences are quasilinear and are given by the following valuations: 
$$
V^1(\bun)=3\min\{\bunComp{\ga},\bunComp{\gb}\},\quad
V^2(\bun)=3\min\{\bunComp{\gb},\bunComp{\gc}\},\quad
V^3(\bun)=3\min\{\bunComp{\ga},\bunComp{\gc}\}.
$$
Each pair of goods is complementary, so there is no inconsistency between goods.
Yet no competitive equilibrium exists if one unit of each good is available in the economy.%
\footnote{To see this, note that efficiency requires allocating one agent at least two goods that they desire.  Without loss of generality, suppose that agent 1 is allocated $\apple$ and $\banana$ (as well as, possibly, $\coconut$).  Then, we must have $p_{\coconut} = 0$ and $p_{\apple} + p_{\banana} \le 3$. But for agent 2 not to demand $\banana$ and for agent 3 not to demand $\apple$, we must have that $p_{\banana} \geq 3$ and $p_{\apple} \geq 3$, respectively, which is a contradiction.}

Now, let us analyze the comparative statics of bundled demand for all agents. 
Consider the bundle $(1,1,0)$ of $\apple$ and $\banana$ and a corresponding bundling $$\B=\{(1,0,0),(1,1,0),(0,0,1)\}.$$
In bundling $\B$, $\apple$ and $\banana$ are bundled, but $\apple$ and $\coconut$ can be traded separately .

Suppose we start at bundle prices 
$\tp=(1,2,3)$. %
The agents' bundled demands are:
$$\widetilde{D}^1(\tp;\B)=(0,1,0), \quad \widetilde{D}^2(\tp;\B)=(0,0,0), \quad 
\widetilde{D}^3(\tp;\B)=(0,0,0).$$
Indeed, agent 1 wants to consume $\apple$ and $\banana$, so agent 1 simply demands the bundle. Agents 2 and 3 do not wish to buy anything. 

Now suppose that the price of $\coconut$ decreases to 1, so the bundle prices change to $\tp'=(1,2,1)$. 
The agents' bundled demands are:
$$\widetilde{D}^1(\tp';\B)=(0,1,0), \quad \widetilde{D}^2(\tp';\B)=(-1,1,1), \quad 
\widetilde{D}^3(\tp';\B)=(1,0,1).$$
Agent 1's demand (and hence bundled demand) does not change. 
Agent 3 now wishes to buy $\apple$ and  $\coconut$.
But Agent 2 wishes to end up with $\banana$ and $\coconut$, so under the bundling $\B$, agent 2 buys the bundle of $\apple$ and $\banana$ and sells $\apple$ (in addition to buying $\coconut$ separately).
Thus, following the decrease in the price of $\coconut$, agent 2's demand for $\apple$ falls (since he wishes to sell $\apple$) while agent~3's demand for $\apple$ rises.

So, while $\apple$ and  $\coconut$ remain complements for agent 3, these goods have become substitutes for agent 2.
Hence, given the bundling, the bundle that consists of $\apple$ is not consistent with the bundle that consists of $\coconut$.\hfill$\blacksquare$
\end{eg}

In Example~\ref{eg:threecycle}, bundling two complements revealed an inconsistency between two other goods. However, it is not the presence of complements \textit{per se} that creates this bundle inconsistency, but rather the pattern of preferences over complementary goods that is key to a possible inconsistency, which in turn serves as an obstruction to equilibrium existence. The following example makes it clear that bundle consistency (and equilibrium existence) is perfectly compatible with agents' viewing all goods as complements.

\begin{eg}[continues=eg:consecutive]
To formalize this example, let us only replace agent 3's preferences in Example~\ref{eg:threecycle} with the following valuation:
$$
V^3(\bun)=3\min\{\bunComp{\ga},\bunComp{\gb},\bunComp{\gc}\}.
$$

Given the bundling $\B=\{(1,0,0),(1,1,0),(0,0,1)\}$ and bundle prices $\tp=(1,2,3)$ agents bundled demands are as before:
$$\widetilde{D}^1(\tp;\B)=(0,1,0), \quad \widetilde{D}^2(\tp;\B)=(0,0,0), \quad 
\widetilde{D}^3(\tp;\B)=(0,0,0).$$
But following the decrease in the price of $\coconut$ to 1 (making bundle prices $\tp'=(1,2,1)$), agents' bundled demands are:
$$\widetilde{D}^1(\tp';\B)=(0,1,0), \quad \widetilde{D}^2(\tp';\B)=(-1,1,1), \quad 
\widetilde{D}^3(\tp';\B)=(0,1,1).$$
Thus, agent 3 now demands the bundle of $\apple$ and $\banana$ as well as the $\coconut$, but not the bundle consisting of $\apple$ alone.

So, $\apple$ and $\coconut$ become substitutes for all agents.
In particular, the bundle that consists of $\apple$ is consistent with the bundle that consists of $\coconut$ under bundling $\B$.
Moreover, competitive equilibria exist for all endowments \citep{greenberg1986strong}.
\hfill$\blacksquare$
\end{eg}

While the patterns of complementarities between the two preceding examples are only subtly different, there are dramatic differences in the implications for equilibrium existence.
One might hope that in order to reveal all obstructions to equilibrium existence, it is sufficient to consider bundlings in which only complementary goods are bundled.
However, in Online Appendix~\ref{oapp:egs}, we provide an example showing that in order to reveal all obstructions to equilibrium existence, we must also include ``hidden'' complementarities between goods and opportunities to sell other goods in the complements that we bundle. %

The challenge of identifying bundlings relevant for obstructions to equilibrium existence can be entirely addressed by considering bundles that correspond to price effects.

\vspace{-6pt}

\section{Price effects and bundle consistency}
\label{sec:pricebundle}

The goal of this section is to use price effects to identify which bundlings are critical for equilibrium existence: they either contain bundles of (strictly) complementary goods or bundles of  (strict) ``hidden complements'', i.e., goods and opportunities to sell other goods.

Price effects will allow us to measure both types of complementarities: complementarities between goods and ``hidden'' complementarities between goods and sale opportunities.
The definition of (compensated) price effects is analogous to the setting with divisible goods.\footnote{When an agent's utility function is quasilinear, the agent does not experience income effects so the compensated price effect is just the overall price effect.}

\begin{defn}
A \emph{(compensated) price effect for agent $j$ and good $i$} is a nonzero vector $\Delta \bun$
for which there exist
a utility level $u$,
a price vector $\p$, and a new price $\pprComp{i} < \pComp{i}$ with $\dH{\p}{u} = \{\bun\}$ and %
$\bun + \Delta \bun \in \dH{\pnewi}{u}$.
\end{defn}

Consider the change in (compensated) demand $\bunpr - \bun = \Delta \bun$ following a fall in the price of good $i$. By the (compensated) law of demand, the $i\text{th}$ entry of the $\Delta \bun$ is non-negative. If then $k\text{th}$ entry of the $\Delta \bun$ is positive, the price effect reveals that goods $i$ and $k$ are complements. On the other hand, $k\text{th}$ entry of $\Delta \bun$ is negative, then the demand for good $k$ falls, in other words the agent would like to sell the good. In this case, price effect reveals a ``hidden complementarity'' between the good and the opportunity to sell the good. 

As we will show in Section~\ref{sec:priceEffs},
in the indivisible good case,
price effects are a more powerful reflection of the structure of complementarity and substitutability than in the divisible good case.

The following definition summarizes which bundles %
are critical to identifying obstructions to equilibrium existence. 

\begin{defn}%
A bundle $\b \in \{-1,0,1\}^I$ is \emph{relevant} if it is either a price effect $\Delta\bun$ for some agent $j$ or a bundle $\e{i}$ consisting only of good $i$. 
\end{defn}

For example, if all goods are complements (as in Examples~\ref{eg:threecycle} and~\ref{eg:consecutive}) that the relevant bundles only contain bundles of goods. 
If some goods are complements and others are~substitutes, then the relevant bundles contain bundles of goods and opportunities to sell other~goods.
    
We can now introduce the key condition for equilibrium existence with indivisible goods.

\begin{defn}
\label{def:bundleConsistent}
Preferences are \emph{bundle-consistent} if for each bundling $\B$ that consists only of relevant bundles,
each pair of bundles $\b,\cvec \in \B$ are consistent.
\end{defn}

Definition~\ref{def:bundleConsistent} says that any two bundles must be either substitutable or complementary to each other whenever we look at any bundling that consists only of goods and bundles that correspond to price effects (and contain at most one unit of each good or sale opportunity).

\section{When agents demand at most one unit of each good}
\label{sec:binary}

We start by considering the case in which each agent demands at most one unit of each~good.\footnote{This is what \cite{MiSt:09} call the ``binary valuations'' case. It is equivalent to an economy in which every unit of every good is treated as a distinct good.}
Our first result shows that bundle consistency is in fact sufficient for equilibrium existence. %

\begin{thm}
\label{thm:existBinary}
Suppose that each agent demands at most one unit of each good.
If preferences are bundle-consistent,
then competitive equilibria exist. %
\end{thm}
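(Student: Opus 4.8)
\emph{Strategy.} Because each agent demands at most one unit of each good, every feasible bundle lies in $\{0,1\}^I$, so each compensated price effect $\Delta\bun = \bunpr - \bun$ lies in $\{-1,0,1\}^I$. Hence the set $\D$ of relevant bundles---the price effects of the various agents together with the unit bundles $\e{i}$---is a finite subset of $\{-1,0,1\}^I$ that spans $\mathbb{R}^I$. My plan is to translate bundle consistency into a combinatorial condition on $\D$, namely \emph{unimodularity} (every basis contained in $\D$ has determinant $\pm 1$), and then to invoke the known existence theorem for unimodular demand types \citep{DaKoMu:01,baldwin2021equilibrium}: if all agents' price effects lie in a common unimodular set, then competitive equilibria exist. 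It therefore suffices to show that bundle consistency implies that $\D$ is unimodular.

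\emph{Translating consistency.} Fix a bundling $\B$ and let $B$ be the matrix whose columns are the bundles of $\B$; since $\B$ is a basis, bundled demand is ordinary demand re-expressed in these coordinates, with $\bun(\q;\B)=B\q$ and bundle prices $\tp = B^\top\p$. Lowering a single bundle price $\tpComp{\b}$ thus moves the good-price vector along the row of $B^{-1}$ indexed by $\b$, so by the compensated law of demand an agent's price effect $v\in\D$ is revealed by this change exactly when $(B^{-1}v)_{\b}>0$, and it then alters the quantity of bundle $\cvec$ by $(B^{-1}v)_{\cvec}$. Consequently $\b$ and $\cvec$ are consistent in $\B$ if and only if the numbers $(B^{-1}v)_{\cvec}$, ranging over all agents' price effects $v$ with $(B^{-1}v)_{\b}>0$, share a common weak sign. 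Bundle consistency is precisely the requirement that this single-signedness hold for every pair of bundles in every bundling $\B\subseteq\D$.

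\emph{From consistency to unimodularity.} I argue the contrapositive: if $\D$ is not unimodular, some bundling of relevant bundles contains an inconsistent pair. Non-unimodularity yields, via Cramer's rule, a basis $\B\subseteq\D$ and a vector $v\in\D$ for which $B^{-1}v$ has a coordinate outside $\{-1,0,1\}$; a basis-exchange argument then converts this failure of integrality into a bundling together with a bundle price whose decrease reveals two of the agents' price effects with \emph{opposite} signs in some coordinate $\cvec$---one agent for whom the cross-effect is positive and another for whom it is negative---exactly the inconsistency forbidden by the hypothesis. The three-good cycle of complements is the prototype: there $\D$ contains $(1,1,0),(0,1,1),(1,0,1)$ and the unit vectors, the basis $\{(1,1,0),(0,1,1),(1,0,1)\}$ has determinant $2$, and in the bundling $\{\e{\apple},(1,1,0),\e{\coconut}\}$ lowering the coconut price reveals agent $2$'s effect $(0,1,1)$ (apple-coordinate $-1$) and agent $3$'s effect $(1,0,1)$ (apple-coordinate $+1$), witnessing the inconsistency.

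\emph{Conclusion.} Bundle consistency thus forces $\D$ to be unimodular, and since by construction every agent's price effects lie in $\D$, the cited existence theorem for unimodular demand types delivers a competitive equilibrium. I expect the contrapositive step to be the main obstacle: one must locate, from an arbitrary failure of unimodularity, the specific bundling in which two agents' price effects disagree in sign. As the example shows, the offending bundling need \emph{not} be the basis that witnesses the large determinant, so the argument must combine Cramer's rule with a careful basis exchange rather than reading the inconsistency directly off the determinant.
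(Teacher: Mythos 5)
Your strategy is the same as the paper's: reduce bundle consistency to total unimodularity of the set of price effects (equivalently, unimodularity of the set of relevant bundles, since the unit vectors are included), and then invoke the combinatorial-geometric existence machinery of \citet*{DaKoMu:01} and \citet{BaKl:19}. The paper proves Theorem~\ref{thm:existBinary} as a special case of Theorem~\ref{thm:existGen} by exactly this route: the ``only if'' direction of Proposition~\ref{prop:totallyunimodgen} (via Lemma~\ref{lem:nonUnimodToInconsistent} and Claim~\ref{cl:soupedUpSchrijver}, an adaptation of \citealp[pages 270--271]{schrijver1998theory}) establishes the contrapositive you describe, and the ``if'' direction of Proposition~\ref{prop:dkmRelationshipConsistentToDc} then places all preferences in a single class of discrete convexity. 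You have also correctly diagnosed that the offending bundling need not be the basis witnessing the bad determinant; this is precisely why Claim~\ref{cl:soupedUpSchrijver} works with a \emph{minimal} non-totally-unimodular subset and runs Schrijver's row-operation argument rather than reading the inconsistency off the determinant.

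Two gaps remain. First, your ``translating consistency'' step asserts that a decrease in a single bundle price reveals an agent's price effect $v$ exactly when $(B^{-1}v)_{\b}>0$. As stated this is too strong: a price effect is defined by a change in a single \emph{good} price starting from a point of unique demand, and moving $\p$ along a row of $B^{-1}$ can cross several LIP facets, so the resulting change in demand need not be any single price effect of the original agent --- this is the same pitfall the paper flags in Footnote~\ref{fn:falseClaimBK19} when critiquing \citet{BaKl:19}. The paper instead routes the argument through demand types: Lemma~\ref{lem:equivDefs} identifies the comparative-statics notion of (in)consistency with a sign condition on facet normals, and \citet[Proposition 3.11]{BaKl:19} says facet normals transform by $G^{-1}$ under rebundling, so the inconsistency of the two bundles is read off from the signs of $(G^{-1}\dvec)$ and $(G^{-1}\dvec')$. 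Second, the existence theorems you cite require more than that all price effects lie in a common totally unimodular set: demand sets must also contain every integer point of their convex hulls ($\D$-convexity, or pseudoconcavity). In the binary case this extra property does follow from the price-effect condition, but that is itself a nontrivial theorem (\citealp*[Theorem 1]{baldwin2021consumer}, the binary quasilinear case of Proposition~\ref{prop:DquasiconcaveEquiv}), not part of the statement of the existence results you invoke; your proof needs to cite it or prove it.
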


Theorem~\ref{thm:existBinary} gives a simple, economically interpretable condition for equilibrium existence that is expressed in terms of comparative statics of demand.
As we will show, Theorem~\ref{thm:existBinary} encompasses many known existence results, such as ones for substitutes \citep{KeCr:82}, classes of complements \citep*{greenberg1986strong}, and substitutes and complements \citep{SuYa:06,SuYa:09}. %
It (together with its extension to settings which agents can demand multiple units of goods in Section~\ref{sec:mult}) also encompasses equilibrium existence results based on combinatorial and tropical geometric analyses \citep*{DaKoMu:01,danilov2013equilibria,BaKl:19}.

\subsection{Necessity of bundle consistency}
\label{sec:necessityBinary}
We now show that bundle consistency is necessary for equilibrium existence,
in a much stronger sense than previous necessity results in the literature.

In specific economies,
it is possible that some agents view a pair of good as complements, other agents view them as substitutes, and yet equilibrium exists \citep{BiMa:97,Ma:98}.
Thus, to formulate a necessity result,
we require that the preference domain for existence to be sufficiently rich in that it is ``invariant'' under a simple family of transformations.
We focus on the case of transferable utility economies; an analogous result for economies with income effects is in Online Appendix~\ref{oapp:neccIncomeEffs}.
\begin{defn}\label{def:invariance}
A domain $\mathcal{V}$ of valuations is \emph{invariant} if:
\begin{enumerate}[label=(\roman*)]
\item\label{def:invariancei} for all $V \in \mathcal{V}$ and all price vectors $\p \in \mathbb{R}_{\ge 0}^I$, writing $V'(\bun) = V(\bun) + \p \cdot \bun$, we have that $V' \in \mathcal{V}$; and
\item\label{def:invarianceii} $\mathcal{V}$ contains the valuation $V_0: \{0,1\}^I \to \mathbb{R}$ defined by $V_0(\bun) = 0$.
\end{enumerate}
\end{defn}

Part~\ref{def:invariancei} of the definition of invariance requires closure under the addition of positive linear functions. Subsequent work by \citet{nguyen2024equilibrium} offers two sufficient conditions for Part~\ref{def:invariancei}. First, any domain defined by conditions on the demand correspondence at all prices satisfies Part~\ref{def:invariancei}. Second, any domain defined by conditions on how the addition of a good can affect the marginal value of another good satisfies Part~\ref{def:invariancei}.
Examples of domains that satisfy Part~\ref{def:invariancei} thus include many conditions for existence in the literature: substitutes \citep{KeCr:82}, consecutive games \citep{greenberg1986strong}, substitutes and complements \citep{SuYa:06}, and sign-consistent tree valuations \citep{CaOzPa:15}, as well as classes of discrete convexity \citep*{DaKoMu:01} and (totally) unimodular demand types \citep{BaKl:19}. %
On the other hand, domains based on conditions on specific valuation profiles, as proposed by \cite{BiMa:97} and \cite{Ma:98}, do not satisfy Part~\ref{def:invariancei} of invariance.

Furthermore, given any bundle-consistent domain of valuations,
if the domain does not satisfy Part~\ref{def:invariancei} of invariance,
then there is a larger, bundle-consistent domain that satisfies Part~\ref{def:invariancei} of invariance.
The larger domain can be constructed by simply adding nonnegative linear functions to valuations in the original domain.

Part~\ref{def:invarianceii} of the definition of invariance requires that the domain include a trivial valuation at which the agent values all bundles at zero.
Part~\ref{def:invarianceii} gives meaning to ``goods'' in our model:
if it were not satisfied, then agents could never view individual goods independently,
and it would not be clear why the primitive goods were defined correctly (or why each ``good'' should be priced separately).
We therefore view Part~\ref{def:invarianceii} of invariance as economically innocuous.%
\footnote{Note also that Part~\ref{def:invarianceii} of invariance is automatically satisfied if the economy can contain a seller who seeks to efficiently assign goods to agents.}

The following result states that for invariant domains of valuations,
bundle consistency is necessary for equilibrium existence.
This provides a partial converse to Theorem~\ref{thm:existBinary}.

\begin{thm}
\label{thm:neccBinary}
If competitive equilibria exist in all economies in which agents have valuations in an invariant domain $\mathcal{V}$, %
then the valuations in $\mathcal{V}$ are bundle consistent.
\end{thm}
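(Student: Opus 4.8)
Suppose $\mathcal{V}$ is invariant but not bundle-consistent; I want to construct an economy with valuations in $\mathcal{V}$ that has no competitive equilibrium. By the failure of bundle consistency, there is a bundling $\B$ consisting only of relevant bundles, and two bundles $\b,\cvec\in\B$ that are inconsistent: that is, there is one agent with valuation $V\in\mathcal{V}$ for whom $\b$ is strictly substitutable to $\cvec$ at some configuration, and another agent with $V'\in\mathcal{V}$ for whom $\b$ is strictly complementary to $\cvec$. The essential obstruction is exactly the classical two-good inconsistency from \cite{henry1970indivisibilites,KeCr:82}, but now living in the \emph{bundle} coordinates indexed by $\B$ rather than in the original good coordinates.

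\textbf{The first key step is to linearize the two inconsistent coordinates.} Since $\B$ is a basis for $\mathbb{R}^I$, I can think of bundled demand $\dHBFn$ as demand in a change-of-basis coordinate system. The inconsistency between $\b$ and $\cvec$ is detected by bundled price effects, which in turn come from genuine price effects $\Delta\bun$ of the two agents (the bundles are \emph{relevant}, hence price effects or singletons). The plan is to use the invariance of $\mathcal{V}$ to ``freeze'' all coordinates other than $\b$ and $\cvec$: by Part~\ref{def:invariancei}, I can add steep nonnegative linear functionals to each agent's valuation to pin down the quantities in the remaining $|I|-2$ bundle directions, effectively reducing to a two-good (two-bundle) subeconomy in the coordinates $\b,\cvec$. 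This reduction is where invariance does the real work—it lets me suppress all the irrelevant substitution margins so that only the contested pair of bundles can adjust.

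\textbf{The second key step is to assemble the Henry/Kelso--Crawford counterexample in bundle coordinates.} Once reduced to two active bundle-directions where one agent treats them as strict substitutes and another as strict complements, I replicate the Lex/Vincent/Claude construction: a ``complementer'' who values the joint bundle $\b+\cvec$ but neither alone, a ``substituter'' who will buy either $\b$ or $\cvec$ but not both, and a ``seller'' (supplied via Part~\ref{def:invarianceii}, the zero valuation $V_0$, possibly shifted by a linear functional through Part~\ref{def:invariancei}) holding one unit in each contested direction. The standard price contradiction—market clearing forces incompatible upper and lower bounds on the bundle prices $\tpComp{\b},\tpComp{\cvec}$—then rules out equilibrium. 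I must check that the constructed valuations genuinely lie in $\mathcal{V}$; this is exactly what invariance guarantees, since each is $V_0$ or $V$ or $V'$ modified only by a nonnegative linear function.

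\textbf{The main obstacle} I anticipate is the bookkeeping of translating back and forth between the bundle basis $\B$ and the original goods: bundles may contain negative entries (sale opportunities), so the ``hidden complementarity'' case requires care to ensure that the constructed valuations are defined on the correct domain $\{0,1,\ldots,M\}^I$ and that the feasibility constraint $\bun(\q;\B)\in X^j$ does not secretly create or destroy the inconsistency. Concretely, the strict inconsistency is witnessed only at a single price vector and a single utility level, so I must verify that the freezing linear functionals can be chosen to keep the relevant demand response \emph{strict} and \emph{local}—i.e., that perturbing the other coordinates does not wash out the sign of the contested bundled price effect. Making this robustness precise, rather than the combinatorial construction itself, is the delicate part; the counterexample's skeleton is then routine.
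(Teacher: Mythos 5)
Your overall strategy---prove the contrapositive, build a Kelso--Crawford-style counterexample economy in the bundle coordinates, and use invariance to certify that the auxiliary valuations lie in $\mathcal{V}$---is the same as the paper's. But both of your ``key steps'' have genuine gaps. First, the freezing step cannot be carried out with the tools invariance gives you. Part~(i) of invariance only lets you add $\p\cdot\bun$ with $\p\in\mathbb{R}^I_{\ge 0}$ to a valuation, and this merely \emph{translates} the demand correspondence in price space ($D_{V+\p\cdot}(\cdot)=D_V(\cdot-\p)$); no choice of linear perturbation, however steep, restricts an agent's demand response to a two-dimensional sublattice of bundle directions. The witnesses of the inconsistency are price effects $\dvec^1,\dvec^2$ that generically have nonzero components along many of the other bundles, and you cannot suppress those components by modifying $V$ and $V'$. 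The paper instead leaves $V^1,V^2$ alone (up to a price shift aligning the two critical price vectors) and absorbs the stray components by adding $|I|-1$ \emph{additional agents} with linear valuations (obtained from $V_0$ via invariance, which is exactly what Part~(ii) is for), one for each remaining bundle direction $\b^3,\ldots,\b^{|I|}$ plus one ``seller,'' with endowments chosen so that a specific convex combination of all agents' excess demands vanishes.

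Second, the clinching contradiction is not a chain of incompatible bounds on bundle prices. That style of argument works in the original Lex/Vincent/Claude example because the valuations are explicit; here $V$ and $V'$ are arbitrary members of $\mathcal{V}$ and you cannot enumerate their price constraints. The paper's argument is an integrality one: it shows $\zero$ lies in the \emph{convex hull} of aggregate excess demand at $\p$---via the identity $\tfrac{1}{2}\dvec^1+\tfrac{1}{2}\dvec^2+\sum_{j\ge 3}\tfrac{-[(G^{-1}\dvec^1)_j+(G^{-1}\dvec^2)_j]}{2}\,\b^j=\b^1$---but \emph{not} in aggregate excess demand itself, because $\dvec^1,\dvec^2,\b^3,\ldots,\b^{|I|}$ are linearly independent and the weights $\tfrac12$ are not integers; Fact~\ref{fac:pseudoequilPrices} then rules out equilibrium everywhere, not just at $\p$. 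Making this work also requires two ingredients you do not mention: total unimodularity of the bundling $\B$ (from Lemma~\ref{lem:nonUnimodToInconsistent}), so that $G$ is invertible over the integers and $G^{-1}\dvec^1,G^{-1}\dvec^2$ have entries in $\{-1,0,1\}$, and the demand-type translation (Lemma~\ref{lem:equivDefs} and Proposition~\ref{prop:demTypeEquiv}) that converts the behavioral inconsistency of $\b^1,\b^2$ into sign conditions on $(G^{-1}\dvec^1)_1(G^{-1}\dvec^1)_2$ and $(G^{-1}\dvec^2)_1(G^{-1}\dvec^2)_2$. Your instinct that the delicate part is robustness of the strict price effect is therefore slightly misplaced: the delicate part is the linear-algebraic bookkeeping that lets $\zero$ sit in the convexified excess demand at half-integer weights while being excluded from the integer excess demand set.
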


Theorem~\ref{thm:neccBinary} implies that bundle consistency encompasses most previous domains for equilibrium existence.%
\footnote{Theorem~\ref{thm:neccBinary} also provides a formal sense in which inconsistencies between \emph{goods} obstructs equilibrium existence beyond the case of substitutes (see Footnote~\ref{fn:inconsistentGoods}).}
Indeed, bundle consistency subsumes substitutes \citep{KeCr:82}, consecutive games \citep{greenberg1986strong}, and substitutes and complements \citep{SuYa:06}, as well as classes of discrete convexity \citep*{DaKoMu:01} and totally unimodular demand types \citep{BaKl:19}.
Indeed,
these domains all place price-independent restrictions on demand,
and guarantee existence in all exchange economies with preferences in the domain.%
\footnote{%
Note, however, bundle consistency does not subsume equilibrium existence results that rely on special conditions on the joint structure of preferences or endowments in the economy.
For example,
it does not subsume conditions for equilibrium existence developed by \cite{BiMa:97} and \cite{Ma:98} that are based on the entire profile of valuations
(Part~\ref{def:invariancei} of invariance fails for these domains).
Moreover,
it does not subsume \citeposs{CaOzPa:15} existence result for the domain of ``sign-consistent tree valuations'' because that result requires conditions on the total endowment of indivisible goods (see Section 3.3 of their Supplemental Material), and hence does not apply in all exchange economies.
}

\subsection{Interpretation of price effects}
\label{sec:priceEffs}

Why are price effects so powerful in identifying all the complementarities and ``hidden'' complementarities?
While the price effect clearly reveals \emph{pairwise} complementarities between good $i$ and other goods, and ``hidden'' complementarities good $i$ and opportunities to sell other good, it is not obvious whether or not it captures complementarities between \emph{sets} of goods (or both forms of complementarities between goods other than $i$). 
However, as the following proposition shows, in the setting with indivisible goods, price effects contain far more information about the relationships between goods when we observe a simultaneous change in demand of multiple goods.

Formally, we say that
good $i$ is a \emph{strict complement to good $k$ at some prices},
if there exist an agent $j$,
a utility level $u$,
a price vector $\p$
with $\dH{\p}{u} = \{\bun\}$,
a new price $\pprComp{i} < \pComp{i}$,
and a new demand $\bunpr \in \dH{\pnewi}{u}$
such that that $\bunprComp{k} > \bunComp{k}$.\footnote{At the beginning of Section~3, we gave a definitions of complements (and substitutes) \emph{for all price vectors}.}
To define strict ``hidden'' complementarities, note that implicit demand for an opportunity to sell good $k$ is $M-\bunComp{k}$.
We say that there is a \emph{strict hidden complementarity involving good $i$ and good $k$ at some prices} %
if there exist an agent $j$,
a utility level $u$,
a price vector $\p$ with $\dH{\p}{u} = \{\bun\}$,
a new price $\pprComp{i} < \pComp{i}$, and a new demand $\bunpr \in \dH{\pnewi}{u}$
such that $M - \bunprComp{k}> M- \bunComp{k}$.\footnote{Of course, this is just a definition of strict substitutes at some prices, but conceptually our analysis relies on bundles involves opportunities to sell as Example~\ref{eg:hiddencomp}.}

\begin{prop}
\label{prop:transitivityCompBinary}
Suppose that agent $j$ demands at most one unit of each good.  Let
$\Delta \bun$ is a (compensated) price effect for agent $j$ and good $i$,
and let $k,\ell$ be distinct goods.
\begin{enumerate}[label=(\alph*)]
\item If $(\Delta \bun)_k,(\Delta \bun)_\ell > 0$ or $(\Delta \bun)_k,(\Delta \bun)_\ell < 0$,
then $k$ and $\ell$ are strict complements at some prices.
\item If $(\Delta \bun)_k > 0 > (\Delta \bun)_\ell$,
then there is a strict hidden complementarity involving $k$ and $\ell$ at some prices.
\end{enumerate}
\end{prop}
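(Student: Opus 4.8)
The plan is to recast Hicksian demand as the lower envelope of a finite family of affine functions and to show that the price effect $\Delta\bun$ necessarily arises from crossing a genuine \emph{edge} of this envelope, along which the signs of the $k$- and $\ell$-movements can be read off directly. For each $\bun\in X^j$ let $e(\bun)$ be the least amount of money with $\util{e(\bun),\bun}=u$ (well defined since $\utilFn$ is continuous, strictly increasing in money, and unbounded in both directions). Then $\dH{\p}{u}=\argmin_{\bun\in X^j}\{e(\bun)+\p\cdot\bun\}$, so demand consists of the minimizers of the finitely many affine functions $f_\bun(\p)=e(\bun)+\p\cdot\bun$, whose gradient is $\bun$. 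Writing $\bun'=\bun+\Delta\bun$, binariness gives every coordinate of $\Delta\bun$ in $\{-1,0,1\}$; and since $f_{\bun'}$ must fall below $f_\bun$ as $\pComp{i}$ decreases, its gradient in the $i$-direction must be larger, forcing $\bun'_i>\bunComp{i}$, i.e.\ $\bunComp{i}=0$ and $\bun'_i=1$.

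First I would run a one-dimensional descent in $\pComp{i}$. Writing $\p(t)=\p-t\e{i}$ and $h(t)=\min_{\bun}f_\bun(\p(t))$, the function $h$ is concave and piecewise linear with slopes equal to the negatives of the $i$-coordinates of the optimal bundles, hence in $\{0,-1\}$. Because all $i$-coordinate-$0$ bundles have $f$ constant in $t$, the unique minimizer among them never changes and equals $\bun$; thus demand stays $\{\bun\}$ until the first parameter $t^\dagger$ at which some bundle with $i$-coordinate $1$ becomes optimal. At $\p^\ast:=\p(t^\dagger)$ the set $D^\ast:=\dH{\p^\ast}{u}$ therefore contains $\bun$ as its \emph{unique} optimal bundle with $i$-coordinate $0$, together with a nonempty set $E$ of optima all having $i$-coordinate $1$; matching the slope-$(-1)$ supporting line of $f_{\bun'}$ (which is optimal at $t=s$) back to $t^\dagger$ shows $\bun'\in E$.

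The crux is to upgrade ``both optimal'' to ``adjacent.'' Because $\bun$ is the only vertex of $\Conv(D^\ast)$ off the hyperplane $\{x_i=1\}$, the polytope $\Conv(D^\ast)$ is a pyramid with apex $\bun$ over the base $\Conv(E)\subseteq\{x_i=1\}$, and in a pyramid every base vertex is joined to the apex by an edge. Since $\bun'$ is a base vertex (being a $0/1$ point), the segment $[\bun,\bun']$ is an edge of $\Conv(D^\ast)$ with direction $\Delta\bun$. This is the step that makes price effects so informative: no matter how many coordinates $\Delta\bun$ moves, uniqueness of the pre-change demand forces the jump to lie along a single edge. An edge furnishes a vector $\mathbf{w}$ with $\mathbf{w}\cdot\bun=\mathbf{w}\cdot\bun'<\mathbf{w}\cdot\bun''$ for every other $\bun''\in D^\ast$; setting $\mathbf{q}^\ast=\p^\ast+\theta\mathbf{w}$ for small $\theta>0$, and using finiteness of $X^j$ to keep all bundles outside $D^\ast$ strictly suboptimal, yields a price with $\dH{\mathbf{q}^\ast}{u}=\{\bun,\bun'\}$ exactly. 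I expect this edge step to be the main obstacle; the descent above and the read-off below are routine envelope and perturbation arguments.

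Finally I would read off the comparative static by nudging $\pComp{k}$ at $\mathbf{q}^\ast$, which is where the two parts diverge. Since $(\Delta\bun)_k\neq 0$, the endpoints $\bun,\bun'$ differ in coordinate $k$, so for small $\eta>0$ the demand at $\mathbf{q}^\ast+\eta\e{k}$ is the singleton consisting of the lower-$k$ endpoint and the demand at $\mathbf{q}^\ast-\eta\e{k}$ is the singleton consisting of the higher-$k$ endpoint; thus lowering $\pComp{k}$ moves the unique demand to the higher-$k$ endpoint. When $(\Delta\bun)_k,(\Delta\bun)_\ell$ have the same sign, that endpoint has the strictly larger $\ell$-coordinate, so lowering $\pComp{k}$ strictly raises demand for $\ell$, witnessing that $k$ and $\ell$ are strict complements (part (a)); when they have opposite signs, it has the strictly smaller $\ell$-coordinate, so demand for the opportunity to sell $\ell$ (namely $M-\bunComp{\ell}$) strictly rises, witnessing a strict hidden complementarity involving $k$ and $\ell$ (part (b)).
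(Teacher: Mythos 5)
Your proof is correct, and it takes a genuinely different route from the paper's. The paper derives Proposition~\ref{prop:transitivityCompBinary} as a special case of Proposition~\ref{prop:transitivityCompGen}, whose proof runs through the demand-type machinery: Lemma~\ref{lem:compPriceEffConvexComb} expresses the price effect as a convex combination of demand-type vectors in $\D_i$ (relying on results of \citealp*{baldwin2021consumer}), integrality then forces a sign pattern on every vector in the support, and the conclusion follows from the contrapositive of the \ref{state:unique}$\implies$\ref{state:demType} implication of Lemma~\ref{lem:equivDefs}. You instead give a self-contained envelope argument: reducing $\pComp{i}$ until the first degeneracy, observing that uniqueness of the pre-change demand makes $\bun$ the apex of a pyramid over the $\{x_i=1\}$ optima so that $[\bun,\bun+\Delta\bun]$ is an edge of $\Conv\dH{\p^\ast}{u}$, perturbing along the edge normal to isolate the two-point demand set, and then nudging $\pComp{k}$ to read off the comparative static directly. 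Your pyramid step is in effect an elementary proof of the binary case of \citet*[Theorem 1]{baldwin2021consumer} (which the paper's Footnote~\ref{fn:binaryTransitivityComp} notes would also suffice), and your final perturbation replaces the appeal to Lemma~\ref{lem:equivDefs} with an explicit witness. What the paper's heavier route buys is the extension to multiunit demand under unit consistency (Proposition~\ref{prop:transitivityCompGen}): there the pyramid argument breaks down, because the demanded bundles with minimal $i$-coordinate need not be a singleton and a price effect need not be a single edge direction---precisely the phenomenon in the paper's example where demand jumps from $(0,0,0)$ to $(1,2,1)$. Your argument is cleaner and more transparent in the binary case but is genuinely confined to it.
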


Proposition~\ref{prop:transitivityCompBinary} says that following a price decrease there is a change in demand of two goods in the same (resp. opposite) direction then must be strict (resp.~strict hidden) complementarity between the goods at some prices.
For example, if the price of $\apple$ falls and the demand for $\banana$ and $\coconut$ both rise, then $\banana$ and $\coconut$ must be strict complements at these prices. In classic consumer theory, one would not be able to make such an inference without knowing the effect of changing the price of $\banana$ on demand for $\coconut$.
As a result, Proposition~\ref{prop:transitivityCompBinary} provides a simple interpretation of which bundles are relevant: we bundle goods whose price effects have the same sign (because they must be strict complements) and we bundle goods with opportunities to sell other goods whose prices effects have the opposite sign (because of the hidden complementarity involving the goods).

\section{When agents demand multiple units of any good}
\label{sec:mult}

We now turn to the case in which agents might demand multiple units of any good.\footnote{This is the what \citet{MiSt:09} call the ``multiunit valuations'' case.}
The definitions of price effects, relevant bundles, and bundle consistency carry over verbatim from the Section~\ref{sec:pricebundle}. %
By definition, relevant bundles include at most one unit of each good.

However, stronger conditions are needed for equilibrium existence in the multiunit setting than in the case in which agents demand at most one unit of each good.
The issue is that we need to rule out the possibility of increasing returns scale in units (e.g., minimum viable quantity of units),
as increasing returns can obstruct equilibrium existence for reasons familiar from classic general equilibrium theory.
We will therefore impose a condition called ``unit consistency’’ which will ensure that units of the same good are substitutes for~each~other.

\newcommand\outilFn{\overline{U}^j}
\newcommand\outil[1]{\outilFn(#1)}
\newcommand\odHFn{\odQLFn_{\mathrm{H}}}
\newcommand\odH[2]{\odHFn(#1;#2)}
\newcommand\odQLFn{\overline{D}^j}
\newcommand\odQL[1]{\odQLFn(#1)}

To define unit consistency formally,
we regard each unit of each good as a separate item.
Formally, an \emph{item} consists of a good $i$ and a serial number $1 \le m \le M$.
Hence,
the set of items is $\overline{I} = I \times \{1,2,\ldots,M\}$.
Given a bundle $\overline{\bun} \in \{0,1\}^{\overline{I}}$ of items,
there is a corresponding bundle $\bun = \pi(\overline{\bun})$ of goods defined by
\[\pi(\overline{\bun})_i = \sum_{m = 1}^M \bun_{(i,m)}.\]
Given any utility function $\utilFn$ for goods and money, there is a corresponding utility function $\outilFn: \mathbb{R} \times \pi^{-1}(X^j) \to \mathbb{R}$
for items and money
defined by
\[\outil{x_0,\overline{\bun}} = \util{x_0,\pi(\overline{\bun})}.\]
We can now formally state the definition of unit consistency.
\begin{defn}
Utility function $\utilFn$ is \emph{unit-consistent} if for the utility function $\outilFn$,
for all goods $i$ and serial numbers $1 \le m < m' \le M,$
the items $(i,m)$ and $(i,m')$ are substitutes.%
\footnote{As Lemma~\ref{lem:unitConsistentConsecutiveInteger} in Appendix~\ref{app:proofs} shows,
unit consistency implies a property introduced by \cite{MiSt:09} that at each price vector,
the set of demanded quantities of each good consists of consecutive integers.
In general, unit consistency is a stronger condition than this ``consecutive integer property.''}
\end{defn}

Imposing unit consistency turns out to be sufficient to deal with the additional difficulties of the multiunit demand setting.

\begin{thm}
\label{thm:existGen}
If preferences are unit- and bundle-consistent,
then competitive equilibria exist. %
\end{thm}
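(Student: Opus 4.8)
\textbf{Proof plan for Theorem~\ref{thm:existGen}.}

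The plan is to reduce the multiunit case to the binary (at-most-one-unit) case already handled by Theorem~\ref{thm:existBinary}, using the item construction developed just before the statement. First I would pass from goods to items via the map $\pi$: given a profile of unit- and bundle-consistent utility functions $\utilFn$ over goods, form the corresponding utility functions $\outilFn$ over items, in which every unit of every good is a distinct item, so that each agent demands at most one unit of each item. The central claim to establish is that the item-level economy again has \emph{bundle-consistent} preferences; once that is shown, Theorem~\ref{thm:existBinary} delivers a competitive equilibrium $(\overline{\p},(\overline{\bun}^j))$ of the item economy, and I would then project it back to goods. The projection step is routine: setting $\pComp{i}$ equal to the common item price of good $i$ (and consuming $\bunj = \pi(\overline{\bun}^j)$) yields a competitive equilibrium of the original economy, provided that in the item equilibrium all units of a single good carry the same price and that an agent's item demand respects the symmetry among units. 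This is exactly where unit consistency earns its keep.

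The heart of the argument, and the step I expect to be the main obstacle, is verifying that the item-level preferences are bundle-consistent. There are two things to check. \emph{(i)} Unit consistency must guarantee that in any equilibrium one can take item prices to be symmetric across units of the same good, so that the item demand of each agent is, up to permuting serial numbers, a genuine demand over goods; the footnoted consequence that the demanded quantities of each good form a set of consecutive integers (Lemma~\ref{lem:unitConsistentConsecutiveInteger}) is what lets me select a single price per good and a consistent allocation, ruling out the increasing-returns pathology flagged in the text. \emph{(ii)} I must show bundle consistency at the good level lifts to bundle consistency at the item level. Here the definitions of price effect, relevant bundle, and bundle consistency ``carry over verbatim,'' and relevant bundles include at most one unit of each good; the task is to relate price effects for the item utility $\outilFn$ to price effects for $\utilFn$, using that $\outil{x_0,\overline{\bun}}$ depends on $\overline{\bun}$ only through $\pi(\overline{\bun})$, together with unit consistency to control how demand splits across units of a given good. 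Concretely I would argue that any inconsistency between a pair of relevant item-bundles would, after applying $\pi$ and invoking unit consistency to discard intra-good complementarities, produce either a good-level bundle inconsistency (contradicting the hypothesis) or a complementarity between two units of one good (contradicting unit consistency).

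In carrying this out I would proceed in the order: (1) set up the item economy and record that each agent demands at most one unit of each item, so Theorem~\ref{thm:existBinary} is applicable once its hypothesis is verified; (2) establish the symmetrization/consecutive-integer consequences of unit consistency that make the back-projection well defined; (3) prove the lifting of bundle consistency from goods to items, treating separately the case where a relevant item-bundle mixes different units of the same good (handled by unit consistency) and the case where it respects the good structure (handled by good-level bundle consistency); (4) apply Theorem~\ref{thm:existBinary} to obtain an item equilibrium; and (5) project back to a good-level equilibrium, checking market clearing and individual optimality. I anticipate that step (3) is the delicate one, because price effects at the item level can, in principle, separate units that are mechanically substitutes, and the argument must show that unit consistency precisely forecloses the offending configurations while preserving all the genuine (inter-good and hidden) complementarities that bundle consistency is designed to accommodate.
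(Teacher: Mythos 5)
Your route---pass to the item economy via $\pi$, invoke Theorem~\ref{thm:existBinary}, and project back---is genuinely different from the paper's, which never reduces to the binary case: the paper instead shows (via the ``only if'' direction of Proposition~\ref{prop:totallyunimodgen} and the ``if'' direction of Proposition~\ref{prop:DquasiconcaveEquiv}, assembled into Proposition~\ref{prop:dkmRelationshipConsistentToDc}) that unit- and bundle-consistent preferences are $\D$-quasiconcave for the totally unimodular set $\D$ of price effects in $\{-1,0,1\}^I$, and then cites the existence theorems of \citet*{DaKoMu:01}. Note a structural problem first: in the paper, Theorem~\ref{thm:existBinary} is itself proved as a special case of Theorem~\ref{thm:existGen}, so your reduction is circular as stated. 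You would need an independent proof of the binary case (one is available via \citet*[Theorem 1]{baldwin2021consumer} plus the discrete-convexity machinery, as the paper's footnotes indicate), after which the circularity is repaired but the reduction buys less than it appears to.

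The substantive gap is your step (3). Good-level bundle consistency only constrains price effects lying in $\{-1,0,1\}^I$---that is the definition of a relevant bundle---whereas item-level price effects can project under $\pi$ to good-level demand changes of magnitude two or more even when preferences are unit-consistent. The valuation $V(\bun)=\max_{0 \le y \le \bunComp{\gb}}\{3\min\{\bunComp{\ga},y,1\}+3\min\{\bunComp{\gb}-y,\bunComp{\gc},1\}\}$ discussed after Proposition~\ref{prop:transitivityCompGen} is exactly such a case: lowering the price of $\banana$ moves demand from $(0,0,0)$ to $(1,2,1)$. Your dichotomy---any item-level inconsistency yields either a good-level bundle inconsistency or an intra-good complementarity---does not cover these configurations: the offending item-level price effect involves two units of one good \emph{and} several other goods simultaneously, it is not forbidden by unit consistency (the two $\banana$ units remain substitutes as items), and its projection is not a relevant bundle, so good-level bundle consistency is silent about it. Showing that such price effects create no item-level inconsistency requires decomposing them into $\{-1,0,1\}^I$ price effects and proving that the resulting inflated system of vectors on $\overline{I}$ remains totally unimodular; that is precisely the content of Lemma~\ref{lem:compPriceEffConvexComb} and the inductive argument in the ``if'' direction of Proposition~\ref{prop:DquasiconcaveEquiv}, i.e., essentially the entire technical burden of the paper's proof, which your sketch leaves untouched. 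A secondary gap: Theorem~\ref{thm:existBinary} delivers \emph{some} item equilibrium, with no guarantee that units of a good carry equal prices; in the transferable-utility case one can symmetrize because equilibrium price vectors form a convex set (cf.~Fact~\ref{fac:pseudoequilPrices}), but with income effects this back-projection needs a separate argument that you do not supply.
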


Theorem~\ref{thm:existGen} generalizes Theorem~\ref{thm:existBinary} for settings in which agents can demand multiple units of any good.
This extension encompasses an existence result for substitutes due to \citet{MiSt:09}, as well as
existence results of \citet*{DaKoMu:01} and \citet{BaKl:19} (see Appendix~\ref{app:connectionsToGeo}). 

To prove Theorem~\ref{thm:existGen} (of which Theorem~\ref{thm:existBinary} is a special case),
we prove that each domain of preferences that is unit- and bundle-consistent
lies within a ``class of discrete convexity'' in the sense of \citet*{DaKoMu:01} (see Proposition~\ref{prop:dkmRelationshipConsistentToDc} in Appendix~\ref{app:connectionsToGeo} for a precise statement).
The proof of this mathematical connection is fairly involved and proceeds in three steps.
A first step is to show that under unit consistency,
price effects carry mathematical information about the tropical geometric representation of preferences studied by \cite{BaKl:19}.%
\footnote{In the case in which agents demand at most one unit of each good,
this connection was developed by \citet*[Theorem 1]{baldwin2021consumer}.
While \citet*[Theorems 2 and 3]{baldwin2021consumer} offered versions of their Theorem 1 for the multiunit demand case,
these versions involve combinations of price effects rather than price effects themselves,
and are therefore not useful for the current paper's economic approach to equilibrium existence.
}
A second step is a technical result inspired by \cite{schrijver1998theory},
which relates a form of bundle consistency to the total unimodularity property that lies at the heart of discrete convexity.
This second step also turns out to provide a computationally simple way to test for bundle consistency,
which we develop in Section~\ref{sec:testTotalUnimod}.
A third step is to show that unit and bundle consistency imply that demand sets cannot miss integer points in their convex hulls;
here, we rely on a combinatorial geometric result underlying \citeposs{danilov2004discrete} theory of discrete convexity.%
\footnote{In the case in which agents demand at most one unit of each good,
this third step is unneeded.}
We can then conclude equilibrium existence using the equilibrium existence results of \citet*{DaKoMu:01}.
The detailed arguments are in Appendix~\ref{app:proofs}.
Note that while the proof uses combinatorial geometric methods,
these methods are not needed to understand the statement of, or the economic content of, our results.

Our final main result shows that unit and bundle consistency are essentially necessary to guarantee equilibrium existence under multiunit demand, thereby generalizing Theorem~\ref{thm:neccBinary} to the multiunit setting and showing that unit consistency is also essentially necessary for equilibrium existence.

\begin{thm}
\label{thm:neccGen}
If competitive equilibria exist in all economies in which agents have valuations in an invariant domain $\mathcal{V}$,
then the valuations in $\mathcal{V}$ are unit- and bundle-consistent.
\end{thm}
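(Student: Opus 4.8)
The plan is to prove the contrapositive. Assuming $\mathcal{V}$ is invariant, I would show that if some valuation $V \in \mathcal{V}$ fails to be either unit-consistent or bundle-consistent, then there is an economy, all of whose agents have valuations in $\mathcal{V}$, that admits no competitive equilibrium. Invariance is exactly what makes such an economy available: Part~\ref{def:invarianceii} supplies the trivial valuation $V_0$, and Part~\ref{def:invariancei} lets me add any nonnegative linear function $\p \cdot \bun$, so I can build auxiliary \emph{additive} agents with arbitrary nonnegative per-unit reservation values, and can also translate $V$ itself to place its demand in a convenient region. Because the theorem's conclusion asks for \emph{both} unit and bundle consistency, I would treat the two ways consistency can fail as separate cases.

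\textbf{Necessity of bundle consistency.} Here I would argue that the argument behind Theorem~\ref{thm:neccBinary} applies essentially unchanged, since the definitions of relevant bundle, bundle consistency, and invariance are all verbatim the same in the multiunit setting. Concretely, if $V$ is bundle-inconsistent, there is a bundling $\B$ of relevant bundles and a pair $\b,\cvec \in \B$ that is inconsistent, so that decreasing $\tpComp{\b}$ strictly raises $\qComp{\cvec}$ at one base price and strictly lowers it at another. Since equilibrium existence does not depend on how goods are bundled, and since any linear function in bundle coordinates is also a linear function in the original goods (so the transformed domain is again invariant and contains $V_0$), I would reproduce the classical two-good obstruction --- as in the Lex--Vincent--Claude example \citep{henry1970indivisibilites,KeCr:82} --- using $V$ together with additive auxiliary agents. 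The relevant demand changes have unit magnitude in the bundles $\b$ and $\cvec$, so the construction stays inside the region where the inconsistency bites and is insensitive to whether other coordinates exceed one unit. This produces an economy with no competitive equilibrium.

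\textbf{Necessity of unit consistency.} This is the genuinely new ingredient, and I expect it to be the main obstacle. It cannot be folded into the previous case: relevant bundles contain at most one unit of each good, so a complementarity between two units $(i,m),(i,m')$ of the \emph{same} good is invisible to bundle consistency and must be obstructed directly. The mechanism I would use is increasing returns to scale. If $V$ is not unit-consistent, then for the item representation $\overline{U}$ there are units $(i,m),(i,m')$ that are not substitutes: a price vector and a decrease in the price of $(i,m)$ after which demand for $(i,m')$ strictly rises. The key reduction is to convert this item-level statement into a clean single-good configuration --- after adding a suitable linear function (invariance) to freeze the other goods and money at an optimum, non-substitutability forces the indirect value of holding $n$ units of good $i$ to exhibit strictly increasing marginal value over some range, so that as the per-unit price of good $i$ varies, the demanded total quantity of good $i$ jumps over an intermediate value $n$ and $n$ is never demanded. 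I would then calibrate additive seller/buyer agents, built from $V_0$ by adding linear functions, so that market clearing for good $i$ requires agent $j$ to demand exactly the skipped quantity $n$; since that is impossible at every price, no competitive equilibrium exists.

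The hard part, where I would concentrate the effort, is precisely this reduction: translating ``units $(i,m)$ and $(i,m')$ are not substitutes'' --- which is strictly stronger than, and hence a priori more entangled than, the failure of the consecutive-integer property, and may involve interactions with other goods --- into a genuine single-good increasing-returns gap that is both witnessed by valuations available in $\mathcal{V}$ and robust to aggregation with the additive auxiliary agents, so that the skipped quantity cannot be restored by the market. Once this single-good gap is isolated, the non-existence argument is the classical increasing-returns obstruction from general equilibrium theory, and invariance guarantees that every agent used has a valuation in $\mathcal{V}$.
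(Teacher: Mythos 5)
Your overall strategy (contrapositive, two separate failure cases, auxiliary agents manufactured via invariance) matches the paper's, but both of your case arguments have genuine gaps. The unit-consistency step is where the plan breaks most seriously. You propose to convert ``items $(i,m)$ and $(i,m')$ are not substitutes'' into a single-good reduced valuation whose demanded quantity of good $i$ skips an integer as the \emph{uniform} per-unit price of $i$ varies. That skip is exactly the failure of the consecutive-integer property, and Lemma~\ref{lem:unitConsistentConsecutiveInteger} shows that unit consistency implies the consecutive-integer property \emph{strictly}: there are valuations that fail unit consistency yet never skip a demanded quantity at any uniform price vector. (You state the logical relationship backwards---non-substitutability of two units of the same good is the \emph{weaker} hypothesis, not the stronger one.) So the single-good increasing-returns gap you want to isolate need not exist, and no calibration of additive buyers and sellers can then obstruct equilibrium. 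The paper instead keeps the obstruction at \emph{non-uniform item prices}: it produces an item-price vector at which agent $j$'s demand is a doubleton whose two points differ by at least two units of some good, and builds $M+1$ linear agents in the goods economy (with per-unit values read off from those item prices and carefully chosen endowments) so that any goods-economy equilibrium would induce an item-economy equilibrium at the problematic non-uniform prices, which a parity argument via Fact~\ref{fac:pseudoequilPrices} rules out.

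The bundle-consistency step also does not go through as stated. The claim that ``the transformed domain is again invariant and contains $V_0$'' is unjustified: closure of $G^*\mathcal{V}$ under adding nonnegative linear functionals of $\q$ would require closure of $\mathcal{V}$ under adding linear functionals that need not have nonnegative coefficients, and $G^* V_0$ has domain $G^{-1}\{0,1\}^I$ rather than a unit box of bundles. More substantively, additive auxiliary agents do not suffice: to place $\zero$ in the convex hull of aggregate excess demand while keeping it out of actual aggregate excess demand, one must absorb the fractional combination $\tfrac{1}{2}\dvec^1 + \tfrac{1}{2}\dvec^2$ along the directions $\b^3,\ldots,\b^{|I|}$ of the bundling, which are not elementary basis vectors. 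The paper does this by using, for each such $\b^j$, a valuation in $\mathcal{V}$ for which $\b^j$ is a price effect---this is precisely where relevance of the bundles and Part~\ref{def:invarianceii} of invariance are used---plus a single linear agent, and only then does the parity argument close. Finally, the multiunit bundle step is not ``essentially unchanged'' from Theorem~\ref{thm:neccBinary}: it takes unit consistency and pairwise consistency of goods as previously established inputs (to force price effects into $\{-1,0,1\}^I$ and the bundling to be totally unimodular, so that the change of basis $G$ is well behaved), so your two cases cannot be handled independently and in the order you present them.
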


The proof of Theorem~\ref{thm:neccGen} (of which Theorem~\ref{thm:neccBinary} is a special case) shows first the necessity of unit consistency and then the necessity of bundle consistency.
The proof of the necessity of unit consistency follows similar logic to \cite{MiSt:09},
but is more involved due to the possibility of complementarities;
this step is not needed to prove Theorem~\ref{thm:neccBinary}.
The proof of the necessity of bundle consistency is obtained by applying logic similar to \cite{KeCr:82} and \cite{GuSt:99} to bundles, rather than goods.

\subsection{Interpretation of price effects}
We already know that when agents demand at most one unit of each good, price effects pinpoint the structure of complementarities between sets of goods (as well as hidden complementarities).
The following proposition is an analogue of Proposition~\ref{prop:transitivityCompBinary} and shows
that price effects allow us to identify sets of strictly complementary goods and/or sale opportunities in the case when agents demand more than one unit of some goods.

\begin{prop}
\label{prop:transitivityCompGen}
Suppose that agent $j$'s preferences are unit-consistent.
Let $\Delta \bun \in \{-1,0,1\}^I$ be a (compensated) price effect for agent $j$ and good $i$,
and let $k,\ell$ be distinct goods.
\begin{enumerate}[label=(\alph*)]
\item \label{part:transitivityCompGenComp} If $(\Delta \bun)_k,(\Delta \bun)_\ell > 0$ or $(\Delta \bun)_k,(\Delta \bun)_\ell < 0$,
then $k$ and $\ell$ are strict complements at some prices.
\item \label{part:transitivityCompGenSubs} If $(\Delta \bun)_k > 0 > (\Delta \bun)_\ell$,
then there is strict hidden complementarity involving $k$ and $\ell$ at some prices.
\end{enumerate}
\end{prop}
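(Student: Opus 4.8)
The plan is to reduce Proposition~\ref{prop:transitivityCompGen} to the binary case, Proposition~\ref{prop:transitivityCompBinary}, by passing to the item-level utility $\outilFn$ and exploiting unit consistency. Since each agent demands at most one unit of each \emph{item}, $\outilFn$ is a binary utility to which Proposition~\ref{prop:transitivityCompBinary} applies directly; the work is to transport the given good-level price effect up to a price effect for a single item, and then to transport the resulting item-level (hidden) complementarity back down to goods $k$ and $\ell$. Throughout, unit consistency is what guarantees that item demand is organized in the ``prefix'' (consecutive-integer) fashion needed for these translations: because same-good items are substitutes, at item prices $\overline{p}_{(r,m)} = p_r + (m-1)\eta$ (with $\eta>0$ small, breaking ties within each good) cost minimization forces the agent to fill demand for good $r$ using its cheapest units, so a unique good-level demand $\{\bun\}$ at $\p$ lifts to a unique item-level demand $\{\overline{\bun}\}$ with $\overline{\bun}_{(r,m)} = 1 \iff m \le \bunComp{r}$, and $\pi(\overline{\bun}) = \bun$.

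Next I lift the price effect. Since $\Delta\bun \in \{-1,0,1\}^I$, passing from $\bun$ to $\bun + \Delta\bun$ changes the demanded quantity of each good by at most one unit, so at the item level it amounts to adding or deleting a single \emph{marginal} item per good. Focusing on good $i$ (where the compensated law of demand gives $(\Delta\bun)_i \ge 0$; the main case is $(\Delta\bun)_i = 1$, with $(\Delta\bun)_i = 0$ requiring a minor separate perturbation), I claim that lowering the price of the \emph{single} item $(i,\bunComp{i}+1)$---a quantity-specific subsidy on the marginal unit of good $i$---reproduces the good-level effect: item demand jumps from $\overline{\bun}$ to a bundle projecting under $\pi$ to $\bun + \Delta\bun$. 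This is exactly the step that needs unit consistency: because the units of good $i$ are substitutes, subsidizing the marginal unit raises the demanded quantity of good $i$ by one (rather than causing a rearrangement among units or a jump of two), and it drags along the complementary marginal units of the other goods just as a uniform decrease in $p_i$ would. The outcome is a genuine item-level compensated price effect $\overline{\Delta} \in \{-1,0,1\}^{\overline{I}}$ for item $(i,\bunComp{i}+1)$ with $\pi(\overline{\Delta}) = \Delta\bun$, whose components at the marginal items $(k,\bunComp{k}+1)$ and $(\ell,\bunComp{\ell}+1)$ carry exactly the signs of $(\Delta\bun)_k$ and $(\Delta\bun)_\ell$.

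With the lift in hand, I apply Proposition~\ref{prop:transitivityCompBinary} to the binary utility $\outilFn$ and the marginal items $(k,\bunComp{k}+1)$, $(\ell,\bunComp{\ell}+1)$: in case~\ref{part:transitivityCompGenComp} their price-effect components share a sign, so they are strict complements at some item prices, while in case~\ref{part:transitivityCompGenSubs} the signs differ, giving a strict hidden complementarity. To descend this to goods $k$ and $\ell$, I realize the witnessing complementarity at \emph{good-induced} item prices---equal within each good up to the tie-breaking increments $\eta$. Here unit consistency (same-good items are substitutes) lets me equalize the prices of the items within each good without destroying the cross-good complementarity between the marginal $k$- and $\ell$-items; once within-good prices are flat, a decrease in the single good price $p_k$ maps to the decrease in the $k$-item's price, and the demanded quantity of good $\ell$ tracks the marginal $\ell$-item. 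Taking $\eta \to 0$ and using upper hemicontinuity of demand (finite $X^j$) then yields strict (hidden) complementarity between goods $k$ and $\ell$ for $\utilFn$.

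The main obstacle is this two-way translation between goods and items. Concretely, the hard steps are (i) the \emph{lift}---that decreasing the price of the single marginal unit of good $i$ reproduces the good-level price effect, rather than rearranging holdings among units or moving demand by more than one unit---and (ii) the \emph{flattening} in the descent---equalizing within-good item prices while preserving the cross-good complementarity. Both rest squarely on unit consistency, which supplies the consecutive-integer/prefix structure of item demand (cf.\ Lemma~\ref{lem:unitConsistentConsecutiveInteger}) and the substitutability of same-good units that make ``the marginal unit of a good'' behave like the good itself. Were units of a good allowed to be complements, the lift could overshoot and the flattening could introduce spurious own-good effects, so that no good-level complementarity need survive. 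The remaining points---generic choice of $\eta$, the boundary case $(\Delta\bun)_i = 0$, and preservation of strictness in the limit $\eta \to 0$---are routine once this structure is in place.
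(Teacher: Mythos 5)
Your proposal takes a genuinely different route from the paper, but it has a gap at its central step, the ``lift.'' You claim that subsidizing only the single marginal item $(i,\bunComp{i}+1)$ reproduces the good-level price effect. At the item level, however, the agent's utility $\outilFn(x_0,\overline{\bun})=\util{x_0,\pi(\overline{\bun})}$ depends only on the projection $\pi(\overline{\bun})$, so the agent always selects the \emph{cheapest} item representation of whatever good-level bundle it wants. If $\bunComp{i}\ge 1$, a subsidy $\delta$ on item $(i,\bunComp{i}+1)$ is therefore captured by pure within-good substitution: the agent swaps its most expensive demanded unit of good $i$ (priced $p_i+(\bunComp{i}-1)\eta$) for the subsidized item whenever $\delta>\eta$, which is the relevant regime since $\delta$ is of order one while $\eta$ must be taken small. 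Every good-level bundle containing at least one unit of good $i$ then receives the same effective subsidy $\delta$, so the good-level projection of item demand does not move from $\bun$ to $\bun+\Delta\bun$; the effective price schedule you induce is $\delta\cdot\mathbbm{1}_{y_i\ge 1}$ rather than the $\delta\cdot y_i$ that the uniform decrease in $p_i$ produces. (The lift does work when $\bunComp{i}=0$, but not in general.) The ``descent''/flattening step has a second, independent gap: the witnessing item prices produced by Proposition~\ref{prop:transitivityCompBinary} need not be uniform within goods, and no argument is given that they can be flattened to good-induced prices while preserving the strict (hidden) complementarity---unit consistency controls the structure of demand at a fixed price vector, not the behavior of demand along a path of price vectors. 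There is also a structural caveat: in the paper Proposition~\ref{prop:transitivityCompBinary} is \emph{derived} as a special case of Proposition~\ref{prop:transitivityCompGen}, so your reduction presupposes an independent proof of the binary case (one exists via \citealp*[Theorem 1]{baldwin2021consumer}, as Footnote~\ref{fn:binaryTransitivityComp} notes, but you would need to invoke it). Indeed, that footnote explicitly warns that the binary argument ``does not extend'' to the multiunit case, and the obstruction is precisely the goods-to-items translation you are attempting.

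The paper's proof avoids items entirely. It invokes Lemma~\ref{lem:compPriceEffConvexComb} to write $\Delta\bun$ as a convex combination $\sum_{\dvec\in\D_i}\alpha_{\dvec}\dvec$ of demand-type vectors with $\dvecComp{i}=1$ lying in $\{-1,0,1\}^I$; since $\Delta\bun\in\{-1,0,1\}^I$ is integral, every $\dvec$ with $\alpha_{\dvec}>0$ must satisfy $\dvecComp{k}=(\Delta\bun)_k$ and $\dvecComp{\ell}=(\Delta\bun)_\ell$, so every demand type $\D$ for $\utilFn$ contains a vector with $\dvecComp{k}\dvecComp{\ell}$ of the appropriate sign, and the contrapositive of the \ref{state:unique}$\implies$\ref{state:demType} implication of Lemma~\ref{lem:equivDefs} delivers the strict (hidden) complementarity at some prices. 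If you want to salvage your approach, you would need to replace the single-item subsidy with a uniform decrease in the prices of all items of good $i$ (which does reproduce the good-level effect but is no longer a single-item price change, so the binary proposition no longer applies directly), or else follow the paper in routing the argument through demand types.
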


While the intuition for 
Proposition~\ref{prop:transitivityCompGen} is the same as for Proposition~\ref{prop:transitivityCompBinary},
note that
Proposition~\ref{prop:transitivityCompGen} restricts to price effects in which demand changes by at most one unit (as in the definition of relevance).
This additional restriction is crucial.
Indeed, consider a variation on Example~\ref{eg:consecutive}: there is one agent who views an $\apple$ and a $\banana$ as strict complements and $\banana$ and a $\coconut$ as strict complements, but views $\apple$ and $\coconut$ as independent: \[V(\bun)=
\max_{0 \le y \le \bunComp{\gb}} \big\{3 \min\{\bunComp{\ga},y,1\} + 3 \min\{\bunComp{\gb}-y,\bunComp{\gc},1\}\big\}.\]
At prices $\p=(1,3,1)$, the agent demands nothing, i.e., $D^j(\p)=(0,0,0)$.
Now suppose that the price of $\banana$ falls, so $\p'=(1,1,1)$. The agent then demands $\apple$, $\coconut$ and two units of $\banana$, so $D^j(\p')=(1,2,1)$. 
But note that following the fall in the price of $\banana$, demand for $\apple$ and $\coconut$ increased simultaneously,
which would suggest that $\apple$ and $\coconut$ might be strict complements, even though the agent views them as independent goods. 
The reason for the discrepancy is that the hypothesis of Proposition~\ref{prop:transitivityCompGen} is not satisfied in this example, as we considered a price effect in which the demand for $\banana$ changed by two units.
Proposition~\ref{prop:transitivityCompGen} shows that by restricting to price effects in which demand changes by at most one unit, we can still use price effects identify both strict complementarities and strict hidden complementarities, and therefore use price effects to sensibly define relevant bundles.

\section{Testing for bundle consistency}
\label{sec:testTotalUnimod}

While bundle consistency is an economically natural condition,
one may ask whether there is a computationally straightforward way to test whether a given domain of preferences is bundle-consistent.
It turns out that one can test for bundle consistency by checking whether the set of all price effects that lie in $\{-1,0,1\}^I$ is totally unimodular.
Recall that a set $S$ of integer vectors is \emph{totally unimodular} if every square submatrix of the matrix whose columns are the elements of $S$ has determinant 0 or $\pm 1$.%
\footnote{See, e.g., \cite{schrijver1998theory} for efficient tests for, and equivalent characterizations of, total unimodularity.}

\begin{prop}\label{prop:totallyunimodgen}
Suppose preferences are unit-consistent.
Preferences are bundle-consistent if and only if the set of all agents' price effects that lie in $\{-1,0,1\}^I$ is totally unimodular.
\end{prop}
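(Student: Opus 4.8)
The plan is to prove both implications through a single dictionary that translates the comparative statics of bundled demand into the linear algebra of the agents' price effects, and then to invoke a Schrijver-type characterization of total unimodularity. First I would record two reductions. Since appending unit-vector columns to a totally unimodular matrix preserves total unimodularity, the set of price effects lying in $\{-1,0,1\}^I$ is totally unimodular if and only if the full set $R$ of relevant bundles is; and because $R$ contains every $\e{i}$, total unimodularity of $R$ is equivalent to the a priori weaker statement that every bundling $\B\subseteq R$ (i.e.\ every basis drawn from $R$) is unimodular, $\det B=\pm1$, where $B$ is the matrix whose columns are the bundles of $\B$. By Cramer's rule this is in turn equivalent to the clean statement that $B^{-1}R\subseteq\{-1,0,1\}^I$ for every such $B$.

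Next I would set up the dictionary. An agent's bundled demand is exactly $B^{-1}$ applied to their good demand, with bundle prices $\tp=B^{\top}\p$; consequently, lowering the single bundle price $\tpComp{\cvec}$ moves good prices in the direction $-u_{\cvec}$, where $u_{\cvec}$ is the row of $B^{-1}$ indexed by $\cvec$, and the induced jump in bundled demand is $B^{-1}\Delta\bun$ for the corresponding good-space change $\Delta\bun$. Under unit consistency, Lemma~\ref{lem:unitConsistentConsecutiveInteger} and Proposition~\ref{prop:transitivityCompGen} let me take these jumps to be unit price effects $r\in R$, and such an $r$ is realizable as a response to lowering $\tpComp{\cvec}$ exactly when $u_{\cvec}\cdot r>0$. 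Reading off the coordinate $\b$, the pair $(\b,\cvec)$ in the bundling $\B$ is \emph{inconsistent} if and only if there are price effects $r_1,r_2\in R$ with $u_{\cvec}\cdot r_1,\,u_{\cvec}\cdot r_2>0$ and $w_{\b}\cdot r_1>0>w_{\b}\cdot r_2$, where $w_{\b}$ is the row of $B^{-1}$ indexed by $\b$—that is, lowering $\tpComp{\cvec}$ raises $q_{\b}$ in one situation and lowers it in another.

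For sufficiency (total unimodularity implies bundle consistency) I would argue by contradiction through one determinant identity. Abbreviating $a=w_{\b}\cdot r_1$, $b=u_{\cvec}\cdot r_1$, $c=w_{\b}\cdot r_2$, $d=u_{\cvec}\cdot r_2$, expansion along the shared columns gives
\[\det[\,r_1,\ r_2,\ \B\ssm\{\b,\cvec\}\,]=\det(B)\,(ad-bc).\]
If the inconsistency pattern held, then $B^{-1}R\subseteq\{-1,0,1\}^I$ would force $a,b,c,d\in\{-1,0,1\}$, hence $a=b=d=1$ and $c=-1$, so $ad-bc=2$ and the matrix above—whose columns all lie in $R$—would have determinant $\pm2$, contradicting total unimodularity. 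Thus no pair is inconsistent in any bundling.

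For necessity (bundle consistency implies total unimodularity) I would run the converse: if $R$ is not totally unimodular it contains a minimal non–totally-unimodular square submatrix, which has determinant $\pm2$. Using the standard structure of such minimal matrices together with the Schrijver-inspired reduction that is the paper's second technical step, I would produce a bundling $\B\subseteq R$ (completing the offending columns by unit vectors, which are available since $R\supseteq\{\e{i}\}$) and a pair $(\b,\cvec)\in\B$ for which two elements $r_1,r_2\in R$ realize precisely the conflicting sign pattern $\bigl(\begin{smallmatrix}+&-\\+&+\end{smallmatrix}\bigr)$; by the dictionary this is an inconsistent pair. The main obstacle is exactly this last step: turning an abstract determinant-$\pm2$ certificate into a concrete inconsistent bundling requires (i) the Schrijver-type reduction guaranteeing that non–total-unimodularity can always be witnessed by the $2\times2$ conflicting pattern in some basis, rather than by a less convenient elementary divisor, and (ii) verifying that the two witnessing elements of $R$ are genuinely realizable as opposite-signed responses to the \emph{same} single-bundle price change—which is where the tropical-geometric content of the first technical step and the unit-consistency reduction to $\{-1,0,1\}^I$ price effects do the real work.
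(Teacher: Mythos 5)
Your architecture is sound, and your treatment of the ``only if'' direction (bundle consistency implies total unimodularity) tracks the paper's: the paper's Lemma~\ref{lem:nonUnimodToInconsistent} and Claim~\ref{cl:soupedUpSchrijver} carry out exactly the Schrijver-type extraction of a basis plus two witnessing columns that you defer, so on that side you have correctly located, but not supplied, the hard combinatorial step (and note that a minimal non--totally-unimodular submatrix having determinant $\pm 2$ is itself a nontrivial theorem, and one still has to position that submatrix inside a bundling drawn from relevant bundles). The genuine gap is in your ``dictionary.'' You assert that an inconsistent pair $(\b,\cvec)$ in a bundling $\B$ is witnessed by two \emph{price effects} $r_1,r_2$ of the \emph{original} preferences lying in $\{-1,0,1\}^I$, with a conflicting sign pattern in $B^{-1}$-coordinates, and you cite Lemma~\ref{lem:unitConsistentConsecutiveInteger} and Proposition~\ref{prop:transitivityCompGen} for this. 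Those results do not deliver it: a response of bundled demand to a change in a single bundle price is a response of good-space demand to a price change along a row of $B^{-1}$, which is generically not a coordinate direction, so the resulting demand jump is not a price effect in the paper's sense. Closing this gap is precisely what the paper's demand-type machinery does---Proposition~\ref{prop:demTypeEquiv}\ref{part:demTypeOneDirectionNotConsistent} shows that under unit consistency the (minimal) demand type is contained in the set of price effects in $\{-1,0,1\}^I$, the \ref{state:ourDef}$\Leftrightarrow$\ref{state:demType} equivalence of Lemma~\ref{lem:equivDefs} converts failure of substitutability/complementarity of the transformed goods into sign conditions on demand-type vectors, and \citet[Proposition 3.11]{BaKl:19} transports demand types through the change of basis $G$. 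Without these ingredients the central biconditional of your proof is unsupported, and it is load-bearing for both directions.

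That said, your ``if'' direction is a genuinely different route from the paper's, and a more direct one. The paper proves that total unimodularity implies bundle consistency by a long detour: total unimodularity of the price effects gives $\D$-quasiconcavity (the ``if'' direction of Proposition~\ref{prop:DquasiconcaveEquiv}, whose proof is a delicate pseudoconcavity induction using \citet{danilov2004discrete}), which gives equilibrium existence via \citet*{DaKoMu:01}, which gives bundle consistency via the necessity theorem (Theorem~\ref{thm:neccGen}). Your contrapositive determinant identity $\det[\,r_1,\ r_2,\ \B\ssm\{\b,\cvec\}\,]=\det(B)\,(ad-bc)$, combined with Cramer's rule forcing $a,b,c,d\in\{-1,0,1\}$ and hence $ad-bc=2$, would---once the dictionary is properly established with the machinery above---yield this direction without ever invoking equilibrium existence or pseudoconcavity. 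That would be a cleaner and more self-contained argument for this implication than the one in the paper, so the gap is worth repairing.
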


The proposition applies to both the setting of Section~\ref{sec:binary} and the setting of Section~\ref{sec:mult}
as unit consistency is automatic if each agent demands at most one unit of each good.%
\footnote{In that case, the restriction to price effects that lie in $\{-1,0,1\}^I$ is also vacuous.}
In addition to its conceptual value,
the ``only if'' direction comprises one of the steps in the proof of Theorems~\ref{thm:existBinary} and~\ref{thm:existGen} by providing a useful, mathematical consequence of bundle consistency.

The following example illustrates how to apply Proposition~\ref{prop:totallyunimodgen} to verify bundle consistency.

\begin{eg}[\protect{\citealp*[Example 4]{danilov2013equilibria}}]
\label{eg:dkl}
Let the set of goods be $I = \{\apple,\banana,\coconut,\date\}$ and the set of agents be $J=\{1,2,3,4\}$.
Each agent demands at most one unit of each good.

Suppose that agents' preferences are quasilinear and are given by the following valuations: 
$$
V^1(\bun)=\bunComp{\ga}+\bunComp{\gb}+\bunComp{\gc}+\bunComp{\gd}, \quad
V^2(\bun)=3\min\{\bunComp{\ga},\bunComp{\gb}\},\quad
V^3(\bun)=3\min\{\bunComp{\gb},\bunComp{\gc}\},$$
\vspace{-1cm}
$$V^4(\bun)=3\min\{\bunComp{\gc},\bunComp{\gd}\},\quad V^5(\bun)=3\min\{\bunComp{\gd},\bunComp{\ga}\},\quad
V^6(\bun)=3\min\{\bunComp{\ga},\bunComp{\gb},\bunComp{\gc},\bunComp{\gd}\}.
$$
Thus, agent~1 has additive preferences, the next four agents regard pairs of goods as complementary, while agent~5 regards all goods as complementary.

The set of all price effects consists of the columns of the following matrix: $$\begin{pmatrix}
1 & 0 & 0 & 0 & 1 & 0 & 0 & 1 & 1 \\  
0 & 1 & 0 & 0 & 1 & 1 & 0 & 0 & 1 \\  
0 & 0 & 1 & 0 & 0 & 1 & 1 & 0 & 1 \\ 
0 & 0 & 0 & 1 & 0 & 0 & 1 & 1 & 1 
\end{pmatrix}$$
The first four columns describe the price effects for agent~1, the next four columns describe the price effects for agents~2, 3, 4 and 5 and the final column describes the price effects for agent~6.
It is easy to verify that this matrix is totally unimodular.
Therefore, in this example, agents' preferences are bundle-consistent (Proposition~\ref{prop:totallyunimodgen}), so equilibria exist for all endowments under these preferences (Theorem~\ref{thm:existBinary}).
\hfill$\blacksquare$
\end{eg}

In Example~\ref{eg:dkl}, there is \emph{no} bundling for which all agents view the bundles as (strong) substitutes \citep*{danilov2013equilibria}. 
Example~\ref{eg:dkl} therefore illustrates that our equilibrium existence result cannot be deduced from \citeposs{KeCr:82} sufficient conditions by identifying a family of substitute bundles, unlike for the equilibrium existence results of \cite{greenberg1986strong} and \cite{SuYa:06}.

\section{Conclusion}

We showed that with indivisible goods,
the existence of equilibrium is fundamentally determined by whether there is a conflict between complementarity and substitutability of a pair of bundles of goods.
When agents can demand more than one unit of any good,
units of the same good must also not be complementary.
Our analysis provides a unified, economic framework that allows for complementarity, substitutability, and income effects.

Our results have implications for the design of sealed-bid multi-item auctions that guarantee market-clearing.
Currently, parsimonious bidding languages based on ideas from producer theory have been developed only for the case of quasilinear substitutes preferences \citep{milgrom2009assignment}.
Our results suggest that by using consumer theory and the economic concept of bundling,
natural bidding languages can be developed
that would allow bidders to express richer preferences that exhibit complementarities and income effects.%
\footnote{%
\cite{klemperer2010product} took a geometric approach to obtain the bidding languages proposed by \cite{milgrom2009assignment}, focusing on the two-good case for which LIPs can in fact be visualized.
As discussed by \citet[Section 6.7]{BaKl:19},
tropical geometry can be used to design and analyze bidding languages beyond the two-good case \citep*{baldwin2024implementing}, and for multi-item auctions more generally.
The results of this paper imply generally that any outcome that can be implemented by a tropical geometric auction format can instead be achieved using
bidding languages based on producer theory.
Unlike the tropical geometric approach, the consumer theory approach also applies in settings with income effects.
}

\singlespacing
\bibliographystyle{chicago}
\bibliography{bib}

\clearpage
\appendix

\onehalfspacing
\section{Relationship to geometric approaches}
\label{app:connectionsToGeo}

In this section,
we explain the mathematical relationship between our equilibrium existence results
and the geometric approaches to equilibrium existence,
expanding on the introduction.
In Appendix~\ref{app:dkm},
we discuss the combinatorial geometric ``discrete convexity'' approach developed by \citet*{DaKoMu:01}.
In Appendix~\ref{app:bk19},
we discuss the tropical geometric approach developed by \citet*{BaKl:19}.

\subsection{Relationship to discrete convexity %
}
\label{app:dkm}

The key combinatorial geometric objects that \cite*{DaKoMu:01} worked with are classes of ``discrete convex sets.''
Such classes are closely related to unimodular sets of integer vectors \citep{danilov2004discrete}.
Throughout this appendix,
we let $\D \subseteq \mathbb{Z}^I$ consist of vectors whose components have no nontrivial common factors;
we also suppose that $\D$ is closed under negation.
Letting $\mathcal{D}$ be totally unimodular,
a finite set $S \subseteq \mathbb{Z}^I$ of integer vectors is \emph{$\mathcal{D}$-convex} if it includes all integer vectors in its convex hull,
and each edge of the convex hull is parallel to an element of $\mathcal{D}$.%
\footnote{For much of this section, we follow the terminology introduced by \cite*{danilov2013equilibria}.}

To develop their equilibrium existence result,
\citet*[page 261]{DaKoMu:01} assumed that $\zero \in X^j$ and studied the functions
\[q^j_m(\bun) = (\util{\cdot,\bun})^{-1}(\util{m,\zero})\]
They then introduced a notion of discrete convexity for functions:
letting $\D$ be a totally unimodular set of vectors,
a function $q^j: X^j \to \mathbb{R}$ is \emph{$\mathcal{D}$-convex} if for all $\p \in \mathbb{R}$,
the set
\[\argmin_{\bun \in X^j} \{q^j(\bun) + \p \cdot \bun\}\]
is $\mathcal{D}$-convex as a set of integer vectors.
Call a utility function $U^j$ is \emph{$\mathcal{D}$-quasiconcave} if for each real number $m$, the function $q^j_m$ is $\mathcal{D}$-convex.

    The classes of $\mathcal{D}$-quasiconcave utility functions for totally unimodular $\D$ are the \emph{classes of discrete convexity} introduced by \citet*{DaKoMu:01}.
Under some technical assumptions,
\citet*[Theorems 2 and 4]{DaKoMu:01} showed that competitive equilibria exist in exchange economies with indivisible goods if all agent's utility functions belong to the same class of discrete convexity.%
\footnote{%
The key additional technical condition \citet*{DaKoMu:01} imposed was an analogue of the standard free disposal condition.
Note that \citet*{DaKoMu:01} also allowed for production and for unbounded sets of feasible consumption vectors,
which are beyond the scope of this paper.
}

Mathematically,
there is a close connection between
our existence results and those of \citepossall{DaKoMu:01}.
The following proposition explains the connection:
any class of unit- and bundle-consistent preferences belongs to a class of discrete convexity.

\begin{prop}
\label{prop:dkmRelationshipConsistentToDc}
A family of utility functions belong to a single class of discrete convexity if and only if the family is unit- and bundle-consistent.
\end{prop}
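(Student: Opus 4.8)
The plan is to make the set of price effects the bridge between the two conditions. The starting observation is that, at the utility level $u=\util{m,\zero}$, the set $\argmin_{\bun \in X^j}\{q^j_m(\bun)+\p\cdot\bun\}$ is exactly the Hicksian demand $\dH{\p}{u}$: since $q^j_m(\bun)$ is the amount of money needed to reach utility $\util{m,\zero}$ with $\bun$, minimizing $q^j_m(\bun)+\p\cdot\bun$ is expenditure minimization. Hence $U^j$ is $\D$-quasiconcave exactly when every Hicksian demand set is $\D$-convex, i.e.\ (i) every edge of $\Conv(\dH{\p}{u})$ is parallel to an element of $\D$, and (ii) $\dH{\p}{u}$ contains every integer point of $\Conv(\dH{\p}{u})$. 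I take the candidate common $\D$ to be the set of all agents' price effects lying in $\{-1,0,1\}^I$ together with $\pm\e{i}$ for each good $i$; appending unit vectors preserves total unimodularity, so this set is totally unimodular iff the set of $\{-1,0,1\}$ price effects is.

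For the forward direction I would verify (i) and (ii) in three steps, as in the overview of the proof of Theorem~\ref{thm:existGen}. First, show that under unit consistency every edge direction of $\Conv(\dH{\p}{u})$ coincides with a price effect lying in $\{-1,0,1\}^I$; this is the multiunit analogue of the tropical-geometric edge characterization, and unit consistency is what prevents edges from spanning two or more units of a single good, forcing them to be $\{-1,0,1\}$-valued. This gives (i) with $\D$ as above. Second, invoke Proposition~\ref{prop:totallyunimodgen}: because the family is unit-consistent, bundle consistency is equivalent to total unimodularity of the $\{-1,0,1\}$ price effects, so $\D$ is totally unimodular. Third, show that unit and bundle consistency together force (ii), i.e.\ no demand set misses an integer point of its convex hull; this pseudoconcavity step is where I would use the combinatorial geometric result underlying \citeposs{danilov2004discrete} theory. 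Together these show that each $U^j$ is $\D$-quasiconcave for the single $\D$.

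For the reverse direction, fix the common totally unimodular $\D$; since $\D$ is totally unimodular its elements lie in $\{-1,0,1\}^I$ and may be taken primitive and closed under negation, and $\D$-quasiconcavity makes every demand set $\D$-convex. I would first establish unit consistency: if units of some good were complementary, increasing returns would produce a demand set whose convex hull either misses an interior integer point or has an edge with a coordinate of absolute value at least two, each of which contradicts $\D$-convexity with $\D\subseteq\{-1,0,1\}^I$. Given unit consistency, I then obtain bundle consistency: by the edge characterization from the first step, every $\{-1,0,1\}$ price effect is an edge direction of some $\Conv(\dH{\p}{u})$, hence parallel to an element of $\D$, and being primitive it is itself $\pm$ an element of $\D$; thus all $\{-1,0,1\}$ price effects lie in $\D$ and form a totally unimodular set, so Proposition~\ref{prop:totallyunimodgen} yields bundle consistency.

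I expect the main obstacle to be the forward direction's third step. Bundle consistency, through total unimodularity, controls only the edge directions of demand sets---a first-order, local property---whereas condition (ii) is a global convexity property asserting the absence of ``holes'' in convex hulls; bridging the two is exactly the point at which the comparative-statics language used elsewhere in the paper must give way to the combinatorial geometry of \citeposs{danilov2004discrete} theory. A secondary difficulty is the first step: unlike in the single-unit case one cannot quote \citeposs{baldwin2021consumer} edge characterization directly, and one must check that restricting to $\{-1,0,1\}$ price effects (as in Proposition~\ref{prop:transitivityCompGen}) still recovers all edge directions once unit consistency is assumed.
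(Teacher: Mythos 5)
Your forward direction is essentially the paper's own argument: the three steps you describe (edge directions of $\Conv \dH{\p}{u}$ are $\{-1,0,1\}$ price effects under unit consistency, total unimodularity via Proposition~\ref{prop:totallyunimodgen}, and pseudoconcavity via \citeposs{danilov2004discrete} result) are exactly the content of the ``if'' direction of Proposition~\ref{prop:DquasiconcaveEquiv}, which the paper combines with the ``only if'' direction of Proposition~\ref{prop:totallyunimodgen} to conclude. The genuine gap is in your reverse direction, at the unit-consistency step. Unit consistency is a condition on the item-level utility function $\outilFn$, tested at item price vectors at which different units of the same good may carry \emph{different} prices; a violation at such prices need not leave any visible trace in the goods-level demand sets $\dH{\p}{u}$, which are only ever evaluated at uniform per-unit prices. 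What $\D$-convexity of every demand set gives you directly is the consecutive-integer property of Lemma~\ref{lem:unitConsistentConsecutiveInteger} (no missing interior integer points, no long edges), and the paper explicitly notes that unit consistency is \emph{strictly stronger} than that property---so your claimed contradiction does not materialize. This is precisely why the paper proves the ``only if'' direction indirectly: $\D'$-quasiconcave utility functions (for $\D' = \D \cup \{\pm\e{i} \mid i \in I\}$) form an invariant domain guaranteeing equilibrium existence by \citet*{DaKoMu:01}, so Theorem~\ref{thm:neccGen} applies; and the necessity of unit consistency inside Theorem~\ref{thm:neccGen} requires an elaborate auxiliary economy in which additional agents are engineered to effectively price-discriminate across units of the same good.

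A secondary problem in your reverse direction is the assertion that every $\{-1,0,1\}$ price effect is an edge direction of some $\Conv \dH{\p}{u}$ and hence lies in $\D$. A price decrease from a point of unique demand can cross several facets, so a price effect is in general only a \emph{convex combination} of elements of $\D_i$ (Lemma~\ref{lem:compPriceEffConvexComb}); this is the same error the paper attributes to \citet[Proposition 3.3]{BaKl:19} in Footnote~\ref{fn:falseClaimBK19}. Concluding that $\{-1,0,1\}$ price effects actually lie in $\D$ requires first establishing that each pair of goods is consistent and then invoking Proposition~\ref{prop:demTypeEquiv}\ref{part:demTypeOneDirectionConsistentEquiv}, as the paper does in the ``only if'' direction of Proposition~\ref{prop:DquasiconcaveEquiv}.
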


In addition to illustrating the mathematical connection to \cite*{DaKoMu:01},
Proposition~\ref{prop:dkmRelationshipConsistentToDc} is technically useful in establishing Theorems~\ref{thm:existBinary} and~\ref{thm:existGen}.
More precisely,
we use the ``if'' direction of Proposition~\ref{prop:dkmRelationshipConsistentToDc} to prove Theorems~\ref{thm:existBinary} and~\ref{thm:existGen} by reducing these assertions to \citepossall{DaKoMu:01} combinatorial geometric equilibrium existence results
(and \citepossall{baldwin2020equilibrium} analogous tropical geometric equilibrium existence results).

Moreover, it turns out that classes of discrete convexity that can be characterized in terms of unit consistency and the compensated price effects:
the elements of $\mathcal{D}$ prescribe the possible compensated price effects that are in $\{-1,0,1\}^I$.

\begin{prop}
\label{prop:DquasiconcaveEquiv}
Let $\mathcal{D}$ be totally unimodular. %
A utility function is $\mathcal{D}$-quasiconcave if and~only if it is unit-consistent and each compensated price effect that lies in $\{-1,0,1\}^I$ lies in $\mathcal{D}$.
\end{prop}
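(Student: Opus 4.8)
The plan is to reduce $\mathcal{D}$-quasiconcavity to a statement purely about Hicksian demand sets, and then match the two defining features of $\mathcal{D}$-convexity---the edge directions and the absence of missing integer points---against the two conditions on the right-hand side.

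First I would build the bridge between the \citet*{DaKoMu:01} functions $q^j_m$ and Hicksian demand. Since $\util{\cdot,\bun}$ is continuous, strictly increasing, and surjective onto $\mathbb{R}$, for each real $m$ and each $\bun$ the quantity $q^j_m(\bun)$ is the unique amount of money with $\util{q^j_m(\bun),\bun}=\util{m,\zero}=:u$; hence the constraint $\util{x_0,\bun}\ge u$ is equivalent to $x_0 \ge q^j_m(\bun)$, and minimizing $x_0+\p\cdot\bun$ forces $x_0=q^j_m(\bun)$. Therefore
\[
\dH{\p}{u} \;=\; \argmin_{\bun\in X^j}\{q^j_m(\bun)+\p\cdot\bun\},
\]
and since $u=\util{m,\zero}$ sweeps out all of $\mathbb{R}$ as $m$ does, $\utilFn$ is $\mathcal{D}$-quasiconcave if and only if every Hicksian demand set $\dH{\p}{u}$ is $\mathcal{D}$-convex---that is, (A) it contains every integer point of its convex hull and (B) every edge direction of $\Conv(\dH{\p}{u})$ lies in $\mathcal{D}$. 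Note that total unimodularity already forces $\mathcal{D}\subseteq\{-1,0,1\}^I$, since each entry of a vector in $\mathcal{D}$ is a $1\times1$ minor.

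The core of the argument is a correspondence---the ``first step'' of the proof of Theorem~\ref{thm:existGen}, which in the binary case is \citepossall{baldwin2021consumer} Theorem~1---between edges and price effects: I would show that $\utilFn$ is unit-consistent if and only if every primitive edge direction of every $\Conv(\dH{\p}{u})$ lies in $\{-1,0,1\}^I$, and that under unit consistency these primitive edge directions are exactly the compensated price effects lying in $\{-1,0,1\}^I$, up to sign. Granting this, both directions follow cleanly. For the ``if'' direction, assume unit consistency and that each price effect in $\{-1,0,1\}^I$ lies in $\mathcal{D}$. Then (B) is immediate from the correspondence. For (A), a subset of the totally unimodular set $\mathcal{D}$ is again totally unimodular, so the agent's price effects form a totally unimodular set; hence by Proposition~\ref{prop:totallyunimodgen} the preferences are bundle-consistent, and the ``third step'' of the proof of Theorem~\ref{thm:existGen}---the combinatorial-geometric result of \citet*{danilov2004discrete} that unit- and bundle-consistent demand sets miss no integer point of their convex hulls---supplies (A). For the ``only if'' direction, assume $\mathcal{D}$-quasiconcavity. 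By (B) every edge direction lies in $\mathcal{D}\subseteq\{-1,0,1\}^I$, so by the correspondence $\utilFn$ is unit-consistent; and then, again by the correspondence, each price effect in $\{-1,0,1\}^I$ is an edge direction and hence, by (B), an element of $\mathcal{D}$.

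The main obstacle is establishing this edge--price-effect correspondence, and in particular the subtlety that a compensated price effect $\Delta\bun$ is defined through a possibly large decrease in a single price $\pComp{i}$, which may cross several walls of the demand arrangement, whereas an edge direction records a single wall-crossing. I would handle this through the item lift $\outilFn$ together with the consecutive-integer/substitutability structure forced by unit consistency. Since $(\Delta\bun)_i$ cannot be $0$---changing only $\pComp{i}$ shifts the costs of $\bun$ and $\bun+\Delta\bun$ by the same amount and so could not reverse a strict preference---the compensated law of demand gives $(\Delta\bun)_i=1$, and unit consistency then confines the change in good $i$'s demand to a single unit occurring at one critical price $\p^\ast$. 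At $\p^\ast$ I would argue that $\bun$ and $\bun+\Delta\bun$ are co-demanded and that the vertices which could exhibit $[\bun,\bun+\Delta\bun]$ as a mere diagonal of a higher-dimensional demand face are excluded, so that the segment is a genuine edge and $\Delta\bun\in\mathcal{D}$; the reverse realization---exhibiting a given edge as a price effect induced by lowering a single coordinate---is the normal-cone computation underlying \citepossall{baldwin2021consumer} Theorem~1. The restriction to price effects in $\{-1,0,1\}^I$ is essential throughout, exactly as in Proposition~\ref{prop:transitivityCompGen}: without it a single wall can move a good's demand by more than one unit, and the correspondence with $\{-1,0,1\}^I$ breaks.
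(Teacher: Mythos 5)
Your reduction of $\mathcal{D}$-quasiconcavity to two conditions on Hicksian demand sets---(A) no missing integer points and (B) edge directions in $\mathcal{D}$---is sound, and your treatment of (B) in the ``if'' direction matches the paper's use of Proposition~\ref{prop:demTypeEquiv}\ref{part:demTypeOneDirectionNotConsistent}. But the proposal has two genuine gaps. First, your ``only if'' direction rests entirely on the claimed equivalence that $\utilFn$ is unit-consistent if and only if every \emph{primitive} edge direction of every $\Conv \dH{\p}{u}$ lies in $\{-1,0,1\}^I$. The half of that equivalence you actually use is false: with one good and $M=2$, the valuation $V(0)=V(1)=0$, $V(2)=3$ has demand $\{0,2\}$ at price $3/2$, whose unique edge has primitive direction $(1)\in\{-1,0,1\}$, yet the two units are complements, so unit consistency fails. (What fails there is (A), not (B), but your correspondence invokes only (B).) The paper has no such direct equivalence; it deduces unit consistency from $\D$-quasiconcavity by a deliberately indirect route: $\D$-quasiconcave preferences form an invariant domain for equilibrium existence by \citet*{DaKoMu:01}, and the necessity-of-unit-consistency step of Theorem~\ref{thm:neccGen}---a lengthy construction of an auxiliary economy with items as goods---then applies (this is the ``only if'' direction of Proposition~\ref{prop:dkmRelationshipConsistentToDc}). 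The paper flags precisely this point: ``deducing unit consistency relies on the total unimodularity of $\D$ and [their] existence results.''

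Second, your argument for (A) in the ``if'' direction is circular relative to the paper's logic. You derive bundle consistency from Proposition~\ref{prop:totallyunimodgen} and then cite ``the third step of the proof of Theorem~\ref{thm:existGen}'' to conclude that demand sets miss no integer points; but in the paper the ``if'' direction of Proposition~\ref{prop:totallyunimodgen} is itself downstream of the ``if'' direction of Proposition~\ref{prop:DquasiconcaveEquiv}, and the ``third step'' you appeal to \emph{is} the integrality half of the present proposition. The missing content is the paper's induction on $\dim \Conv\dH{\p}{u}$: boundary integer points are handled via \citet[Proposition 2.16]{BaKl:19} and Fact~\ref{fac:pseudoequilPrices}; interior integer points are handled by cutting the demand set with a rectangular prism, invoking \citet*[Theorem 1]{danilov2004discrete} (where total unimodularity of $\D$ is essential) to obtain an integral segment through the target point, and then lifting to items and perturbing item prices so that Lemma~\ref{lem:unitConsistentConsecutiveInteger} forces the target point into demand. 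None of this appears in your sketch, and it is where the hypotheses of unit consistency and total unimodularity do their real work.
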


We use the ``if'' direction of Proposition~\ref{prop:DquasiconcaveEquiv} to prove the ``if'' direction of Proposition~\ref{prop:dkmRelationshipConsistentToDc}.
The case of Proposition~\ref{prop:DquasiconcaveEquiv} under which agents demand at most one unit of each good and utility is quasilinear is a case of \citet*[Theorem 1]{baldwin2021consumer}.
The argument is much more involved beyond that case.
For the ``if'' direction, both the unit consistency of preferences and the total unimodularity of $\D$ play critical roles in ensuring that demand sets include all points in their convex hulls;
the formal proof relies on a mathematical result of \cite{danilov2004discrete}.
For the ``only if'' direction, price effects are not generally elements of $\{-1,0,1\}^I$, and deducing unit consistency relies on the total unimodularity of $\D$ and \citepossall{DaKoMu:01} existence results.

\subsection{Relationship to tropical geometric approaches%
}
\label{app:bk19}

In the quasilinear case,
\citet{BaKl:19} studied the set of price vectors at which demand is nonunique, which they call the ``locus of indifference prices'' (LIP) using methods from the mathematics of tropical geometry.
The ``tropical'' algebraic structure mathematically underlying quasilinear utility maximization implies that the LIP can be formally decomposed into linear components called ``facets.''
The geometric aspects of the LIP that \citet*{BaKl:19} focused on were the normal vectors to facets.

Formally, given a valuation $\valFn$,
\citet[Definition 2.2(2)]{BaKl:19} defined a \emph{LIP facet} to be a set $F \subseteq \mathbb{R}^I$ of price vectors,
that lies within exactly one hyperplane,
for which there are bundles $\bun,\bunpr \in X$ such that $F = \{\p \in \mathbb{R}^I \mid \bun,\bunpr \in \dQL{\p}\}$.
By construction,
each LIP facet has a well-defined normal direction,
which turns out to contain integer vectors \citep[page 873]{BaKl:19}.
Thus, %
\citet[Definition 3.1]{BaKl:19} defined a valuation $\valFn$ to be of \emph{demand type $\mathcal{D}$} if each LIP facet for $\valFn$ is normal to an element of $\mathcal{D}$.%
\footnote{\label{fn:falseClaimBK19}%
\citet[page 868]{BaKl:19} asserted, based on their Proposition 3.3,
that ``[demand type] vectors give the possible ways in which [...]~demand can change in response to a small generic price change.''
However,
small changes in prices can give changes in demand that are not in the directions of normal vectors to LIP facets even generically.
(Geometrically, this issue arises because generic small changes in prices can cross multiple LIP facets.)
For example, suppose that there are two goods and that the domain is $X^j = \{0,1\}^2$.
The additive valuation $\valFn$ defined by $\val{\bun} = x_1 + x_2$ is of demand type $\D = \{\pm(1,0),\pm(0,1)\}$.
Nevertheless, small changes in prices from $(1-\epsilon_1,1-\epsilon_2)$ to $(1+\delta_1,1+\delta_2)$ (for any $\epsilon_1,\epsilon_2,\delta_1,\delta_2 > 0$) change demand from $(1,1)$ to $(0,0)$---i.e., not by an element of $\D$.
The same issue applies to an analogous assertion of \citet*[page 24]{baldwin2020equilibrium} for settings with income effects.
}
Equivalently, and closer to \citet*{DaKoMu:01}, a valuation is of demand type $\D$ if for every price vector $\p$ for which $\Conv \dQL{\p}$ is one-dimensional,
$\Conv \dQL{\p}$ is parallel to an element of $\D$.%
\footnote{The equivalence follows from \citet[Proposition 2.20]{BaKl:19}; see also \citet*[Fact B.2]{baldwin2021consumer}.}
Calling a valuation \emph{pseudoconcave} if every demand set includes every integer point in its convex hull,
\citet[Theorem 4.3]{BaKl:19} showed that equilibria are guaranteed in transferable utility economies if all agents' valuations are pseudoconcave and of the same totally unimodular demand type.%
\footnote{\citet[Theorem 4.3]{BaKl:19} also
implies that if $\mathcal{D}$ contains the elementary basis vectors but is not totally unimodular,
then the class of pseudoconcave valuations of demand type $\mathcal{D}$ does not give domains for equilibrium existence.
}

\citeposs{BaKl:19} also apply slightly more generally when $\D$ is ``unimodular.''
However, the only unimodular demand types that contain valuations that are additive across goods are the totally unimodular demand types.
If a domain of preferences does not contain valuations that are additive across goods,
then agents cannot separately demand the goods,
which suggests that the natural objects that are bought are sold would be bundles rather than goods.
Redefining the natural bundles as goods converts any unimodular type to a totally unimodular one \citep[page 909]{BaKl:19}.
We therefore focus on total unimodularity,
rather than the slightly weaker property of unimodularity.

\citet*{baldwin2020equilibrium} extended some of \citeposs{BaKl:19} analysis to settings with income effects by reducing to the quasilinear case.
Given a utility function $\utilFn$ and a utility level $u$,
\citet*[Definition 1]{baldwin2021equilibrium} defined the \emph{Hicksian valuation} $V^j_{\mathrm{H}}: X^j \to \mathbb{R}$ by
\[\valH{\bun}{u} = - (U^j(\cdot,\bun))^{-1}(u);\]
their Lemma 1 (see Fact~\ref{fac:EEDlemma1}) shows that for all price vectors, demand for this Hicksian valuation matches Hicksian demand for the original utility function at utility level $u$.
\begin{fact}[\protect{\citealp*[Lemma 1]{baldwin2021equilibrium}}]
\label{fac:EEDlemma1}
For all price vectors $\p$ and utility levels $u$, we have that
\[\dH{\p}{u} = \argmax_{\bun \in X^j}\{\valH{\bun}{u} - \p \cdot \bun\}.\]
\end{fact}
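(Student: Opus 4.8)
The plan is to reduce the two-stage minimization defining $\dH{\p}{u}$ to a one-dimensional minimization over the money coordinate that can be solved in closed form, and then to recognize the resulting reduced objective as exactly $\p \cdot \bun - \valH{\bun}{u}$. Since the claim is an equality of two selection sets, it suffices to show that for each fixed $\bun \in X^j$ the two objectives being optimized agree (up to sign), after which the outer $\argmin$ and $\argmax$ coincide automatically.

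First I would pin down the inverse. The standing assumptions on $\utilFn$—continuity, strict monotonicity in the money argument $x_0$, and the limits $\lim_{x_0 \to -\infty}\util{x_0,\bun} = -\infty$ and $\lim_{x_0 \to \infty}\util{x_0,\bun} = \infty$—imply that for each fixed $\bun$ the map $x_0 \mapsto \util{x_0,\bun}$ is a continuous strictly increasing bijection of $\mathbb{R}$ onto $\mathbb{R}$. Hence $(\util{\cdot,\bun})^{-1}(u)$ is well-defined for every real $u$, so $\valH{\bun}{u} = -(\util{\cdot,\bun})^{-1}(u)$ is well-defined, and the unique $x_0^*$ solving $\util{x_0^*,\bun} = u$ is exactly $x_0^* = -\valH{\bun}{u}$.

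Next I would evaluate the inner minimum. Because $\utilFn$ is strictly increasing in $x_0$, the feasibility constraint $\util{x_0,\bun} \ge u$ is equivalent to $x_0 \ge x_0^* = -\valH{\bun}{u}$, and the objective $x_0 + \p \cdot \bun$ is itself strictly increasing in $x_0$; so the minimum over this half-line is attained at the left endpoint, giving the value $-\valH{\bun}{u} + \p \cdot \bun$. Substituting this closed form into the definition of Hicksian demand then yields
\[
\dH{\p}{u} = \argmin_{\bun \in X^j}\{-\valH{\bun}{u} + \p \cdot \bun\} = \argmax_{\bun \in X^j}\{\valH{\bun}{u} - \p \cdot \bun\},
\]
which is the assertion.

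There is no substantive obstacle here: the argument is essentially a bookkeeping unwinding of the definitions of $\dHFn$ and $\valHFn$. The only points demanding care—and the only places the hypotheses are used—are verifying that the stated monotonicity and unboundedness of $\utilFn$ in money make $(\util{\cdot,\bun})^{-1}(u)$ defined at \emph{every} utility level $u$, and that the inner optimum is genuinely \emph{attained} at the constraint boundary rather than merely approached; both follow immediately from strict monotonicity together with the two limit conditions.
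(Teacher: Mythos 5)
Your proof is correct: the paper states this result as a Fact imported from \citet[Lemma 1]{baldwin2021equilibrium} and gives no proof of its own, and your definitional unwinding---using strict monotonicity and the two limit conditions to show the inner minimum is attained exactly at $x_0 = -\valH{\bun}{u}$---is precisely the standard argument behind the cited lemma. Nothing is missing.
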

\citet*[Definitions 9 and 10]{baldwin2020equilibrium} defined a utility function to be of \emph{demand type $\mathcal{D}$} if all of its Hicksian valuations are of demand type $\mathcal{D}$,
and to be \emph{quasipseudoconcave} if all of its Hicksian valuations are pseudoconcave.%
\footnote{\cite*{baldwin2020equilibrium} called this condition ``quasiconcavity'';
we use the term quasipseudoconcavity to parallel the term pseudoconcavity for a condition on valuations.}
\citet*[Theorem 3]{baldwin2020equilibrium} then showed that equilibria are guaranteed even in the presence of income effects if all agents' utility functions are quasipseudoconcave and of the same totally unimodular demand type.

As \citet*[Footnote 35]{baldwin2020equilibrium} showed,
if $\mathcal{D}$ is totally unimodular,
a utility function is quasipseudoconcave and of demand type $\mathcal{D}$ if and only if it is $\mathcal{D}$-quasiconcave.
Hence, totally unimodular demand types provide the same domains for equilibrium existence introduced by \cite*{DaKoMu:01}.
In particular,
Propositions~\ref{prop:dkmRelationshipConsistentToDc} and~\ref{prop:DquasiconcaveEquiv} therefore also give the mathematical relationship between unit and bundle consistency and the tropical geometric approach to equilibrium existence.

\begin{primeprop}{prop:dkmRelationshipConsistentToDc}
\label{prop:demTypeRelationshipConsistentToUnimod}
A family of valuations (resp.~utility functions) are all (quasi)pseudoconcave and belong to a single totally unimodular demand type if and only if the family is unit- and bundle-consistent.
\end{primeprop}

\begin{primeprop}{prop:DquasiconcaveEquiv}
\label{prop:demTypeEquivTotallyUnimod}
Let $\mathcal{D}$ be a totally unimodular. %
A valuation (resp.~utility function) is (quasi)pseudoconcave and of demand type $\mathcal{D}$ if and only if it is unit-consistent and each (compensated) price effect that lies in $\{-1,0,1\}^I$ lies in $\mathcal{D}$.
\end{primeprop}

Beyond the totally unimodular case,
it turns out that the characterization of the utility functions of a demand type in terms of (compensated) price effects given by Proposition~\ref{prop:demTypeEquivTotallyUnimod} persists provided that utility functions are assumed to be unit-consistent.

\begin{prop}
\label{prop:demTypeEquiv}
\begin{enumerate}[label=(\alph*)]
\item \label{part:demTypeOneDirectionNotConsistent}
If $\valFn$ (resp.~$\utilFn$) is unit-consistent and
each (compensated) price effect for agent $j$ that lies in $\{-1,0,1\}^I$ lies in $\mathcal{D}$,
then $\valFn$ (resp.~$\utilFn$) is of demand type $\D$.
\item \label{part:demTypeOneDirectionConsistentEquiv} If $\valFn$ (resp.~$\utilFn$) is unit-consistent and agent $j$ sees each pair of goods as either substitutes or complements,
then $\valFn$ (resp.~$\utilFn$) is of demand type $\D$ if and only if
each (compensated) price effect for agent $j$ that lies in $\{-1,0,1\}^I$ lies in $\mathcal{D}$.
\end{enumerate}
\end{prop}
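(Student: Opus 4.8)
The plan is to reduce to the quasilinear case and to the geometry of demand sets, and then to isolate the role of each hypothesis. By Fact~\ref{fac:EEDlemma1}, demand for a utility function at utility level $u$ equals demand for its Hicksian valuation $\valH{\cdot}{u}$, and both demand type and (compensated) price effects are defined through these Hicksian valuations; so it suffices to prove both parts for a quasilinear valuation $\valFn$ and then apply the conclusion to each $\valH{\cdot}{u}$. Throughout I would use the edge-based description of demand type \citep[Proposition 2.20]{BaKl:19}: $\valFn$ is of demand type $\D$ if and only if every edge direction of every polytope $\Conv \dQL{\p}$ is parallel to an element of $\D$. Write $E$ for the set of primitive edge directions that arise.

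I would import two facts that hold under unit consistency alone and that are established while proving Proposition~\ref{prop:demTypeEquivTotallyUnimod} (the $\{0,1\}$-valued quasilinear case going back to \citealp*[Theorem 1]{baldwin2021consumer}). First, under unit consistency every edge direction lies in $\{-1,0,1\}^I$; this is the substantive point, as primitive edges outside $\{-1,0,1\}^I$ are a priori possible and are exactly what unit consistency (being stronger than the consecutive-integer property) rules out. Second, every edge direction $\nu$, oriented so that $\nu_i > 0$ for some coordinate $i$, is a compensated price effect realized by lowering $p_i$: starting just on the $\bun$-side of the corresponding LIP facet, where demand is a single bundle $\bun$, a decrease of $p_i$ crosses exactly that facet and changes demand by $\nu$. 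Together these two facts say that $E$ consists of relevant price effects.

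Part (a) is then immediate and uses neither total unimodularity nor pseudoconcavity: if every relevant price effect lies in $\D$, then $E \subseteq \D$ by the two facts, so $\valFn$ is of demand type $\D$. This also supplies the implication ``price effects in $\D$ $\Rightarrow$ demand type $\D$'' of part (b). For the converse implication of part (b) I must prove the reverse containment---that every relevant price effect is itself an edge direction---since demand type $\D$ then forces each such primitive vector into $\D$. Fix a relevant price effect $\Delta \bun$ realized by lowering $p_i$ from a price $\p$ with $\dQL{\p} = \{\bun\}$. The compensated law of demand gives $(\Delta \bun)_i \ge 0$; moreover, moving prices purely in the $-\e{i}$ direction changes demand only across cells on which the demanded quantity of good $i$ strictly increases, each such increase being a positive integer. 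Since these increments sum to $(\Delta \bun)_i$ and $\Delta \bun \in \{-1,0,1\}^I$, we get $(\Delta \bun)_i = 1$, and the whole change concentrates at a single cell whose demand set $P^{\ast}$ has $x_i \in \{c,c+1\}$ and contains both $\bun$ and $\bun + \Delta \bun$.

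It remains to show that $[\bun, \bun + \Delta \bun]$ is an edge of $P^{\ast}$ rather than an interior chord, and this is the only place goods consistency enters and where I expect the real difficulty. The idea is that an interior diagonal would record a comparative static in conflict with goods consistency: by Proposition~\ref{prop:transitivityCompGen} the sign pattern of a relevant price effect encodes complementarity between equal-signed coordinates and hidden complementarity (substitutability) between opposite-signed ones, so realizing such a diagonal as a price effect would exhibit some pair of goods as substitutes, while an edge of $P^{\ast}$ through the same vertices exhibits them as complements---a contradiction (this is the mechanism behind the example following Proposition~\ref{prop:transitivityCompGen}, where a would-be diagonal price effect is excluded because the relevant goods are never substitutes). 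Concretely, I would use goods consistency to separate $\{\bun, \bun + \Delta \bun\}$ as a face of $P^{\ast}$, building a nearby price $\ppr$ with $\dQL{\ppr} = \{\bun, \bun + \Delta \bun\}$ by perturbing the prices of the goods outside the support of $\Delta \bun$ in the direction dictated by the globally consistent complementarity structure, so that $[\bun, \bun + \Delta \bun]$ is a LIP facet and hence an edge. I expect this separation to be the crux, since it must turn the pairwise comp/subst information provided by goods consistency into a single supporting functional for the chord; the totally unimodular case of Proposition~\ref{prop:demTypeEquivTotallyUnimod} reaches the same conclusion from unimodularity of $\D$, which is why the general goods-consistent case is best handled by this sign-pattern separation rather than any determinantal property of $\D$.
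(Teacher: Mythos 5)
Your part~(a) and the ``if'' direction of~(b) follow the paper's route: the paper likewise shows, via Lemma~\ref{lem:unitConsistentConsecutiveInteger} (the consecutive-integer property implied by unit consistency), that every one-dimensional $\Conv \dQL{\p}$ is parallel to a vector in $\{-1,0,1\}^I$ and that this vector is realized as a compensated price effect---exactly your two ``imported facts.'' Be aware, though, that these facts are established \emph{inside} the proof of this proposition, not in Proposition~\ref{prop:demTypeEquivTotallyUnimod} (which the paper later derives \emph{from} this proposition), so you must actually supply the consecutive-integer argument rather than cite it; that is bookkeeping, not substance.

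The genuine gap is in the ``only if'' direction of~(b), at the step you yourself flag as the crux. Having localized the price change to a single cell with demand polytope $P^{\ast}$ containing $\bun$ and $\bun+\Delta\bun$, you propose to exhibit $[\bun,\bun+\Delta\bun]$ as an edge by perturbing prices near $\p$ so that demand becomes $\{\bun,\bun+\Delta\bun\}$. This cannot work as stated: for any $\ppr$ sufficiently close to $\p$, the set $\dQL{\ppr}$ is a \emph{face} of $\Conv\dQL{\p}$ (the face maximizing the linear functional determined by the perturbation), so a chord that is not already a face of $P^{\ast}$ can never be selected by a nearby price---your construction presupposes its conclusion. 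The accompanying sign-pattern contradiction is also not derived: you invoke Proposition~\ref{prop:transitivityCompGen}, whose proof in the paper itself rests on the very decomposition you are trying to bypass, and in any case ``a diagonal reveals substitutes while an edge reveals complements'' is not automatic---one must rule out, for instance, a half-half mixture of edge directions $(1,1,0,\ldots)$ and $(1,-1,0,\ldots)$ whose second components cancel to give an integer diagonal without any single pair of goods being forced into both camps. The paper closes this gap with Lemma~\ref{lem:compPriceEffConvexComb} plus an integrality argument: by \citet*[Proposition 2]{baldwin2021consumer}, $\Delta\bun$ is a convex combination of vectors in $\D_i=\{\dvec\in\D\cap\{-1,0,1\}^I\mid \dvecComp{i}=1\}$; goods consistency (through the \ref{state:ourDef}$\implies$\ref{state:demType} implication of Lemma~\ref{lem:equivDefs}) forces, for each coordinate $k$, all vectors of $\D_i$ to have $k$-component in $\{0,1\}$ or all in $\{0,-1\}$, so a nontrivial combination would give some coordinate of $\Delta\bun$ strictly between $0$ and $1$ in absolute value, contradicting integrality. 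It is this cancellation-free integrality argument, not a supporting-functional or perturbation construction, that your proof is missing.
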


In addition to further clarifying the mathematical relationship between our approach and the tropical geometric approach of \cite{BaKl:19} and \citet*{baldwin2020equilibrium},
Proposition~\ref{prop:demTypeEquiv} is technically useful in establishing %
the other propositions.

The cases of Propositions~\ref{prop:demTypeEquivTotallyUnimod} and~\ref{prop:demTypeEquiv}\ref{part:demTypeOneDirectionConsistentEquiv} under which agents demand at most one unit of each good and utility is quasilinear
corresponds to \citet*[Theorem 1]{baldwin2021consumer},
which does not require the hypothesis that each pair of goods is consistent for either direction.
The arguments are much more involved beyond that case,
where price effects are not generally elements of $\{-1,0,1\}^I$; and
critical roles are played by unit consistency, total unimodularity (for Proposition~\ref{prop:demTypeEquivTotallyUnimod}), and the consistency of each pair of goods (for one direction of Proposition~\ref{prop:demTypeEquiv}\ref{part:demTypeOneDirectionConsistentEquiv}).

\section{On the definitions of substitutes and complements}
\label{app:subsComps}

In this appendix, we prove that our definitions of substitutability and complementarity between \emph{pairs} of indivisible goods are equivalent to two other alternative definitions based on definitions from the literature that require that \emph{all} goods are substitutes.
We also connect our definitions to demand types.

Our definition of substitutability or complementarity for pairs of goods requires that for any decrease in the price $\pComp{i}$ of a good $i$ from prices $\p$ at which demand is unique,
there be a selection $\bunpr$ from demand at the new prices $(\pnewi)$ consistent with substitutability or complementarity.
By contrast,
in definitions of substitutability between all goods given in the literature,
\cite{KeCr:82} (see also \cite{MiSt:09}) allow demand to be non-unique at $\p$, and require that a selection $\bunpr$ from demand at $(\pnewi)$ exist for each selection $\bun$ from demand at $\p$.
On the other hand,
an analogous definition of \cite{AuMi:02} only considers new prices at which demand is unique.

In the case of the definition of substitutability between \emph{all} goods in which agents each demand at one unit of each indivisible good,
it is known that \citeposs{KeCr:82} and \citeposs{AuMi:02} definitions coincide with each other (\citealp*[Corollary 5] {DaKoLa:2003}; \citealp[Theorem A.1]{HaKoNiOsWe:13})
and hence with our definition.
However, the case of the definition of substitutability between \emph{all} goods when agents can demand more than one unit of an indivisible good,
\citeposs{KeCr:82} approach yields a strictly stronger condition than \citeposs{AuMi:02} definition  (\citealp*[Example 6]{DaKoLa:2003}; \citealp[Footnote 47]{BaKl:14}).

However, we can show that when in defining substitutability or complementarity between \emph{pairs} of indivisible goods,
extensions of \citeposs{KeCr:82} and \cite{AuMi:02} approaches coincide with each other,
and with our definition,
even when agents can demand more than one unit of any indivisible good.

\begin{lemma}
\label{lem:equivDefs}
The following properties are equivalent:
\begin{enumerate}[label=(\arabic*)]
\item \label{state:ourDef} Good $i$ is substitutable (resp.~complementary) to good $k$ for agent $j$.
\item \label{state:multi} For all utility levels $u$,
price vectors $\p$, new prices $\pprComp{i} < \pComp{i}$,
and $\bun \in \dH{\p}{u}$,
there exists $\bunpr \in \dH{\pnewi}{u}$ such that $\bunprComp{k} \le \bunComp{k}$
(resp.~$\bunprComp{k} \ge \bunComp{k}$).%
\footnote{This property is based on \citeposs{KeCr:82} definition of ``gross substitutes'' (see also \citeposs{MiSt:09} definition of ``weak substitutes'').
}
\item \label{state:unique} For all utility levels $u$,
price vectors $\p$, and new prices $\pprComp{i} < \pComp{i}$
with $\dH{\p}{u} = \{\bun\}$ and $\dH{\pnewi}{u} = \{\bunpr\}$,
we have that $\bunprComp{k} \le \bunComp{k}$
(resp.~$\bunprComp{k} \ge \bunComp{k}$).%
\footnote{This property is based on \citeposs{AuMi:02} definition of substitutes.
}
\item \label{state:demType} There exists $\D$ such that $\utilFn$ is of demand type $\D$ and for all $\dvec \in \D$, the product $\dvecComp{i}\dvecComp{k}$ is nonpositive (resp.~nonnegative).%
\footnote{The case of the equivalence between Properties~\ref{state:unique} and~\ref{state:demType} %
for which all goods are substitutes or all~goods are complements (rather than only a particular pair of goods), and $\utilFn$ is quasilinear, correspond to \citet[Propositions 3.6 and 3.8]{BaKl:19}.
See also \citet*[Theorem 1]{DaKoLa:2003}.
}
\end{enumerate}
\end{lemma}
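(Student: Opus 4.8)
The plan is to establish the cyclic chain of implications Property~\ref{state:multi}$\Rightarrow$Property~\ref{state:ourDef}$\Rightarrow$Property~\ref{state:unique}$\Rightarrow$Property~\ref{state:demType}$\Rightarrow$Property~\ref{state:multi}, treating the substitutes case throughout (the complementary case is symmetric after reversing the relevant inequalities). Two of these links are immediate. Property~\ref{state:multi} is just Property~\ref{state:ourDef} with the uniqueness hypothesis at $\p$ dropped, so it implies Property~\ref{state:ourDef}. And if demand at the new prices is a singleton, then the selection whose existence Property~\ref{state:ourDef} guarantees must be that singleton, and its $k$th coordinate satisfies the required inequality, yielding Property~\ref{state:unique}. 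Throughout, I would use Fact~\ref{fac:EEDlemma1} to replace Hicksian demand at a utility level $u$ by quasilinear demand for the Hicksian valuation $\valH{\cdot}{u}$, so that all comparative statics reduce to the geometry of demand for a (quasilinear) valuation.

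For Property~\ref{state:unique}$\Rightarrow$Property~\ref{state:demType}, I would take $\D$ to be the set of primitive integer normals to the LIP facets of $\valH{\cdot}{u}$, ranging over all utility levels $u$, closed under negation; by construction every Hicksian valuation is then of demand type $\D$. It remains to check that each $\dvec \in \D$ satisfies $\dvecComp{i}\dvecComp{k} \le 0$. Fix a facet with normal $\dvec$, realized by co-optimal bundles $\bun,\bunpr$ with $\bunpr - \bun$ parallel to $\dvec$. If $\dvecComp{i} = 0$ the claim is automatic, so suppose $\dvecComp{i} \ne 0$, in which case the facet is transverse to the $i$th coordinate direction. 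Choosing a relative-interior point $\p^\ast$ of the facet and a small $\delta > 0$, demand is the unique bundle with less of good $i$ at $\p^\ast + \delta\e{i}$ and the unique bundle with more of good $i$ at $\p^\ast - \delta\e{i}$. Applying Property~\ref{state:unique} to this decrease in $\pComp{i}$ shows that demand for $k$ weakly falls, so the $i$th component of $\bunpr - \bun$ is nonnegative while its $k$th component is nonpositive, giving $\dvecComp{i}\dvecComp{k} \le 0$.

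The substantive implication is Property~\ref{state:demType}$\Rightarrow$Property~\ref{state:multi}. Here I would decrease $\pComp{i}$ along the segment from $\p$ to $(\pnewi)$ and track how demand evolves. After perturbing the held-fixed prices $\p_{I \ssm \{i\}}$ to a generic nearby value (and later sending the perturbation to zero, using upper hemicontinuity of demand to return to the original prices and to select the prescribed starting bundle $\bun$), the $\pComp{i}$-decreasing path crosses LIP facets one at a time. At each crossing the facet geometry forces demand for good $i$ to weakly increase as $\pComp{i}$ falls, and the difference between the bundles before and after the crossing is a scalar multiple of some $\dvec \in \D$; the sign condition $\dvecComp{i}\dvecComp{k} \le 0$ then forces the corresponding change in demand for good $k$ to be weakly negative. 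Summing these changes across all crossings yields a terminal bundle $\bunpr \in \dH{\pnewi}{u}$ with $\bunprComp{k} \le \bunComp{k}$, which is exactly Property~\ref{state:multi}.

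I expect the main obstacle to be the path argument in this last implication: making precise that a generic perturbation of the held-fixed prices produces a $\pComp{i}$-decreasing path meeting the facet arrangement transversally and one facet at a time, with unique demand between consecutive crossings, and that the limiting bundles obtained as the perturbation vanishes are genuinely demanded at the original prices and can be matched to an arbitrary prescribed $\bun \in \dH{\p}{u}$. The per-crossing sign bookkeeping is the conceptual heart of the matter---it is precisely what rescues the argument despite the fact that the \emph{net} demand change over the segment need not itself lie in $\D$.
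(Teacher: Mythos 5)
Your architecture is sound and genuinely different from the paper's. You run the single cycle \ref{state:multi}$\Rightarrow$\ref{state:ourDef}$\Rightarrow$\ref{state:unique}$\Rightarrow$\ref{state:demType}$\Rightarrow$\ref{state:multi}, so that \ref{state:multi}$\Rightarrow$\ref{state:ourDef} and \ref{state:ourDef}$\Rightarrow$\ref{state:unique} are the trivial links; the paper instead makes \ref{state:multi}$\Rightarrow$\ref{state:unique} the trivial link, proves \ref{state:unique}$\Rightarrow$\ref{state:ourDef} and \ref{state:ourDef}$\Rightarrow$\ref{state:multi} by separate perturbation arguments, proves \ref{state:ourDef}$\Rightarrow$\ref{state:demType} in contrapositive, and outsources \ref{state:demType}$\Rightarrow$\ref{state:unique} to \citet*[Proposition D.2]{baldwin2021consumer}. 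Your reduction to the quasilinear case via Fact~\ref{fac:EEDlemma1} matches the paper, and your \ref{state:unique}$\Rightarrow$\ref{state:demType} is essentially the paper's \ref{state:ourDef}$\Rightarrow$\ref{state:demType} argument run forwards rather than in contrapositive; it is correct (modulo the standard fact that at a relative-interior point of a LIP facet the convex hull of demand is the one-dimensional dual edge). Your facet-crossing argument for \ref{state:demType}$\Rightarrow$\ref{state:multi} is also the right idea and is what underlies the result the paper cites.

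The genuine gap is the final step of \ref{state:demType}$\Rightarrow$\ref{state:multi}, precisely where you flag the ``main obstacle'': you propose to arrange the path so that its starting bundle ``can be matched to an arbitrary prescribed $\bun \in \dH{\p}{u}$.'' That cannot be done. The starting bundle of a generically perturbed path is the unique maximizer at the perturbed prices, hence an extreme point of $\Conv \dH{\p}{u}$; if the prescribed $\bun$ is a demanded bundle that is not an extreme point of the convex hull, no perturbation selects it, so no admissible path starts at $\bun$. The repair---which is how the paper handles the analogous issue in its proof of \ref{state:ourDef}$\Rightarrow$\ref{state:multi}---is not to start at $\bun$ at all: start the path at an extreme point $\hbun$ of $\Conv \dH{\p}{u}$ with \emph{minimal} (resp.~maximal) $k$-component, which can be uniquely selected by an arbitrarily small price perturbation by \citet*[Claim B.2]{baldwin2021consumer}. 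The path then produces $\bunpr \in \dH{\pnewi}{u}$ with $\bunprComp{k} \le \hbunComp{k}$, and since $\hbunComp{k} \le \bunComp{k}$ for \emph{every} $\bun \in \dH{\p}{u}$ by the choice of $\hbun$, Property~\ref{state:multi} follows for the arbitrary prescribed $\bun$. With that substitution, the remaining bookkeeping you describe (a generic perturbation keeping the $\e{i}$-direction segment off the codimension-two skeleton and off facets whose normals have zero $i$th component, uniqueness of demand between crossings, and the per-crossing sign argument) is correct and completes the proof.
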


\newcommand\eedlemma{Fact~\ref{fac:EEDlemma1}}
\newcommand\demTypeQuantity{Fact~\ref{fac:demandTypeQuantitySpace}}
\newcommand\demTypeQuantityIncEff{Fact~\ref{fac:EEDlemma1}}

\begin{proof}%
The implication \ref{state:multi}$\implies$\ref{state:unique} is obvious.
Hence, it suffices to prove the implications \ref{state:unique}$\implies$\ref{state:ourDef}$\implies$\ref{state:multi}
and \ref{state:ourDef}$\implies$\ref{state:demType}$\implies$\ref{state:unique}.
To prove these implications,
in light of \eedlemma,
we can assume that $\utilFn$ is quasilinear.

\newcommand\svec{\mathbf{s}}

To prove the implication \ref{state:unique}$\implies$\ref{state:ourDef},
let $\p$ be a price vector with $\dQL{\p} = \{\bun\}$, and let $\pprComp{i} < \pComp{i}$ be a new price.
By the upper hemicontinuity of demand,
there exists $\varepsilon$ such that $\dQL{\p + \svec} \subseteq \dQL{\p}$ for all $\|\svec\| < \varepsilon$.
Writing $\ppr = (\pnewi)$ and letting $\bunpr$ be an extreme point of $\Conv \dQL{\ppr}$,
by \citet*[Claim B.2]{baldwin2021consumer},
there exists $\svec \in \mathbb{R}^I$ with $\|\svec\| < \varepsilon$ such that $\dQL{\ppr + \svec} = \{\bunpr\}$.
Applying Property~\ref{state:unique} to the decrease in the price of good $i$ from $\pComp{i} + s_i$ to $\pprComp{i} + s_i$ starting at price vector $\p + \svec,$
noting that $\dQL{\p + \svec} = \{\bun\}$ by construction,
we have that $\hbunComp{k} \le \bunComp{k}$ (resp.~$\hbunComp{k} \ge \bunComp{k}$)---yielding \ref{state:ourDef}.

To prove the implication \ref{state:ourDef}$\implies$\ref{state:multi},
let $\p$ be a price vector, let $\pprComp{i} < \pComp{i}$ be a new price,
and let $\bun \in \dQL{\p}$.
Let $\hbun$ be an extreme point of $\Conv \dQL{\p}$ with lowest (resp.~highest) $k$-component.
Writing $\ppr = (\pnewi)$,
by the upper hemicontinuity of demand,
there exists $\varepsilon$ such that $\dQL{\ppr + \svec} \subseteq \dQL{\ppr}$ for all $\|\svec\| < \varepsilon$.
By \citet*[Claim B.2]{baldwin2021consumer},
there exists $\svec \in \mathbb{R}^I$ with $\|\svec\| < \varepsilon$ such that $\dQL{\p + \svec} = \{\hbun\}$.
Applying Property~\ref{state:ourDef} to the decrease in the price of good $i$ from $\pComp{i} + s_i$ to $\pprComp{i} + s_i$ starting at price vector $\p + \svec,$
there exists $\bunpr \in \dQL{\ppr + \svec}$ with $\bunprComp{k} \le \hbunComp{k}$ (resp.~$\bunprComp{k} \ge \hbunComp{k}$).
By the construction of $\hbun$ and the Krein-Millman Theorem,
we have that $\hbunComp{k} \le \bunComp{k}$ (resp.~$\hbunComp{k} \ge \bunComp{k}$),
so $\bunprComp{k} \le \bunComp{k}$ (resp.~$\bunprComp{k} \ge \bunComp{k}$).
The definition of $\varepsilon$,
implies that $\bunpr \in \dQL{\ppr}$---yielding \ref{state:multi}.

To prove the implication \ref{state:ourDef}$\implies$\ref{state:demType},
we prove the contrapositive.
Suppose that for all $\D$ such that $\valFn$ is of demand type $\D$, there exists $\dvec \in \D$ such the product $\dvecComp{i}\dvecComp{k}$ is negative (resp.~positive).
Then, %
there exists a price vector $\p$ such that $\Conv \dQL{\p}$ is a line segment parallel to a vector $\dvec$ such the product $\dvecComp{i}\dvecComp{k}$ is positive (resp.~negative).
Without loss of generality, assume that $\dvecComp{i} > 0$, so $\dvecComp{k} > 0$ (resp.~$\dvecComp{k} < 0$).
Let the endpoints of this line segment be $\bun$ and $\bunpr$,
where $\bunpr - \bun = m \dvec$ and $m > 0$.
\citet*[Claim B.1]{baldwin2021consumer} implies that there exists $\varepsilon > 0$ such that $\dQL{\p + \varepsilon \e{i}} = \{\bun\}$.
Thus, considering the reduction in the price of $i$ by $\varepsilon$ starting from price vector $\p + \varepsilon \e{i}$,
we see that $i$ is not substitutable (resp.~complementary) to $k$.

The quasilinear case of the implication \ref{state:demType}$\implies$\ref{state:unique} follows from \citet*[Proposition D.2]{baldwin2021consumer}.
\end{proof}

\section{Proofs}
\label{app:proofs}

We present proofs of results in logical order, which differs from the order in which results are stated in the text.
The logical dependencies between results are as follows.
\vspace{-18pt}
\tikzcdset{row sep/normal=10pt}
\tikzcdset{column sep/normal=12pt}
\usetikzlibrary{decorations.pathmorphing}
\[
\hspace{-10pt}
\begin{tikzcd}
\text{\emph{Proposition~\ref{prop:demTypeEquiv}}} \ar{r} \ar{dd} \ar{rd} \ar[bend right=20]{rdd} & \begin{tabular}{c}
\text{``only if'' direction}\\
\text{of Proposition~\ref{prop:totallyunimodgen}}
\end{tabular} \ar{r}
& \begin{tabular}{c}
\text{``if'' direction}\\
\text{of Proposition~\ref{prop:dkmRelationshipConsistentToDc}}
\end{tabular} \ar{r} & \text{\textbf{Theorems~\ref{thm:existBinary} \&~\ref{thm:existGen}}}\\
& \begin{tabular}{c}
\text{\emph{``if'' direction}}\\
\text{\emph{of Proposition~\ref{prop:DquasiconcaveEquiv}}}
\end{tabular} \ar{ru} \ar{rr} && \begin{tabular}{c}
\text{``if'' direction}\\
\text{of Proposition~\ref{prop:totallyunimodgen}}
\end{tabular} \\
\text{Propositions~\ref{prop:transitivityCompBinary} \&~\ref{prop:transitivityCompGen}} & \text{\textbf{Theorems~\ref{thm:neccBinary} \&~\ref{thm:neccGen}}} \ar{r} & \begin{tabular}{c}
\text{``only if'' direction}\\
\text{of Proposition~\ref{prop:dkmRelationshipConsistentToDc}}
\end{tabular} \ar{r} \ar{ru} & \begin{tabular}{c}
\text{\emph{``only if'' direction}}\\
\text{\emph{of Proposition~\ref{prop:DquasiconcaveEquiv}}}
\end{tabular}
\end{tikzcd}
\]
Our main results are in bold.
Results in italics are not needed in the case of one unit of each good (where \citet*[Theorem 1]{baldwin2021consumer} applies);
Footnotes~\ref{fn:binaryTransitivityComp},~\ref{fn:binaryNonUnimodToInconsistent},~\ref{fn:dkmRelationship},  and~\ref{fn:binaryNecc} discuss a shorter line of argument that is applicable (only) in that case.
Note also that
Lemma~\ref{lem:compPriceEffConvexComb}, stated in the proof of Proposition~\ref{prop:demTypeEquiv}\ref{part:demTypeOneDirectionConsistentEquiv}, is also used in the proof of Propositions~\ref{prop:transitivityCompBinary} and~\ref{prop:transitivityCompGen};
and
Lemma~\ref{lem:nonUnimodToInconsistent}, stated in the proof of the ``only if'' direction of Proposition~\ref{prop:totallyunimodgen}, is also used in the proof of Theorems~\ref{thm:neccBinary} and~\ref{thm:neccGen}.

\subsection{Preliminaries}

We first state a lemma and recall a fact from the literature that we use in several proofs.

The following lemma, which we use in the proofs of Proposition~\ref{prop:demTypeEquiv}\ref{part:demTypeOneDirectionNotConsistent} and the ``if'' directions of Propositions~\ref{prop:DquasiconcaveEquiv} and~\ref{prop:demTypeEquivTotallyUnimod},
shows that unit consistency implies the ``consecutive integer property'' introduced by \citet[Theorem 9]{MiSt:09}.

\begin{lemma}
\label{lem:unitConsistentConsecutiveInteger}
If $\utilFn$ is unit-consistent, then for each price vector $\p$, utility level $u$, and good $i$,
the set $\{\bunComp{i} \mid \bun \in \dH{\p}{u}\}$ consists of consecutive integers.%
\end{lemma}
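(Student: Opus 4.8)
Looking at this, I need to prove that unit consistency implies the consecutive integer property.

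Let me think about what unit consistency means and how to leverage it.The plan is to reduce the statement about consecutive integers in Hicksian demand for goods to a statement about Hicksian demand for \emph{items}, and then exploit the substitutability of any two units of the same good that is guaranteed by unit consistency. By Fact~\ref{fac:EEDlemma1}, we may pass to the Hicksian valuation $\valHFn$ and work as though $\utilFn$ is quasilinear, so it suffices to show that $\{\bunComp{i} \mid \bun \in \dQL{\p}\}$ consists of consecutive integers for each price vector $\p$ and good $i$. Recall that $\bun = \pi(\overline{\bun})$ for bundles of items, and that prices on goods induce prices on items by setting the price of item $(i,m)$ equal to $\pComp{i}$ for every serial number $m$. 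Under this induced item price vector, the demand for items $\overline{D}^j$ projects onto $\dQL{\p}$ via $\pi$, because the utility $\outilFn$ of items depends only on the projection $\pi(\overline{\bun})$.

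First I would fix $\p$ and a good $i$, and take two demanded quantities $a < c$ of good $i$, say $\bunComp{i} = a$ and $\bunComp{i}' = c$ for some $\bun, \bunpr \in \dQL{\p}$. My goal is to produce a demanded bundle with $i$-component equal to any intermediate integer $b$ with $a < b < c$. The key idea is to realize $\bun$ and $\bunpr$ as projections of item-demands $\overline{\bun}, \overline{\bunpr} \in \overline{D}^j$ at the induced item prices, and then repeatedly apply the substitutes relation between pairs of items $(i,m)$ and $(i,m')$ of the same good $i$ to move from $\overline{\bun}$ toward $\overline{\bunpr}$ one unit of good $i$ at a time. Because items of good $i$ are substitutes for each other, raising the (shadow) price of one consumed item $(i,m)$ forces demand for some other unconsumed item $(i,m')$ to weakly rise, letting us ``swap'' units and thereby adjust the total quantity $\pi(\cdot)_i$ of good $i$ by exactly one while keeping the configuration demanded.

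The cleanest route is probably to invoke the known equivalence (analogous to \citeposs{MiSt:09} Theorem 9) that substitutability of all items for one another on the item domain yields the consecutive integer property for each item-good at the item level; here we only need substitutability \emph{within} each good's block of items, which is precisely unit consistency, and we only need the consecutive-integer conclusion for the aggregate quantity $\sum_m \bun_{(i,m)}$. I would argue directly: among all item-demands projecting to demanded bundles of goods, choose representatives $\overline{\bun}, \overline{\bunpr}$ and, using that items of good $i$ are pairwise substitutes, construct a path in $\overline{D}^j$ whose projections have $i$-components stepping through every integer between $a$ and $c$; taking the projection yields demanded goods-bundles attaining each intermediate value $b$.

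The main obstacle I anticipate is making the ``swap one unit at a time'' argument rigorous: substitutability is a comparative-statics condition across a price change, not a combinatorial exchange property at a \emph{fixed} price, so I would need to perturb the item prices slightly (breaking ties among the symmetric items of good $i$) to force unique selections, apply the definition of substitutes to the perturbed problem, and then take limits using upper hemicontinuity of demand—exactly the kind of perturbation-and-limit technique used elsewhere in the paper (e.g., via \citet*{baldwin2021consumer}, Claim~B.2). Care is also needed because the items of good $i$ are interchangeable, so a generic perturbation is required to ensure that the substitutes condition has bite and produces a genuine unit-by-unit adjustment rather than a degenerate no-op.
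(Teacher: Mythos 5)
Your high-level strategy matches the paper's---reduce to the quasilinear case via Fact~\ref{fac:EEDlemma1}, pass to the item-level description, and exploit the substitutability of items $(i,m)$ and $(i,m')$ guaranteed by unit consistency---but you are missing the one construction that makes the argument close, and both routes you sketch for the final step are left unresolved. \citet[Theorem 9]{MiSt:09} requires \emph{all} items to be substitutes for one another, whereas unit consistency only gives substitutability within each good's block of items; items of different goods may be complements, so the theorem does not apply to the full item economy, as you yourself observe. The paper's resolution is to first collapse the problem to a \emph{single-good} valuation: fixing $\p$ and $i$, it defines $\widehat{V}^j(\bunComp{i}) = \max_{\bun_{I \ssm \{i\}}} \{\val{\bunComp{i},\bun_{I \ssm \{i\}}} - \p_{I \ssm \{i\}} \cdot \bun_{I \ssm \{i\}}\}$, observes that demand under $\widehat{V}^j$ at price $\pComp{i}$ is exactly the set $\{\bunComp{i} \mid \bun \in \dH{\p}{u}\}$, and checks (via the \ref{state:ourDef}$\implies$\ref{state:multi} implication of Lemma~\ref{lem:equivDefs}) that the item-level version of $\widehat{V}^j$ has all of its items---which are now only the units of good $i$---as substitutes. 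Only after this profiling-out step does \citet[Theorem 9]{MiSt:09} apply. Without it, your appeal to an ``analogue'' of that theorem under the weaker within-block hypothesis is an assertion, not a proof.

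Your fallback---the direct ``swap one unit at a time'' path through the item demand set---is precisely the part you flag as the main obstacle, and it is where the argument breaks down as written. Substitutability is a comparative-statics statement across a price change: after raising the price of item $(i,m)$, the new bundle is demanded at the \emph{new} prices, and even granting the containment of perturbed item demand in $\overline{D}^j(\overline{\p})$ for small perturbations, the law of demand only tells you that the $(i,m)$-coordinate weakly falls while each other $(i,m')$-coordinate weakly rises; the aggregate $\sum_m \overline{x}_{(i,m)}$ could stay fixed or move in either direction, and nothing in your sketch forces it to step by exactly one. Moreover, perturbing the price of one item of good $i$ can move the demanded quantities of \emph{other} goods (whose items need not be substitutes for $(i,m)$), so controlling the path requires essentially reproving \citet[Theorem 9]{MiSt:09} rather than citing it. The proposal therefore has the right ingredients but a genuine gap at the decisive step.
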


\citet[Theorem 9]{MiSt:09} showed the conclusion of Lemma~\ref{lem:unitConsistentConsecutiveInteger} under the hypothesis that an agent with a quasilinear utility function see all units of all goods as substitutes.
We extend this conclusion to allow agents to see units of different goods as complements (and allow to allow for income effects).

\begin{proof}
\newcommand\hvalFn{\widehat{V}^j}
\newcommand\hval[1]{\hvalFn(#1)}
\newcommand\ohdQLFn{\overline{\widehat{D}}^j}
\newcommand\ohdQL[1]{\ohdQLFn(#1)}
By Fact~\ref{fac:EEDlemma1},
we can assume that agent $j$'s utility function is quasilinear.
Let $\p$ be a price vector, and let $i$ be a good.
Let $S = \{\bunComp{i} \mid \bun \in X^j\},$
and consider the single-good valuation $\hvalFn: S \to \mathbb{R}$ defined by
\[\hval{\bunComp{i}} = \max_{\bun_{I \ssm \{i\}} \mid (\bunComp{i},\bun_{I \ssm \{i\}}) \in X^j} \{\val{\bunComp{i},\bun_{I \ssm \{i\}}} - \p_{I \ssm \{i\}} \cdot \bun_{I \ssm \{i\}}\}.\]
Note that demand under $\hvalFn$ at price $\pComp{i}$ is $\{\bunComp{i} \mid \bun \in \dH{\p}{u}\}$ by construction.
Hence, it suffices to prove that demand under $\hvalFn$ at each price consists of a set of consecutive integers.

Writing $\odQL{\overline{\p}_{i \times \{1,2,\ldots,M\}}$ for the demand under $\hvalFn$ when each unit of $i$ is regarded as a separate item, and items are priced according to $\overline{p}_{i \times \{1,2,\ldots,M\}}} \in \mathbb{R}^{i \times \{1,2,\ldots,M\}}$,
we have that
\[\ohdQL{\overline{\p}_{i \times \{1,2,\ldots,M\}}} = \big\{\overline{\bun}_{i \times \{1,2,\ldots,M\}} \big|  \overline{\bun} \in \odQL{\overline{\p}_{i \times \{1,2,\ldots,M\}},\overline{\p}_{(I \ssm \{i\}) \times \{1,2,\ldots,M\}}}\big\},\]
where we define $\overline{\p}_{(I \ssm \{i\}) \times \{1,2,\ldots,M\}} \in \mathbb{R}^{(I \ssm \{i\}) \times \{1,2,\ldots,M\}}$ by $(\overline{\p}_{(I \ssm \{i\}) \times \{1,2,\ldots,M\}})_{(k,m)} = \pComp{k}$.
By the \ref{state:ourDef}$\implies$\ref{state:multi} implication of Lemma~\ref{lem:equivDefs},
it follows that when all units of goods are regarded as separate items, items are substitutes under $\hvalFn$.
\citet[Theorem 9]{MiSt:09} hence implies that demand under $\hvalFn$ at each price consists of a set of consecutive integers.
\end{proof}

The proofs of the ``if'' directions of Propositions~\ref{prop:DquasiconcaveEquiv} and~\ref{prop:demTypeEquivTotallyUnimod},
and of Theorems~\ref{thm:neccBinary} and~\ref{thm:neccGen},
use the following fact regarding equilibrium prices in transferable utility economies.

\begin{fact}[\protect{\citealp[Proposition 2 and Theorem 18]{MiSt:09}; \citealp[Lemma 2.11]{BaKl:19}}]
\label{fac:pseudoequilPrices}
In transferable utility economies,
if competitive equilibria exist and a price vector $\p$ satisfies
\[\sum_{j \in J} \e{j} \in \Conv \sum_{j \in J} \dQL{\p},\]
then $\p$ is a price vector of a competitive equilibrium.
\end{fact}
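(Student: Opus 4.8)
The plan is to identify pseudoequilibrium prices with the minimizers of an aggregate indirect-utility (Lyapunov) function and then use the assumed existence of a genuine equilibrium to rule out any integrality gap. Since the economy has transferable utility, I work directly with valuations. Write $\omega = \sum_{j \in J}\e{j}$ for the aggregate endowment, let $\psi^j(\p) = \max_{\bun \in X^j}\{\val{\bun} - \p \cdot \bun\}$ be agent $j$'s indirect utility, and set $L(\p) = \sum_{j \in J}\psi^j(\p) + \p \cdot \omega$. Each $\psi^j$ is a maximum of finitely many affine functions of $\p$ (one per $\bun \in X^j$, with gradient $-\bun$), hence convex with $\partial \psi^j(\p) = -\Conv \dQL{\p}$ by Danskin's theorem; therefore $L$ is convex with $\partial L(\p) = \omega - \Conv \sum_{j \in J}\dQL{\p}$.

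First I would record the equivalence that the pseudoequilibrium condition $\omega \in \Conv \sum_{j \in J}\dQL{\p}$ holds if and only if $0 \in \partial L(\p)$, i.e.\ $\p$ minimizes $L$. This follows immediately from the subdifferential computation above, using that the convex hull of a Minkowski sum is the sum of the convex hulls.

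Next I would exploit the hypothesis that some competitive equilibrium $(\hp,(\bunj)_{j \in J})$ exists. Writing $W = \sum_{j \in J}\val{\bunj}$ for its welfare, the identities $\bunj \in \dQL{\hp}$ and $\sum_{j \in J}\bunj = \omega$ give $L(\hp) = \sum_{j \in J}(\val{\bunj} - \hp \cdot \bunj) + \hp \cdot \omega = W$. On the other hand, for every price $\p'$ one has the termwise weak-duality bound $\psi^j(\p') \ge \val{\bunj} - \p' \cdot \bunj$, which sums (again using $\sum_{j \in J}\bunj = \omega$) to $L(\p') \ge W$. Hence $\min L = W$, attained at $\hp$.

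Finally, let $\p$ be any pseudoequilibrium price. By the first step it minimizes $L$, so $L(\p) = W$. Re-running the weak-duality computation against the same allocation $(\bunj)$ yields $W = L(\p) = \sum_{j \in J}\bigl(\psi^j(\p) + \p \cdot \bunj\bigr) \ge \sum_{j \in J}\val{\bunj} = W$, so equality must hold in each summand: $\psi^j(\p) = \val{\bunj} - \p \cdot \bunj$, i.e.\ $\bunj \in \dQL{\p}$ for every $j$. Together with market clearing $\sum_{j \in J}\bunj = \omega$, this shows $(\p,(\bunj)_{j \in J})$ is a competitive equilibrium, so $\p$ is an equilibrium price. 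The step I expect to be the main obstacle is the first one---pinning down that convex-hull membership is exactly the first-order optimality condition for $L$---since it rests on the envelope identity $\partial \psi^j(\p) = -\Conv \dQL{\p}$ and on commuting $\Conv$ with the sum over agents; once that is in place, the existence hypothesis does the real work by forcing $\min L$ to equal an integral welfare, and the remaining complementary-slackness argument is routine.
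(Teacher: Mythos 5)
Your proof is correct and complete. Note that the paper does not actually prove this statement---it is recorded as a Fact with citations to \citet{MiSt:09} and \citet{BaKl:19}---so there is no in-paper argument to compare against; your Lyapunov/duality argument (pseudoequilibrium prices are exactly the minimizers of the aggregate indirect-utility function $L$, equilibrium existence pins $\min L$ to the integral welfare $W$, and complementary slackness then certifies the original equilibrium allocation as optimal at the new price $\p$) is precisely the standard argument underlying the cited results, and every step, including the subdifferential identity $\partial \psi^j(\p) = -\Conv \dQL{\p}$ and the interchange of $\Conv$ with the Minkowski sum over agents, is valid here because each $\psi^j$ is a finite maximum of affine functions. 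The only cosmetic point is that the displayed condition's $\sum_{j\in J}\e{j}$ should be read as the aggregate endowment $\sum_{j\in J}\bundowj$ (as the paper's later uses of the Fact confirm), which is exactly how you interpreted it.
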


\subsection{Proof of Proposition~\ref{prop:demTypeEquiv}}

\paragraph*{Proof of Part~\ref{part:demTypeOneDirectionNotConsistent}.}

By Fact~\ref{fac:EEDlemma1},
we can assume that agent $j$'s utility function is quasilinear.
Consider any price vector $\p$ and utility level $u$ such that $\Conv \dQL{\p}$ is one-dimensional.
Let $\dvec$ be an integer vector with no non-trivial common factors such that $\Conv \dQL{\p}$ is parallel to $\dvec$,
and let $i \in \argmax_{i' \in I} |\dvecComp{i'}|$.
Let $\bun,\bunpr \in \dQL{\p}$ be the elements of $\dQL{\p}$ with the lowest and second-lowest $i$-components, respectively.

Lemma~\ref{lem:unitConsistentConsecutiveInteger}
implies that $\bunprComp{i} - \bunComp{i} = 1$.
As $\bunpr - \bun$ must be an integer multiple of $\dvec$,
we must have that $\bunpr - \bun = \pm \dvec$.
In particular, we have that $\dvecComp{i} = \pm 1$;
by the definition of $i$, it follows that $\dvec \in \{-1,0,1\}^I$,
and hence that $\bunpr - \bun \in \{-1,0,1\}^I$.

By Fact~\ref{fac:EEDlemma1} and \citet*[Claim B.1]{baldwin2021consumer},
there exists $\hpComp{i} > \pComp{i}$ such that $\dQL{\hpComp{i},\p_{I \ssm \{i\}}} = \{\bun\}$.
Considering the decrease in the price of $i$ from $\hpComp{i}$ to $\pComp{i}$,
we see that $\bunpr - \bun$ is a compensated price effect for agent $j$ and good $i$.

Since $\p$ was arbitrary with $\Conv \dQL{\p}$ one-dimensional,
$\valFn$ is of demand type $\D$.

\paragraph{Proof of Part~\ref{part:demTypeOneDirectionConsistentEquiv}.}

Part~\ref{part:demTypeOneDirectionNotConsistent} provides the ``if'' direction.
The proof of the ``only if'' direction relies on the following lemma,
which we also use in the proof of Propositions~\ref{prop:transitivityCompBinary} and~\ref{prop:transitivityCompGen}.

\begin{lemma}
\label{lem:compPriceEffConvexComb}
Suppose that agent $j$'s preferences are unit-consistent and of demand type $\D$.
If $\Delta \bun \in \{-1,0,1\}^I$ is a compensated price effect for agent $j$ and a good $i$,
then $\Delta \bun$ is a convex combination of elements of
\[\D_i = \{\dvec \in \D \cap \{-1,0,1\}^I \mid \dvecComp{i} = 1\}.\]
\end{lemma}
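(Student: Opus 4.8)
The plan is to reduce to the quasilinear case via Fact~\ref{fac:EEDlemma1}, and then to exhibit a single price at which both the ``before'' and ``after'' bundles are demanded, so that the decomposition can be read off from the local cone structure of one demand polytope. First I would record two easy preliminaries. Let $\p$, $\pprComp{i} < \pComp{i}$, and $\bun$ with $\dQL{\p} = \{\bun\}$ and $\bun + \Delta\bun \in \dQL{\ppr}$ realize the price effect, where $\ppr = (\pprComp{i},\p_{I\ssm\{i\}})$. Since $\p$ and $\ppr$ differ only in coordinate $i$, a one-line revealed-preference argument (using that $\bun$ is the \emph{unique} demand at $\p$) shows $\pComp{i}$-independence of $\p\cdot\Delta\bun$ would force $\bun$ to strictly beat $\bun+\Delta\bun$ at $\ppr$; hence $(\Delta\bun)_i \neq 0$, and as $\Delta\bun \in \{-1,0,1\}^I$ with $(\Delta\bun)_i \ge 0$ by the law of demand, we get $(\Delta\bun)_i = 1$. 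I would also note the key edge fact: if $\dvec$ is the primitive direction of an edge of $\Conv\dQL{\q}$, then exposing that edge as a one-dimensional demand set at a nearby price (via the face-exposure results behind \citet*{baldwin2021consumer}) shows $\dvec$ is parallel to an element of $\D$ by the demand-type hypothesis, and Lemma~\ref{lem:unitConsistentConsecutiveInteger} applied coordinatewise forces $\dvec \in \D \cap \{-1,0,1\}^I$ (otherwise the demanded values of some coordinate would skip an integer).

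The central step is to find a single price $\q$ with $\bun, \bun+\Delta\bun \in \dQL{\q}$ and with $\bun$ the unique minimizer of $\bunComp{i}$ over $\dQL{\q}$. I would set $\q = \p - s_1\e{i}$, where $s_1 = \sup\{\, s \ge 0 : \bun \in \dQL{\p - s\e{i}}\,\}$; closedness of the set of supporting prices gives $\bun \in \dQL{\q}$. The argument that $\bun$ is the \emph{unique} minimizer of $\bunComp{i}$ is the same revealed-preference computation as above, applied at every price on the segment: $\bun$ strictly beats every other bundle with $i$-component $\le \bunComp{i}$, so any demanded bundle of that level equals $\bun$. To see that $\bun+\Delta\bun \in \dQL{\q}$ as well, I would exploit that lowering only $\pComp{i}$ does not change comparisons among bundles sharing a fixed $i$-component: thus the set $A$ of maximizers \emph{among} level-$(\bunComp{i}+1)$ bundles is constant along the segment, all of $A$ ties with $\bun$ at a common threshold, and $\bun+\Delta\bun \in A$ because it is globally demanded at $\ppr$. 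Finally, the consecutive-integer property (Lemma~\ref{lem:unitConsistentConsecutiveInteger}) pins the threshold to $s_1$: if $s_1 < T$ then $\bun$ is not the sole demand at $\q$, and whenever $\bun$ is co-demanded with any higher-level bundle, a level-$(\bunComp{i}+1)$ bundle must be co-demanded too, forcing $A \subseteq \dQL{\q}$.

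With such a $\q$ in hand, the conclusion follows from the tangent-cone description of a polytope at a vertex. Since $\bun$ uniquely minimizes $\bunComp{i}$ it is a vertex of $P := \Conv\dQL{\q}$, and $P \subseteq \bun + \operatorname{cone}\{\bunpr - \bun : \bunpr \text{ a vertex adjacent to } \bun\}$. Each such edge direction lies in $\D \cap \{-1,0,1\}^I$ by the edge fact, and since $\bun$ is the unique bundle of its ($i$-)level in $\dQL{\q}$, every edge from $\bun$ must strictly increase $\bunComp{i}$, so its $i$-component equals $1$; that is, each edge direction lies in $\D_i$. Because $\bun+\Delta\bun \in P$, I can write $\Delta\bun = \sum_t \alpha_t \dvec^{(t)}$ with $\alpha_t \ge 0$ and $\dvec^{(t)} \in \D_i$; taking $i$-components and using $(\Delta\bun)_i = 1 = (\dvec^{(t)})_i$ gives $\sum_t \alpha_t = 1$, so this is a convex combination, as desired.

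The main obstacle is exactly the co-demand step: in general $\bun$ and $\bun+\Delta\bun$ need \emph{not} be demanded at any common price, because as $\pComp{i}$ falls the demanded bundle can drift among bundles of the same $i$-level through ``horizontal'' transitions before settling on $\bun+\Delta\bun$. The resolution I would stress is that these horizontal transitions are inert: varying only $\pComp{i}$ cannot reorder equal-$\bunComp{i}$ bundles, so the level-$(\bunComp{i}+1)$ maximizer set is genuinely constant, and unit consistency both prevents the $i$-level from overshooting $\bunComp{i}+1$ and forces the relevant bundles to be co-demanded with $\bun$ precisely at $\q$. Getting these two facts to interlock cleanly is the delicate part; everything else is convex geometry plus the consecutive-integer lemma.
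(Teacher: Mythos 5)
Your proof is correct, but it takes a genuinely different route from the paper's. The paper's argument is very short: after reducing to the quasilinear case via Fact~\ref{fac:EEDlemma1} and taking $\D$ minimal (so that Proposition~\ref{prop:demTypeEquiv}\ref{part:demTypeOneDirectionNotConsistent} yields $\D \subseteq \{-1,0,1\}^I$, hence $\dvecComp{i}=1$ for all $\dvec\in\D_i$), it simply cites \citet*[Proposition 2]{baldwin2021consumer} applied to the reverse price change to conclude that $\Delta\bun$ is a nonnegative combination of elements of $\D_i$, and then normalizes by taking $i$-components exactly as you do. You instead re-derive the substance of that cited proposition in this setting: you construct the critical price $\p - s_1\e{i}$ at which $\bun$ and $\bun+\Delta\bun$ are co-demanded with $\bun$ the unique minimizer of $\bunComp{i}$, and read the decomposition off the tangent cone of $\Conv\dQL{\p-s_1\e{i}}$ at the vertex $\bun$. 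The two delicate steps you flag both check out: comparisons among bundles with a fixed $i$-level are unaffected by changes in $\pComp{i}$, so the level-$(\bunComp{i}+1)$ maximizer set is constant along the segment; and Lemma~\ref{lem:unitConsistentConsecutiveInteger} forces the set of demanded $i$-levels at the critical price to be consecutive integers containing $\bunComp{i}$ and some higher level, hence containing $\bunComp{i}+1$, which is exactly where unit consistency enters (without it the demanded levels could skip $\bunComp{i}+1$ and the co-demand would fail). What the paper's route buys is brevity, at the cost of leaning on an external result whose proof is not reproduced; what yours buys is a self-contained argument (modulo the face-exposure facts behind the edge claim and Lemma~\ref{lem:unitConsistentConsecutiveInteger}) that makes visible the two distinct places where unit consistency is used. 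Both arguments need, and obtain in the same way, the fact that edge directions of demand polytopes lie in $\{-1,0,1\}^I$.
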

\begin{proof}
By Fact~\ref{fac:EEDlemma1},
we can assume that agent $j$'s utility function is quasilinear.
Without loss of generality, we can take $\D$ to be minimal such that $\valFn$ is of demand type $\D$.
It then follows from Proposition~\ref{prop:demTypeEquiv}\ref{part:demTypeOneDirectionNotConsistent} that $\D \subseteq \{-1,0,1\}^I$.
Note that hence
$\D_i = \{\dvec \in \D \mid \dvecComp{i} > 0\}.$

Let $\p$ be a price vector and let $\pprComp{i} < \pComp{i}$ be a new price for $i$
with $\dQL{\p} = \{\bun\}$ such that $\bun + \Delta \bun \in \dQL{\pnewi}$.
Applying \citet*[Proposition 2]{baldwin2021consumer} to the increase in the price of good $i$ from $\pprComp{i}$ to $\pComp{i}$ with demand starting at the selection $\bun + \Delta \bun$,
we see that $\Delta \bun$ must be a nonnegative linear combination of elements of $\D_i$.
In particular, 
we have that $(\Delta \bun)_i > 0$, hence that $(\Delta \bun)_i = 1.$
Writing
\[\Delta \bun = \sum_{\dvec \in \D_i} \alpha_{\dvec} \dvec,\]
with $\alpha_{\dvec} \ge 0$ for all $\dvec \in \D_i,$
since $\dvecComp{i} = 1$ for all $\dvec \in \D_i$, we have that $\sum_{\dvec \in \D_i} \alpha_{\dvec} = 1$.
\end{proof}

To complete the proof of Part~\ref{part:demTypeOneDirectionConsistentEquiv} of the proposition,
let $\Delta \bun \in \{-1,0,1\}^I$ be a compensated price effect for agent $j$ and a good $i$;
we show that $\Delta \bun \in \D$.
By Lemma~\ref{lem:compPriceEffConvexComb}, we can write
$\Delta \bun$ as a convex combination of elements of as $\D_i$,
say as
\[\Delta \bun = \sum_{\dvec \in \D_i} \alpha_{\dvec} \dvec.\]

We claim that there exists $\dvec \in \D_i$ such that $\alpha_{\dvec} = 1$.
Suppose for sake of deriving a contradiction that no such $\dvec \in \D_i$ exists.
Then, there exist distinct $\dvec',\dvec'' \in \D_i$ with $\alpha_{\dvec'},\alpha_{\dvec''} > 0$.
Let $k$ be a good such that $\dvecprComp{k} \not= \dvecComp{k}''$.
By the \ref{state:ourDef}$\implies$\ref{state:demType} implication of Lemma~\ref{lem:equivDefs}, we must have that $\dvecprComp{k},\dvecComp{k}'' \ge 0$ or that $\dvecprComp{k},\dvecComp{k}'' \le 0$.
Without loss of generality, assume that $\dvecprComp{k} = \pm 1$ and that $\dvecComp{k}' = 0$.
The \ref{state:ourDef}$\implies$\ref{state:demType} implication of Lemma~\ref{lem:equivDefs} then guarantees that $\dvecComp{k} \in \{0,\dvecprComp{k}\}$ for all $\dvec \in \D_i$.
It follows that
\[|(\Delta \bun)_k| = \sum_{\dvec \in \D_i \mid \dvecComp{k} = \dvecprComp{k}} \alpha_{\dvec}.\]
This quantity is strictly between 0 and 1 since $\sum_{\dvec \in \D_i} \alpha_{\dvec} = 1$ and $\alpha_{\dvec'},\alpha_{\dvec''} > 0$%
---contradicting the hypothesis that $\Delta \bun \in \mathbb{Z}^I$.

Hence, we can conclude there exists $\dvec \in \D_i$ such that $\alpha_{\dvec} = 1$.
It follows that $\Delta \bun = \dvec \in \D$.

\subsection{Proof of Propositions~\ref{prop:transitivityCompBinary} and~\ref{prop:transitivityCompGen}}

Proposition~\ref{prop:transitivityCompBinary} is a special case of Proposition~\ref{prop:transitivityCompGen} because if an agent demands at most one unit of each good,
their preferences are unit-consistent and all compensated price effects are elements of $\{-1,0,1\}^I$.%
\footnote{\label{fn:binaryTransitivityComp}Proposition~\ref{prop:transitivityCompBinary} alone can also be obtained by combining \citet*[Theorem 1]{baldwin2021consumer} with Lemma~\ref{lem:equivDefs}.  This argument, however, does not extend to give a proof of Proposition~\ref{prop:transitivityCompGen}.}
Hence, it suffices to prove Proposition~\ref{prop:transitivityCompGen}.

To prove Proposition~\ref{prop:transitivityCompGen},
suppose that $\utilFn$ is of demand type $\D$.
By Lemma~\ref{lem:compPriceEffConvexComb},
we can write
$\Delta \bun$ as a convex combination of elements of $\D_i$, say as
\[\Delta \bun = \sum_{\dvec \in \D_i} \alpha_{\dvec} \dvec.\]

\paragraph*{Proof of Part~\ref{part:transitivityCompGenComp}.}

Since $\Delta \bun \in \{-1,0,1\}^I$,
we have that $(\Delta \bun)_k = (\Delta \bun)_\ell = 1$ (resp.~$(\Delta \bun)_k = (\Delta \bun)_\ell = -1$).
Since $\D_i \subseteq \{-1,0,1\}^I$,
we must have that $\dvecComp{k} = \dvecComp{\ell} = 1$ (resp.~$\dvecComp{k} = \dvecComp{\ell} = -1$) for all $\dvec \in \D_i$ with $\alpha_{\dvec} > 0$.
In particular, there exists $\dvec \in \D$ with $\dvecComp{k}\dvecComp{\ell}>0$.
Since $\D$ was arbitrary with $\utilFn$ of demand type $\D$,
the contrapositive of the \ref{state:unique}$\implies$\ref{state:demType} implication of Lemma~\ref{lem:equivDefs} implies that $k$ and $\ell$ must be strict complements at some prices.

\paragraph*{Proof of Part~\ref{part:transitivityCompGenSubs}.}

Since $\Delta \bun \in \{-1,0,1\}^I$,
we have that $(\Delta \bun)_k = 1$ and that $(\Delta \bun)_\ell = -1$.
Since $\D_i \subseteq \{-1,0,1\}^I$,
we must have that $\dvecComp{k} = 1$ and $\dvecComp{\ell} = -1$ for all $\dvec \in \D_i$ with $\alpha_{\dvec} > 0$.
In particular, there exists $\dvec \in \D$ with $\dvecComp{k}\dvecComp{\ell}<0$.
Since $\D$ was arbitrary with $\utilFn$ of demand type $\D$,
the contrapositive of the \ref{state:unique}$\implies$\ref{state:demType} implication of Lemma~\ref{lem:equivDefs} implies that there must be a strict hidden complementarity between $k$ and $\ell$ at some prices.

\subsection{Proof of the ``only if'' direction of Proposition~\ref{prop:totallyunimodgen}}

The proof relies on a more technical connection between consistent bundles and total unimodularity,
which we also use in the proof of Theorems~\ref{thm:neccBinary} and~\ref{thm:neccGen}.

\newcommand\svec{\mathbf{s}}

\begin{lemma}
\label{lem:nonUnimodToInconsistent}
Suppose that preferences are unit-consistent and that each agent sees each pair of goods as either complements or substitutes.
Let $S \subseteq \{-1,0,1\}^I$ be a set of compensated price effects.
If
$S$ is not totally unimodular,
then there exists a bundling $\B \subseteq S \cup \{\e{i} | i \in I\}$ that contains a pair of inconsistent bundles but excludes at least two elements of $S$.
\end{lemma}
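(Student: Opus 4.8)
The plan is to read an explicit ``odd cycle'' out of the failure of total unimodularity and to build the required bundling by breaking that cycle at one good. First I would invoke the structure theory of minimal non–totally-unimodular $\{-1,0,1\}$-matrices \citep{schrijver1998theory}: since $S$ is not totally unimodular, the matrix whose columns are the elements of $S$ has a minimal non-TU square submatrix, and any such submatrix, on a set of goods $i_1,\dots,i_k$ and price effects $d^1,\dots,d^k\in S$, is supported on a single cycle, has exactly two nonzero entries in each row and column, and has determinant $\pm2$. Reindexing cyclically, I may assume the restriction of $d^m$ to $\{i_1,\dots,i_k\}$ is supported on $\{i_m,i_{m+1}\}$ (indices mod $k$), say $d^m_{i_m}=a_m$ and $d^m_{i_{m+1}}=b_m$ with $a_m,b_m\in\{-1,1\}$; off the cycle goods each $d^m$ may carry further $\{-1,0,1\}$ entries, but these will turn out to be inert.

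Next I would write down the bundling
\[
\B=\{d^1,\dots,d^{k-2}\}\cup\{\e{i_1},\e{i_k}\}\cup\{\e{i}\mid i\in I\ssm\{i_1,\dots,i_k\}\}.
\]
This is a subset of $S\cup\{\e{i}\mid i\in I\}$ of cardinality $|I|$, and I would verify it is a basis: modulo the standard vectors $\e{i}$ for non-cycle goods, each $d^m$ with $m\le k-2$ reduces to $a_m\e{i_m}+b_m\e{i_{m+1}}$, so $\e{i_1},d^1,\dots,d^{k-2}$ form a triangular (hence nonsingular) system on $i_1,\dots,i_{k-1}$, and adjoining $\e{i_k}$ spans the whole cycle-good subspace. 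By construction $\B$ omits the two distinct elements $d^{k-1},d^{k}\in S$ (neither is a standard vector, each having two nonzeros), so $\B$ excludes at least two elements of $S$. It remains to produce an inconsistent pair, and I claim it is $\e{i_1},\e{i_k}$, witnessed by the two omitted price effects.

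The heart of the argument is a sign computation in the coordinates of $\B$. Because $\e{i_1},\e{i_k}\in\B$ and the off-cycle entries of a vector only load onto non-cycle coordinates, $d^k$ acquires coordinates $b_k$ on $\e{i_1}$ and $a_k$ on $\e{i_k}$; for $d^{k-1}$ one first resolves $\e{i_{k-1}}$ through the telescoping relations $\e{i_{m+1}}=b_m\big(d^m-a_m\e{i_m}-\cdots\big)$, which gives $\e{i_{k-1}}$ coordinate $(-1)^{k-2}\prod_{t=1}^{k-2}a_tb_t$ on $\e{i_1}$ and $0$ on $\e{i_k}$, whence $d^{k-1}$ acquires coordinates $(-1)^{k-2}a_{k-1}\prod_{t=1}^{k-2}a_tb_t$ on $\e{i_1}$ and $b_{k-1}$ on $\e{i_k}$. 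The two coordinate-products are therefore $a_kb_k$ and $(-1)^{k-2}\prod_{t=1}^{k-1}a_tb_t$, and their product carries the sign of $(-1)^{k}\prod_{t=1}^{k}a_tb_t$. On the other hand the cyclic determinant evaluates to $\prod_t a_t+(-1)^{k+1}\prod_t b_t$, so the defining condition $\det=\pm2$ is exactly $\prod_{t=1}^{k}a_tb_t=(-1)^{k+1}$; substituting, the two coordinate-products have opposite signs. This matching is the crux: the odd-cycle (determinant $\pm2$) condition is precisely what forces one witness to point ``toward complements'' and the other ``toward substitutes'' for $\e{i_1},\e{i_k}$, uniformly in the cycle length $k$ rather than only in the three-cycle case of Example~\ref{eg:threecycle}.

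Finally I would convert opposite signs into genuine inconsistency. Since preferences are unit-consistent and each agent sees each pair of goods as substitutes or complements, Proposition~\ref{prop:demTypeEquiv}\ref{part:demTypeOneDirectionConsistentEquiv} (via Lemma~\ref{lem:compPriceEffConvexComb}) shows that $d^{k-1}$ and $d^k$ each lie in the minimal demand type of the agent producing them, so their images under the change of basis lie in that agent's bundled demand type. Applying the bundle analogue of the equivalence \ref{state:ourDef}$\Leftrightarrow$\ref{state:demType} of Lemma~\ref{lem:equivDefs}---legitimate because bundled demand is ordinary demand in the coordinates of $\B$---a bundled demand-type vector with positive (resp.\ negative) $\e{i_1}$–$\e{i_k}$ coordinate product certifies that $\e{i_1}$ is not substitutable (resp.\ not complementary) to $\e{i_k}$, exactly as in the proof of Proposition~\ref{prop:transitivityCompGen}. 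The two omitted price effects thus certify both failures, so $\e{i_1}$ and $\e{i_k}$ are inconsistent in $\B$. I expect the two delicate points to be (i) pinning down the minimal non-TU structure theorem in precisely the signed ``single cycle, two nonzeros per line, determinant $\pm2$'' form used above, and (ii) the telescoping sign calculation together with its alignment with the determinant---this is where one must confirm that the construction works for every $k$ and not merely for small cycles.
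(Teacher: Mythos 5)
Your overall architecture---change basis using a subset of $S\cup\{\e{i}\mid i\in I\}$, show that two omitted price effects acquire opposite-signed coordinate products on a pair of basis elements, then convert those signs into an inconsistency via Lemma~\ref{lem:equivDefs} and Proposition~\ref{prop:demTypeEquiv}---matches the paper's, and your final conversion step is essentially the paper's argument. The gap is in the combinatorial input. You assert that a minimal non-totally-unimodular square submatrix of a $\{-1,0,1\}$ matrix ``is supported on a single cycle, has exactly two nonzero entries in each row and column, and has determinant $\pm 2$.'' That is the characterization of minimal violations of \emph{balancedness}, not of total unimodularity. What is true of a minimally non-TU matrix is that its determinant is $\pm 2$ and its inverse has all entries $\pm\tfrac{1}{2}$ (whence each row and column has an \emph{even} number of nonzeros); it need not be a cycle. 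A concrete counterexample is
\[
A=\begin{pmatrix}1&1&1&1\\1&1&0&0\\1&0&1&0\\1&0&0&1\end{pmatrix},
\]
whose determinant is $-2$ while every proper square submatrix is totally unimodular (no $3\times 3$ submatrix has exactly two ones in each row and column, which is the only way a $3\times3$ zero-one matrix attains determinant $\pm2$), yet whose first row and first column each contain four nonzeros. Its columns are exactly the price effects of agents with valuations of the form $3\min\{\bunComp{\ga},\bunComp{\gb},\bunComp{\gc},\bunComp{\gd}\}$ and $3\min\{\bunComp{\ga},\bunComp{\gb}\}$, etc.---a mild variant of Example~\ref{eg:dkl}---so the case your argument misses is not exotic. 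Since your construction of $\B$, your triangularity check, and your telescoping sign computation all lean on the cycle structure, the proof does not go through for such $S$.

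The repair is the paper's Claim~\ref{cl:soupedUpSchrijver}: following \citet[pp.~270--271]{schrijver1998theory}, one performs row operations on $[B\ \,\mathrm{Id}]$ for a minimal violating submatrix $B$ to extract linearly independent vectors in $S\cup\{\e{i}\mid i\in I\}$ together with two omitted columns $\dvec,\dvec'$ and rows $k,\ell$ satisfying $(G^{-1}\dvec)_k(G^{-1}\dvec)_\ell>0$ and $(G^{-1}\dvec')_k(G^{-1}\dvec')_\ell<0$; this works for arbitrary minimal violations, not only cycles. Minimality of the violating subset then yields $\det G=\pm1$, a fact you also need---but do not establish---for the transformed utility functions $\util{x_0,G\q}$ to be well defined on integer bundles and for the ``bundle analogue of Lemma~\ref{lem:equivDefs}'' to be legitimate. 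From that point onward your last paragraph coincides with the paper's proof.
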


To prove Lemma~\ref{lem:nonUnimodToInconsistent}, we use the following technical linear algebraic claim.

\begin{claim}
\label{cl:soupedUpSchrijver}
Suppose that $S \subseteq \{-1,0,1\}^I$ is not totally unimodular.
Then, there exist linearly independent vectors $\svec^1,\ldots,\svec^{|I|} \in S \cup \{\e{i} \mid i \in I\}$ and vectors $\dvec,\dvec' \in S \ssm \{\svec^1,\ldots,\svec^{|I|}\}$ such that letting $G$ be the $I \times |I|$ matrix whose columns are $\svec^1,\ldots,\svec^{|I|},$
there exist $1 \le k,\ell \le |I|$ with $(G^{-1} \dvec)_k (G^{-1} \dvec)_\ell > 0$ and $(G^{-1} \dvec')_k (G^{-1} \dvec')_\ell < 0$.
\end{claim}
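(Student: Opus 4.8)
The plan is to extract, from the failure of total unimodularity of $S$, a specific square submatrix with determinant of absolute value at least $2$, and then to convert this into the desired configuration of a basis $\svec^1,\dots,\svec^{|I|}$ together with two ``witness'' vectors $\dvec,\dvec'$. By definition of total unimodularity, since $S$ is not totally unimodular, there is a square submatrix $B$ of the matrix with columns in $S$ whose determinant is neither $0$ nor $\pm 1$. I would start from a \emph{minimal} such violating submatrix, so that every proper square submatrix of $B$ has determinant in $\{0,\pm 1\}$; this minimality is the standard device (as in \cite{schrijver1998theory}) that forces strong structure, in particular that $B$ is nonsingular and that $B^{-1}$ has entries controlled by Cramer's rule with all proper minors equal to $0$ or $\pm 1$.

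First I would set up the bookkeeping: say $B$ uses row set $R \subseteq I$ and a set of columns $\dvec^1,\dots,\dvec^r \in S$ where $r = |R|$ and $|\det B| \ge 2$. The idea is to complete the $r$ chosen columns of $S$ to a full basis of $\mathbb{R}^I$ by adjoining standard basis vectors $\e{i}$ for $i \notin R$; because $B$ is nonsingular on the coordinates $R$, the vectors $\dvec^1,\dots,\dvec^r$ together with $\{\e{i} : i \notin R\}$ are linearly independent and thus form a candidate for $\svec^1,\dots,\svec^{|I|}$, all lying in $S \cup \{\e{i} \mid i \in I\}$ as required. With $G$ the matrix of this basis, $G^{-1}$ restricted to the relevant block is essentially $B^{-1}$, whose entries are ratios of $(r-1)\times(r-1)$ minors (all in $\{0,\pm 1\}$ by minimality) to $\det B$ (of absolute value $\ge 2$). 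The key point is that at least one row of $B^{-1}$ must contain entries that are \emph{not} all of one sign once expressed in $G^{-1}$-coordinates, and by choosing the witness vectors $\dvec, \dvec'$ to be columns of $B$ \emph{not} among the selected basis columns (here is where I use that $S$ has enough elements, giving $\dvec,\dvec' \in S \ssm \{\svec^1,\dots,\svec^{|I|}\}$), the coordinate vectors $G^{-1}\dvec$ and $G^{-1}\dvec'$ will exhibit entries of mixed sign.

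The main obstacle, and the crux of the claim, is producing \emph{two} witnesses with \emph{opposite} sign-pattern behavior on a common pair of indices $k,\ell$: one vector whose $k,\ell$ coordinates have the same sign and another whose $k,\ell$ coordinates have opposite signs. A pure determinant-$2$ argument gives a single vector with a non-$\{-1,0,1\}$ or sign-inconsistent coordinate representation, but the claim demands a matched pair realizing \emph{both} signs of the product on the \emph{same} index pair. I expect the resolution to hinge on a careful analysis of the $\{-1,0,1\}$ structure of $S$ together with the minimality of $B$: the fractional entries $\pm 1/\det B, \pm 2/\det B, \dots$ appearing in $G^{-1}\dvec$ cannot occur ``in isolation,'' and I would argue that the integrality of $\dvec$ itself (as $\dvec = \sum_m (G^{-1}\dvec)_m \svec^m$ must recombine to a $\{-1,0,1\}$ vector) forces the sign pattern across coordinate pairs. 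Concretely, I would look at how $\dvec$ decomposes in the basis $G$ and use that the decomposition coefficients sum, over the support, in a constrained way; the contrast between a ``same-sign'' pair and an ``opposite-sign'' pair should emerge by selecting $\dvec$ and $\dvec'$ as two distinct excluded columns of the minimal violating matrix and tracking how the non-unimodularity distributes its sign discrepancy across the two.

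To finish, I would verify the output conditions of the claim syntactically against the definitions: the $\svec^m$ are linearly independent by construction (so $G$ is invertible), $\dvec,\dvec'$ lie in $S$ and are excluded from the basis, and the indices $k,\ell$ are read off from the mixed-sign rows of $G^{-1}\dvec$ and $G^{-1}\dvec'$. I would double-check the edge case where the minimal violating submatrix does not immediately leave two spare columns in $S$, handling it by using that unit consistency and the closure of $S$ under negation supply additional elements $-\dvec \in S$ if needed; the negation-closure of the demand-type-style set $S$ is what guarantees enough room to choose the pair $\dvec,\dvec'$ distinctly. This claim is the combinatorial heart of Lemma~\ref{lem:nonUnimodToInconsistent}, and once it is in hand, translating the sign discrepancy into a genuine pair of inconsistent bundles (via the \ref{state:ourDef}$\implies$\ref{state:demType} direction of Lemma~\ref{lem:equivDefs}) is routine.
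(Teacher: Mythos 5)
Your overall strategy---locate a violating square submatrix $B$ of $S$, complete it to a basis of $\mathbb{R}^I$ using elementary basis vectors, and read off the sign patterns from coordinates in that basis---starts in the right place, but your construction breaks down at exactly the point you flag as the crux. If you place \emph{all} columns of $B$ into the basis $\mathbf{s}^1,\ldots,\mathbf{s}^{|I|}$, then in the extremal case where $S$ consists of nothing but the columns of $B$ (e.g., $S=\{(1,1,0),(0,1,1),(1,0,1)\}$ in $\{-1,0,1\}^3$, whose determinant is $2$) the set $S\smallsetminus\{\mathbf{s}^1,\ldots,\mathbf{s}^{|I|}\}$ is empty and there are no candidates for $\dvec,\dvec'$ at all. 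The claim assumes only that $S\subseteq\{-1,0,1\}^I$ fails to be totally unimodular---it is purely linear-algebraic, so neither unit consistency nor closure under negation is available to manufacture extra elements; and even if $-\mathbf{s}^m\in S$, it is useless as a witness because $G^{-1}(-\mathbf{s}^m)=-\e{m}$ has a single nonzero coordinate, so every product $(G^{-1}(-\mathbf{s}^m))_k(G^{-1}(-\mathbf{s}^m))_\ell$ with $k\neq\ell$ vanishes. Relatedly, once all of $B$ sits inside the basis, the coordinates of its columns are unit vectors, so the ``fractional entries of $B^{-1}$'' you want to exploit never appear in the coordinates of any available witness.

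The missing idea---which is the actual content of the claim---is that the basis must contain only \emph{some} of the columns of the violating submatrix, and the two witnesses must be columns of $B$ that are \emph{excluded} from it. The paper follows Schrijver's pivoting argument: form $C=[B\;\mathrm{Id}]$ and apply integer row operations keeping all entries of the $B$-block in $\{0,\pm 1\}$, turning as many columns of $B$ as possible into distinct elementary basis vectors. Because $|\det B|>1$, some column $\cvec$ resists; the obstruction to clearing its remaining nonzero entries is precisely another column $\cvec'$ such that rows $k,\ell$ of the pair $(\cvec,\cvec')$ form a $2\times 2$ submatrix with three entries of one sign and one of the other. That single pattern simultaneously yields $\cvecComp{k}\cvecComp{\ell}>0$ and $\cvecComp{k}'\cvecComp{\ell}'<0$ on a \emph{common} pair $(k,\ell)$---the matched pair you correctly identify as the difficulty but defer to ``a careful analysis'' that is never supplied. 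The basis $G$ is then assembled from the columns of $C$ that did reduce to unit vectors together with genuine elementary basis vectors, so that the relevant block of $G^{-1}\dvec$ and $G^{-1}\dvec'$ is literally $\cvec$ and $\cvec'$. Without this pivoting step your proposal has no mechanism for producing two opposite-signed products on the same index pair, so as written the proof has a genuine gap.
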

\begin{proof}
By removing vectors from $S$ if necessary,
we can assume that $S$ does not contain any elementary basis vectors or negations thereof.
By further removing vectors from $S$ if necessary,
we can assume that there exists a set $I' \subseteq I$ of goods such that the matrix with columns $S \cup \{\e{i} | i \in I \ssm I'\}$ is square of determininant not $0$ or $\pm 1$.

The remainder of the proof follows \citet[pages 270--271]{schrijver1998theory}.
Consider a square matrix $B$ whose columns are elements of $\{\svec_{I'} \mid \svec \in S\}$.
By construction, we have $|\det B| > 1.$
Consider the matrix $C = [B \,\, \mathrm{Id}_{|I'| \times |I'|}]$;
and a matrix $C'$ obtained from $C$ by adding and subtracting rows and multiplying rows by $-1$
such that (i) all entries in the first $|I|$ columns of $C'$ are $0$ or $\pm 1$,
(ii) the columns of $C'$ include the elementary basis vectors $\e{i}$ for $i \in I''$,
and (iii) the first $|I'|$ columns of $C'$ include as many elementary basis vectors as possible.

Since $|\det B| > 1,$
not all of the elementary basis vectors $\e{i}$ for $i \in I''$ can be among the first $|I'|$ columns of $C'$.
Consider a column $\cvec$ among the first $|I'|$ columns of $C'$ that is not an elementary basis vectors.
By construction, the column must contain a nonzero entry;
without loss of generality, suppose that it is in row $k$.
By (iii), it must be impossible to transform the column to $\e{k}$ by adding or subtracting row $k$ to or from other rows.
By (i), it follows that there is a column $\cvec'$ among the first $|I|$ columns and a row $\ell$ such that the $k$th and $\ell$th rows of columns $\cvec$ and $\cvec'$ form a submatrix with three entries of $1$ and one entry of $-1$, or vice versa.
Without loss of generality, assume that $\cvecComp{k}\cvecComp{\ell} > 0 > \cvecComp{k}'\cvecComp{\ell}'$.

For each $1 \le n \le |I''|$, suppose that the $\sigma(n)$th column of $C'$ is the $n$th elementary basis vector,
and let $\svec^n$ be the $\sigma(n)$th column of $C$.
Let $\svec^{|I'|+1},\ldots,\svec^{|I|}$ be the elementary basis vectors $\{\e{i} \mid i \in I \ssm I'\}$ in any order.
Similarly,
let $\dvec$ and $\dvecpr$ be the columns of $C$ corresponding to the columns $\cvec$ and $\cvec'$ of $C'$, respectively.
By construction, the first $|I'|$ components of $G^{-1} \dvec$ (resp.~$G^{-1} \dvecpr$) are given by $\cvec$ (resp.~$\cvec'$), and the lemma follows.
\end{proof}

\begin{proof}[Proof of Lemma~\ref{lem:nonUnimodToInconsistent}]
Let $S' \subseteq S$ be a minimal subset that is not totally unimodular.
Claim~\ref{cl:soupedUpSchrijver} implies that
there exist linearly independent vectors $\svec^1,\ldots,\svec^{|I|} \in S' \cup \{\e{i} \mid i \in I\}$,
vectors $\dvec,\dvec' \in S' \ssm \{\svec^1,\ldots,\svec^{|I|}\}$, and $1 \le k,\ell \le |I|$
such that letting $G$ be the $I \times |I|$ matrix whose columns are $\svec^1,\ldots,\svec^{|I|},$
we have that $(G^{-1} \dvec)_k (G^{-1} \dvec)_\ell > 0$ and that $(G^{-1} \dvec')_k (G^{-1} \dvec')_\ell < 0$.
The minimality of $S'$ implies that $G$ must have determinant $\pm 1$.

\newcommand\tutilFn{\widetilde{U}^j}
\newcommand\tutil[1]{\tutilFn(#1)}

For each agent $j$, define a transformed utility function $\tutilFn: \mathbb{R} \times (G^{-1} X^j) \to \mathbb{R}$ by $\tutil{x_0,\q} = \util{x_0,G\q}$.%
\footnote{This extends a construction of \citet*[page 885]{BaKl:19} to settings with income effects.}
By Proposition~\ref{prop:demTypeEquiv}\ref{part:demTypeOneDirectionConsistentEquiv},
all agents' preferences are of demand type $\D$ only if $\D \supseteq S$.
\citet[Proposition 3.11]{BaKl:19} then implies that all the transformed preferences are of demand type $\mathcal{D}$ only if $\mathcal{D} \supseteq G^{-1} S$.%
\footnote{\label{fn:binaryNonUnimodToInconsistent}The case of Lemma~\ref{lem:nonUnimodToInconsistent} for which each agent demands at most one unit of each good (and hence the same case of Proposition~\ref{prop:totallyunimodgen}) can be obtained without the use of Proposition~\ref{prop:demTypeEquiv} here by instead using \citet*[Theorem 1]{baldwin2021consumer}.  This argument, however, does not extend to the general case in which agents can demand more than one unit of each good.}
Since $(G^{-1} \dvec)_k (G^{-1} \dvec)_\ell > 0$,
the contrapositive of the \ref{state:ourDef}$\implies$\ref{state:demType} implication of Lemma~\ref{lem:equivDefs} implies that under the transformed utility functions,
the $k$th and $\ell$th commodities are not complements.
Similarly, since $(G^{-1} \dvec')_k (G^{-1} \dvec')_\ell < 0$,
under the transformed utility functions,
the $k$th and $\ell$th commodities are not substitutes.

Consider the bundling $\B = \{\svec^1,\ldots,\svec^{|I|}\}$.
The definition of the transformed utility functions implies that $\dHB{\tp}{u}{\B}$ is Hicksian demand for $\tutilFn$ at price vector $\tp$ and utility level $u$.
Hence, for the bundling $\B$, the bundles $\svec^k$ and $\svec^\ell$ are neither complements nor substitutes.
\end{proof}

To complete the proof of the ``only if'' directions of Proposition~\ref{prop:totallyunimodgen},
note bundle consistency implies that each agent sees each pair of goods as either complements or substitutes.
Let $S$ be the set of all compensated price effects for all agents and goods that lie in $\{-1,0,1\}^I$,
and let $\D = S \cup -S$.
As any bundling $\B \subseteq S \cup \{\e{i} \mid i \in I\}$ consists of relevant bundles,
it follows from Lemma~\ref{lem:nonUnimodToInconsistent} that $\D$ is totally unimodular.
Hence, $S$ must be totally unimodular~as~well.

\subsection{Proof of the ``if'' directions of Propositions~\ref{prop:DquasiconcaveEquiv} and~\texorpdfstring{\ref{prop:demTypeEquivTotallyUnimod}}{\ref{prop:DquasiconcaveEquiv}'}}

Consider a unit-consistent utility function $\utilFn$ all of whose compensated price effects that lie in $\{-1,0,1\}^I$ also lie in $\mathcal{D}$.
Proposition~\ref{prop:demTypeEquiv}\ref{part:demTypeOneDirectionNotConsistent} guarantees that $\utilFn$ is of demand type $\D$.
It remains to prove that $\utilFn$ is quasipseudoconcave.
By adding elementary basis vectors to $\D$ if necessary, %
we can assume that $\D$ includes all the elementary basis vectors.

Fix a utility level $u$;
we show that the Hicksian valuation $\valH{\cdot}{u}$ is pseudoconcave.
\eedlemma{} implies that demand under the valuation $\valH{\cdot}{u}$ is $\dH{\cdot}{u}$.
Hence, it suffices to show that for all price vectors $\p$, the set $\dH{\p}{u}$ includes all integer vectors in its convex~hull.

We prove this assertion by induction on the dimension of $\dH{\p}{u}$.
The base case, in which $\dH{\p}{u}$ has dimension 0, is obvious as $|\dH{\p}{u}| = 1$ in that case.

Suppose that the assertion holds for all price vectors $\ppr$ for which $\Conv \dH{\ppr}{u}$ has dimension $n$.
Let $\p$ be a price vector such that $\Conv \dH{\p}{u}$ has dimension $n + 1$.

To show that $\dH{\p}{u}$ contains all integer vectors in its convex hull,
we first show that $\dH{\p}{u}$ contains all integer vectors on the boundary of its convex hull.
Suppose that $\bun$ is an integer vector on the boundary of $\Conv \dH{\p}{u}$.
\citet[Proposition 2.16]{BaKl:19} implies that there exists a price vector $\ppr$ such that
$\Conv \dH{\ppr}{u}$ has dimension $n$ and includes $\bun$.
The induction hypothesis implies that
$\bun \in \dH{\ppr}{u}.$
Thus, in the single-agent transferable utility economy with valuation $\valH{\cdot}{u}$ and endowment ${\bf w}^j = \bun^1$, the price vector $\ppr$ is the price vector of a competitive equilibrium and $\zero$ lies in the convexification of aggregate excess demand at $\p$;
Fact~\ref{fac:pseudoequilPrices} %
hence implies that
$\bun \in \dH{\p}{u}$.%
\footnote{\citet[Lemma 2.6]{BaKl:14} provides an alternative statement of Fact~\ref{fac:pseudoequilPrices} that allows one to bypass the construction of a single-agent economy.}

It remains to prove that $\dH{\p}{u}$ contains all integer vectors in the relative interior of its convex hull.
Let $\bun$ be an integer vector in the relative interior of $\Conv \dH{\p}{u}.$
We need to show that $\bun$ lies in $\dH{\p}{u}$.

As $\dH{\p}{u} \subseteq [0,M]^I$, there exists a set $I' \subseteq I$ of goods %
such that the rectangular prism %
\[P = \Conv \left\{\bun -  M \sum_{i \in I'} \e{i} + 2M \sum_{i \in I''} \e{i} \,\Bigg|\, I'' \subseteq I'\right\}\]
intersects $\Conv \dH{\p}{u}$ in a line segment,
say with vertices $\bun^1$ and $\bun^2$.
(Units of the goods in $I \ssm I'$ will be regarded as separate items.)
Since $\mathcal{D}$ is totally unimodular and the edges of $P$ are elementary basis vectors,
the set $P \cap \mathbb{Z}^I$ is $\mathcal{D}$-convex.
Since $\valH{\cdot}{u}$ is of demand type $\mathcal{D}$,
\citet[Proposition 2.16]{BaKl:19} implies that $\big(\Conv \dH{\p}{u}\big) \cap \mathbb{Z}^I$ is also $\D$-convex (see also \citet*[Footnote 35]{baldwin2020equilibrium}).
\citet*[Theorem 1]{danilov2004discrete} then guarantees that the polytope $\left(\Conv \dH{\p}{u}\right) \cap P$ has integer vertices.
Thus, $\bun^1$ and $\bun^2$ are integer vectors.
As $\bun^1$ and $\bun^2$ lie %
on the boundary of $\Conv \dH{\p}{u}$,
they must therefore lie in $\dH{\p}{u}$. %

\newcommand\doverline[1]{\overline{\overline{#1}}}
\newcommand\doutilFn{\doverline{U}^j}
\newcommand\doutil[2]{\doutilFn(#1)}
\newcommand\dodHFn{\dodQLFn_{\mathrm{H}}}
\newcommand\dodH[3]{\dodHFn(#1;#2)}
\newcommand\dodQLFn{\doverline{D}^j}
\newcommand\dodQL[2]{\odQLFn(#1)}

\newcommand\doutilFnMult\doutilFn

To complete the proof of the inductive step,
we first regard the units of items in $I \ssm I'$ as separate items,
then perturb prices of items from the price vector ${\bf p}$ to restrict demand to (a transformation of) the line segment between $\bun^1$ and $\bun^2$,
and conclude by using Lemma~\ref{lem:unitConsistentConsecutiveInteger} to show that demand must include $\bun$.

Formally, %
the set of items is $\doverline{I} = I' \cup ((I \ssm I') \times \{1,2,\ldots,M\})$.
Analogously to how we set up regarding all units of all goods as separate items in Section~\ref{sec:mult},
given a bundle $\doverline{\bun} \in \{0,1,\ldots,M\}^{I'} \times \{0,1\}^{(I \ssm I') \times \{1,2,\ldots,M\}}$ of items,
there is a corresponding bundle $\bun = \pi^{I \ssm I'}(\doverline{\bun})$ of goods
defined by
\[\pi^{I \ssm I'}(\doverline{\bun})_i = \begin{cases}
\sum_{m=1}^M \doverline{\bun}_{(i,m)} & \text{if } i \in I \ssm I'\\
\doverline{\bun}_i & \text{if } i \in I'
\end{cases}.\]
Consider the utility function $\doutilFnMult: \mathbb{R} \times (\pi^{I \ssm I'})^{-1}(X^j) \to \mathbb{R}$
for items and money defined by
\[\doutil{x_0,\doverline{\bun}}{I \ssm I'} = \util{x_0,\pi^{I \ssm I'}(\doverline{\bun})}.\]
We also consider a right inverse $\tau^{I \ssm I'}$ to $\pi^{I \ssm I'},$
which given a bundle $\buny \in \{1,2,\ldots,M\}^I$ of goods,
recovers a corresponding bundle of items by taking the lowest indices of items:
\begin{align*}
\tau^{I \ssm I'}(\buny)_{(i,m)} &= 1 & \text{for } i \in I \ssm I' \text{ and } m \le y_i\phantom{.}\\
\tau^{I \ssm I'}(\buny)_{(i,m)} &= 0 & \text{for } i \in I \ssm I' \text{ and } m > y_i\phantom{.}\\
\tau^{I \ssm I'}(\buny)_i &= y_i & \text{for } i \in I'.
\end{align*}

Due to the finiteness of $X^j$ and since $\bun^1 \in \dH{\p}{u}$, there exists $\varepsilon > 0$ such that
\[\valH{\buny}{u} - \p \cdot \buny < \valH{\bun^1}{u} - \p \cdot \bun^1 - M|I| \varepsilon
\quad \text{for all }\buny \notin \dH{\p}{u}.\]
Consider the price vector $\doverline{\p} \in \mathbb{R}^{\doverline{I}}$ defined by
\begin{align*}
\doverline{p}_i &= \pComp{i} & \text{for } i \in I'\phantom{.}\\
\doverline{p}_{(i,m)} &= \pComp{i} - \varepsilon & \text{for } i \in I \ssm I' \text{ and } m \le \bunComp{i}\phantom{.}\\
\doverline{p}_{(i,m)} &= \pComp{i} + \varepsilon & \text{for } i \in I \ssm I' \text{ and } m > \bunComp{i};
\end{align*}
intuitively,
this price vector perturbs ${\bf p}$ to make $\tau^{I \ssm I'}(\bun)$ is the unique cheapest bundle of items that recovers $\bun$.
By construction, we have that $\pi^{I \ssm I'}(\dodH{\doverline{\p}}{u}{I \ssm I'}) \subseteq \dH{\p}{u}$.
Moreover, due to the definition of expenditure minimization, we have that
\[\dodH{\doverline{\p}}{u}{I \ssm I'} = \argmin_{\doverline{\bun} \in (\pi^{I \ssm I'})^{-1}(\dH{\p}{u})} \left\{\sum_{i \in I'} \sum_{m=1}^M \doverline{x}_{(i,m)} (\doverline{p}_{i,m} - \pComp{i})\right\}.\]
The optimum of the minimization problem is $-\varepsilon \sum_{i \in I'} \bunComp{i}$,
which is achieved for example by $\tau^{I \ssm I'}(\bun^1)$ and $\tau^{I \ssm I'}(\bun^2)$.
The construction of $\doverline{\p}$ as a perturbation from ${\bf p}$ that lowers the prices of the first $\bunComp{i}$ units of each good $i \in I \ssm I'$ and raises the prices of the other units of each good $i \in I \ssm I$' hence ensures that
\begin{equation}
\label{eq:doverlinePartialLineSegment}
\dodH{\doverline{\p}}{u}{I \ssm I'} = \left\{\doverline{\bun} \in (\pi^{I \ssm I'})^{-1}(\dH{\p}{u}) \,\Big|\, \doverline{x}_{(i,m)} = \mathbbm{1}_{m \le \bunComp{i}} \text{ for } i \in I' \text{ and } 1 \le m \le M\right\}.
\end{equation}

It follows from the definitions of $\bun^1,\bun^2$ that
\[\dodH{\doverline{\p}}{u}{I \ssm I'} = (\pi^{I \ssm I'})^{-1}(\dH{\p}{u}) \cap \Conv \{\tau^{I \ssm I'}(\bun^1),\tau^{I \ssm I'}(\bun^2)\}.\]
Note that $\tau^{I \ssm I'}(\bun) \in \Conv \{\tau^{I \ssm I'}(\bun^1),\tau^{I \ssm I'}(\bun^2)\} = \Conv \dodH{\doverline{\p}}{u}{I \ssm I'}$ by construction,
but that $\tau^{I \ssm I'}(\bun) \notin \dodH{\doverline{\p}}{u}{I \ssm I'}$ since $\bun \notin \dH{\p}{u}$.

Let $i$ be a good with $\bunComp{i}^1 \not= \bunComp{i}^2$.
By construction, we must have that $i \notin I'$,
and $\bunComp{i}$ must lie between $\bunComp{i}^1$ and $\bunComp{i}^2.$
Note also since $\utilFn$ is unit-consistent,
$\doutilFnMult$ is unit-consistent as well.
Hence, Lemma~\ref{lem:unitConsistentConsecutiveInteger}
implies that there exists $\doverline{\bun} \in \dodH{\doverline{\p}}{u}{I \ssm I'}$ with $\doverline{x}_i = \bunComp{i}$.

\eqref{eq:doverlinePartialLineSegment} then implies that $\pi^{I \ssm I'}(\doverline{\bun}) \in \dH{\p}{u} \cap P$.
In particular, $\pi^{I \ssm I'}(\doverline{\bun})$ lies on the line segment between $\bun^1$ and $\bun^2$.
But since $\bunComp{i}^1 \not= \bunComp{i}^2$ and $\pi^{I \ssm I'}(\doverline{\bun})_i = \doverline{x}_i = \bunComp{i},$
we must therefore have that $\bun = \pi^{I \ssm I'}(\doverline{\bun}) \in \dH{\p}{u}.$
This completes the proof of the inductive step.

\subsection{Proof of the ``if'' directions of Propositions~\ref{prop:dkmRelationshipConsistentToDc} and~\texorpdfstring{\ref{prop:demTypeRelationshipConsistentToUnimod}}{\ref{prop:dkmRelationshipConsistentToDc}'}}

Let $S$ be the set of all compensated price effects for all agents and goods that lie in $\{-1,0,1\}^I$,
and let $\D = S \cup -S$.
The ``only if'' direction of Proposition~\ref{prop:totallyunimodgen} implies that $S$ (and hence $\D$) is totally unimodular.
By the ``if'' direction of Proposition~\ref{prop:DquasiconcaveEquiv},%
\footnote{\label{fn:dkmRelationship}To prove the ``if'' directions of Propositions~\ref{prop:dkmRelationshipConsistentToDc} and~\ref{prop:demTypeRelationshipConsistentToUnimod} in the case in which each agent demand at most one unit of each good,
one can use \citet*[Theorem 1]{baldwin2021consumer} instead of the ``if'' direction of Proposition~\ref{prop:DquasiconcaveEquiv} here.}
all agents' preferences are $\D$-quasiconcave---hence, belong to a single class of discrete~convexity.

\subsection{Proof of Theorems~\ref{thm:existBinary} and~\ref{thm:existGen}}

Theorem~\ref{thm:existBinary} is a special case of Theorem~\ref{thm:existGen},
so it suffices to prove Theorem~\ref{thm:existGen}.

By the ``if'' direction of Proposition~\ref{prop:dkmRelationshipConsistentToDc},
the preferences must lie in a single class of discrete convexity.
\citet*[Theorem 2 and 4]{DaKoMu:01} or \citet*[Theorem 3]{baldwin2020equilibrium} therefore guarantees that competitive equilibria exists.

\subsection{Proof of Theorems~\ref{thm:neccBinary} and~\ref{thm:neccGen}}

Theorem~\ref{thm:neccBinary} is a special case of Theorem~\ref{thm:neccGen},
so it suffices to prove Theorem~\ref{thm:neccGen}.

Consider an invariant domain $\mathcal{V}$.
We prove Theorem~\ref{thm:neccGen} in three steps.%
\footnote{\label{fn:binaryNecc}To prove Theorem~\ref{thm:neccBinary} alone, one can skip the first step;
by using \citet*[Theorem 1]{baldwin2021consumer} instead of Proposition~\ref{prop:demTypeEquiv}\ref{part:demTypeOneDirectionConsistentEquiv}, one can also circumvent the second step.}
\begin{enumerate}
\item We first show that if competitive equilibria exist in all economies in which agents have valuations $\mathcal{V}$,
then the valuations in $\mathcal{V}$ must be unit-consistent.
\item We then show that if the valuations in $\mathcal{V}$ are unit-consistent and
competitive equilibria exist in all economies in which agents have valuations in $\mathcal{V}$,
then each pair of goods must be consistent---i.e., either complements or substitutes.
\item We next show that if the valuations in $\mathcal{V}$ are unit-consistent, each pair of goods is consistent, and
competitive equilibria exist in all economies in which agents have valuations in $\mathcal{V}$,
then the valuations in $\mathcal{V}$ must be bundle-consistent.
\end{enumerate}

\newcommand\ovalFn{\overline{V}^j}

\paragraph*{Necessity of unit consistency.}
The proof of this step is similar in spirit to the proof of \citet[Theorem 16]{MiSt:09}, but somewhat more involved since we allow for complementarities across goods.

Suppose for sake of deriving a contradiction that there exists a valuation $\valFn \in \mathcal{V}$ that is not unit-consistent.
Applying the \ref{state:demType}$\implies$\ref{state:ourDef} implication of Lemma~\ref{lem:equivDefs} %
to the corresponding valuation $\ovalFn$ when each unit of each good is regarded as a separate item,
we see that there exists a price vector $\overline{\p} \in \mathbb{R}^{I \times \{1,2,\ldots,M\}}$
such that $\Conv \odQL{\p}$ is one-dimensional and parallel to a vector $\dvec$ with positive components for two items that correspond to units of the same good.
By invariance,
by adding a linear valuation to $\valFn$ if necessary,
we can assume that $\overline{\p} \in \mathbb{R}_{> 0}^{I \times M}$.
Let $\varepsilon = \frac{1}{2} \min_{i,m} \overline{p}_{i,m}.$

\citet*[Proposition 2.20]{BaKl:19} implies that by perturbing $\p$ and permuting the labels of items that correspond to the units of the same good if necessary,
we can assume that $\overline{p}_{i,m} < \overline{p}_{i,m'}$ for all goods $i$ and all $m < m'$.
Defining a right inverse to $\pi,$
which given a bundle $\bun \in \{1,2,\ldots,M\}^I$ of goods,
associates a bundle of items defined by
\[\tau(\bun)_{(i,m)} = \begin{cases}
1 & \text{if } m \le \bunComp{i}\\
0 & \text{if } m \ge \bunComp{i}
\end{cases},\]
we see that $\overline{\p} \cdot \overline{\bun} \ge \overline{\p} \cdot \tau(\pi(\overline{\bun}))$ with equality if and only if $\overline{\bun} = \tau(\pi(\overline{\bun})).$
Hence,
due to the construction of $\ovalFn$,
we have that $\overline{\bun} = \tau(\pi(\overline{\bun}))$ for all $\overline{\bun} \in \odQL{\p}$.

Since $\Conv \odQL{\overline{\p}}$ is one-dimensional and $\odQL{\overline{\p}} \subseteq \{0,1\}^{I \times \{1,2,\ldots,M\}},$
we must have that $|\odQL{\p}| = 2$.
Let $\odQL{\overline{\p}} = \{\tau(\bun),\tau(\bunpr)\}$.
Due to the hypothesis that $\dvec$ has positive components for two items that correspond to units of the same good,
by exchanging $\bun,\bunpr$ if necessary,
we can ensure that there exists a good $k$ with $\bunprComp{k} - \bunComp{k} \ge 2$.
Consider the endowment of goods ${\bf w}^j = \bun$
and the endowment of items $\overline{{\bf w}}^j = \tau(\bun)$.

For each $1 \le m \le M,$ consider the price vector $\p^m \in \mathbb{R}^I$ defined by
\[p^m_i = \begin{cases}
\overline{p}_{(i,m)} & \text{if } i \not= k \text{ or } m > \bunprComp{k}\\
\varepsilon & \text{if } i = k \text{ and } m < \bunprComp{k}\\
0 & \text{if } i = k \text{ and } m = \bunprComp{k}
\end{cases}\]
and consider the valuation for goods $V^m: \{0,1\}^I \to \mathbb{R}$ defined by $V^m(\bun) = \p^m \cdot \bun$.
Last, consider the valuation for goods $V^{M+1}: \{0,1\}^I \to \mathbb{R}$ defined by $V^{M+1}(\bun) = \overline{p}_{(k,\bunprComp{k})} \bunComp{k}$.
Consider the endowment of goods ${\bf w}^{M+1} = \zero$,
and, for $1 \le m \le M$, the endowment of goods ${\bf w}^m$ defined by
\[w^m_i = \begin{cases}
1 & \text{if } m > \bunComp{i}\\
0 & \text{otherwise}
\end{cases}.\]

For $1 \le m \le M,$
consider the valuation for items $\tilde{V}^m$ with domain \[\tilde{X}^m = \{0,1\}^{I \times \{m\}} \times \{0\}^{I \times \{1,2,\ldots,m-1,m+1,\ldots,M\}}\]
defined by $\tilde{V}^m(\overline{\bun}) = V^m(\pi(\overline{\bun}))$.
Consider also the valuation for items $\tilde{V}^{M+1}$ with domain $\tilde{X}^{M+1} = \{\zero\} \cup \{\e{(k,m)} \mid \bunComp{k} < m \le \bunprComp{k}\}$
defined by $\tilde{V}^{M+1}(\overline{\bun}) = \overline{\p} \cdot \overline{\bun}$.
Consider the endowments of items $\overline{{\bf w}}^{M+1} = \zero$,
and, for $1 \le m \le M$, the endowment of items $\overline{{\bf w}}^m$ defined by
\[\overline{w}^m_{(i,m')} = \begin{cases}
w^m_{(i,m')} & \text{if } m' = m\\
0 & \text{otherwise}
\end{cases}.\]

\newcommand\hobun{\hat{\overline{\bun}}}
\newcommand\hobunComp[1]{\hat{\overline{x}}_{#1}}
\newcommand\hop{\hat{\overline{\p}}}
\newcommand\hopComp[1]{\hat{\overline{p}}_{#1}}

Consider the auxiliary economy with set of goods $I \times M$ and agents $\{1,\ldots,M+1\} \cup \{j\}$
and valuations $\tilde{V}^1,\ldots,\tilde{V}^{M+1},\tilde{V}^j$,
where we write $\tilde{V}^j = \overline{V}^j$.
At price vector $\overline{\p}$,
the aggregate excess demand of agents $1,2,\ldots,M$ is
\[\left\{\overline{\bun} \in \{-1,0,1\}^{I \times M} \,\left|\, \begin{array}{l}
\overline{x}_{(k,m)} = 0 \text{ for } m \le \bunComp{k}, \overline{x}_{(k,m)} = -1 \text{ for } \bunComp{k} < m \le \bunprComp{k} ,\\
\overline{x}_{(i,m)} \ge 0 \text{ for } m \le \bunComp{i}, \text{ and } \overline{x}_{(i,m)} \le 0 \text{ for } m > \bunComp{i}
\end{array}\right.
\right\}.\]
At the same price vector,
excess demand of agent $M+1$ is $\{\e{(k,m)} \mid \bunComp{k} < m \le \bunprComp{k}\}$,
and excess demand of agent $j$ is $\{\zero,\tau(\bunpr) - \tau(\bun)\}$.
Note also that
\begingroup
\allowdisplaybreaks
\begin{align*}
\zero =& \left[\frac{1}{\bunprComp{k} - \bunComp{k}} \left(- \sum_{m = \bunComp{k}}^{\bunprComp{k}} \e{(k,m)}\right) + \left(1-\frac{1}{\bunprComp{k} - \bunComp{k}}\right) \left(\tau(\bun) - \tau(\bun')\right)\right]\\
& + \left[\sum_{m=\bunComp{k}}^{\bunprComp{k}} \frac{1}{\bunprComp{k} - \bunComp{k}} \e{(k,m)}\right] + \left[\frac{1}{\bunprComp{k} - \bunComp{k}}\zero + \left(1-\frac{1}{\bunprComp{k} - \bunComp{k}}\right) (\tau(\bunpr) - \tau(\bun))\vphantom{\sum_{m=\bunComp{k}}^{\bunprComp{k}}}\right].
\end{align*}
\endgroup
The vectors $- \sum_{m = \bunComp{k}}^{\bunprComp{k}} \e{(k,m)}$ and $\tau(\bun) - \tau(\bun')$ both lie in the aggregate excess demand of agents $1,2,\ldots,M$ by construction.
Therefore, the convexification of aggregate excess demand at $\overline{\p}$ contains $\zero$.
However, at each selection from aggregate excess demand at $\overline{\p}$, the total aggregate excess demand for items $(k,\bunComp{k}+1),(k,\bunComp{k}+2),\ldots,(k,\bunprComp{k})$ is either $1$ or $1 + \bunComp{k} - \bunprComp{k} \le -1$.
In particular, $\overline{\p}$ is an not a competitive equilibrium price vector.
Hence, by the contrapositive of Fact~\ref{fac:pseudoequilPrices},
no competitive equilibria exist in the auxiliary economy. 

To obtain a contradiction, we claim that competitive equilibrium in fact exist in the auxiliary economy.
To show this, we consider the original economy with set of goods $I$ and agents $\{0,1\ldots,M+1\} \cup \{j\}$.
The invariance of $\mathcal{V}$ ensures that $V^m \in \mathcal{V}$ for $0 \le m \le M+1$.
Hence, by hypothesis,
there exists a competitive equilibrium, say with allocation
$\hbun^0,\hbun^1,\ldots,\hbun^{M+1},\hbun^j$ and price vector $\hp$.
As
\[w^j_i + \sum_{m=1}^{M+1} w^m_i = M\]
for all goods $i$ by construction,
and $V^1,\ldots,V^M$ each place value at least $\varepsilon$ on a first unit of each good $i \not= k$,
the First Welfare Theorem implies that $\hbunComp{i}^{M+1} = 0$ for $i \not= k$.
Moreover,
as $V^1,\ldots,V^{\bunprComp{k}-1},V^{\bunprComp{k}+1},\ldots,V^{M+1}$ each place value at least $\varepsilon$ on a first unit of good $k$,
the First Welfare Theorem also implies that $\hbunComp{k}^{\bunprComp{k}} = 0$.

Consider the price vector $\hop \in \mathbb{R}^{I \times M}$ defined by $\hopComp{(i,m)} = \hpComp{i}$.
Consider also the allocation in the auxiliary economy defined by
$\hobun^{M+1} = \hobunComp{k}^{M+1} \e{(k,\bunprComp{k})}$; for $1 \le m \le M$, the consumption vector of items $\hobun^m$ defined by
\[\hobunComp{(i,m')}^m = \begin{cases}
\hbunComp{i}^m & \text{if } m' = m\\
0 & \text{otherwise}
\end{cases};\]
and the consumption vector $\hobun^j$ of items defined by
\[\hobunComp{(i,m)}^j = \begin{cases}
1 - \hobunComp{(i,m)}^{M+1} & \text{if } i = k \text{ and } m = \bunprComp{k}\\
1 - \hobunComp{(i,m)}^m & \text{otherwise}
\end{cases}.\]

We claim that the allocation $\hbun^1,\ldots,\hbun^{M+1},\hbun^j$ and the price vector $\hop$ comprise a competitive equilibrium in the auxiliary economy.
Since $\hbunComp{i}^{M+1} = 0$ for $i \not= k$,
we have that $\pi(\hobun^m) = \hbun^m$ for all $1 \le m \le M+1$.
We also have that $\hbun^m \in \tilde{X}^m$ for all $1 \le m \le M+1$.
In particular, as $\tilde{V}^m(\overline{\bun}) = V^m(\pi(\overline{\bun}))$ for all $1 \le m \le M+1$ and $\overline{\bun} \in \tilde{X}^m$,
it follows that $\hobun^m \in \tilde{D}^m(\hop)$ for $1 \le m \le M+1$.
Since furthermore $\hbunComp{k}^{\bunprComp{k}} = 0$, we have that
\[\hobunComp{(i,m')}^j + \sum_{m=1}^{M+1} \hobunComp{(i,m')}^m = 1 = \overline{w}^j_{(i,m')} + \sum_{m=1}^{M+1} \overline{w}^m_{(i,m')}\]
for all items $(i,m')$.
As market-clearing in the original economy entails that
\[\hbunComp{i}^j + \sum_{m=1}^{M+1} \hbunComp{i}^m = w^j_i + \sum_{m=1}^{M+1} w^m_i\]
for all goods $i$,
it follows that $\pi(\hobun^j) = \hbun^j$,
and hence that $\hobun^j \in \tilde{D}^j(\hop)$.
Therefore, the allocation $\hbun^1,\ldots,\hbun^{M+1},\hbun^j$ and the price vector $\hop$ comprise a competitive equilibrium in the auxiliary economy---contradicting the non-existence of competitive equilibria in the auxiliary economy (obtained at an earlier stage of the argument).
Hence, we can conclude that the valuations in $\mathcal{V}$ must in fact be unit-consistent.

\paragraph*{Necessity of consistency.}
This proof of this step is based on \citeposs{KeCr:82} example of how inconsistency between goods can obstruct equilibrium existence,
as generalized by \cite{GuSt:99} to prove a ``maximal domain'' result for substitutes.

We prove the contrapositive.
Suppose that $\mathcal{V}$ is unit-consistent, but goods $k$ and $\ell$ are inconsistent;
we show that there exist valuations in $\mathcal{V}$ and endowments for which competitive equilibria do not exist.

Let $V^1 \in \mathcal{V}$ (resp.~$V^2 \in \mathcal{V}$) be a valuation for which goods $k$ and $\ell$ are not substitutable (resp.~not complementary) given bundling $\B$.
By the \ref{state:demType}$\implies$\ref{state:ourDef} implication of Lemma~\ref{lem:equivDefs}, if $V^1$ (resp.~$ V^2$) is of demand type $\D$,
there must exist $\dvec \in \D$ with $\dvecComp{k}\dvecComp{\ell} > 0$ (resp.~$\dvecComp{k}\dvecComp{\ell} < 0$).
Thus, there exists a price vector $\p^1$ (resp. $\p^2$) such that $\Conv D^1(\p^1)$ (resp.~$\Conv D^2(\p^2)$) is one-dimensional and parallel to an integer vector $\dvec^1$ (resp.~$\dvec^2$) with $\dvecComp{k}^1\dvecComp{\ell}^1 > 0$ (resp.~$\dvecComp{k}^2\dvecComp{\ell}^2 < 0$).
By invariance,
we can add linear functions to $V^1,V^2$ to ensure that $\p^1 = \p^2 = \p \in \mathbb{R}_{>0}^I$.

Without loss of generality,
we can assume that the components of $\dvec^1$ (resp. $\dvec^2$) have no common factors, and that $\dvecComp{k}^1, \dvecComp{k}^2> 0$.
Proposition~\ref{prop:demTypeEquiv}\ref{part:demTypeOneDirectionNotConsistent} implies that $\dvec^1,\dvec^2 \in \{-1,0,1\}^I$.
In particular, we must have that $\dvecComp{k}^1 = \dvecComp{\ell}^1 = \dvecComp{k}^2 = 1$ and that $\dvecComp{\ell}^2 = -1$.

For $1 \le j \le 2$,
let one endpoint of $\Conv D^j(\p)$ be $\bundowj$ such that $\Conv \dQL{\p}$ is a subset of the ray with endpoint $\bundowj$ and slope $\dvec^j$.
Consider also the valuation $V^{3}: \{0,1\}^I \to \mathbb{R}$ defined by $V^3(\bun) = \hp \cdot \bun,$
where $\hp = (0_{k,\ell},\p_{I \ssm \{k,\ell\}}),$
and the consumption vector
\[\bundow^{3} = \begin{cases}
1 & \text{if } \dvecComp{i}^1 + \dvecComp{i}^2 > 0\\
0 & \text{if } \dvecComp{i}^1 + \dvecComp{i}^2 \le 0
\end{cases}.\]
The valuation $V^{3}$ is in the domain $\mathcal{V}$ due to the invariance of $\mathcal{V}$.

Consider the economy with agents $\{1,2,3\}$.
By construction,
excess demand for agent 1 (resp.~agent 2) at the price vector $\p$ is a subset of $\mathbb{R}_{\ge 0} \dvec^1$ (resp.~$\mathbb{R}_{\ge 0} \dvec^2$) that includes at least one nonzero integer vector.
In particular, $\frac{1}{2}\dvec^1$ (resp.~$\frac{1}{2}\dvec^2$) lies in the convex hull of excess demand for agent 1 (resp.~agent 2).
Since $\dvecComp{k}^1 = \dvecComp{\ell}^1 = \dvecComp{k}^2 = 1$ and that $\dvecComp{\ell}^2 = -1$,
agent 3's excess demand at $\p$ is
\[\left\{\bun \in \{-1,0,1\}^I \mid \bunComp{k} = -1, \, \bunComp{\ell} = 0, \, \bunComp{i} \le 0 \text{ for } \dvecComp{i}^1 + \dvecComp{i}^2 > 0, \, \text{and } \bunComp{i} \ge 0 \text{ for } \dvecComp{i}^1 + \dvecComp{i}^2 \le 0\right\}.\]
In particular, $\frac{-\dvec^1-\dvec^2}{2}$ lies in the convex hull of excess demand for agent 3.
Thus, $\zero$ lies in the convex hull of aggregate excess demand at price vector $\p$.

However, since $\dvecComp{k}^1 = \dvecComp{\ell}^1 = \dvecComp{k}^2 = 1$,
at each selection from excess demand at $\p$ for agent 1 (resp.~agent 2, agent 3), the total excess demand for goods $k$ and $\ell$ is even (resp.~even, odd).
In particular, $\p$ is an not a competitive equilibrium price vector.
Hence,
by the contrapositive of Fact~\ref{fac:pseudoequilPrices},
no competitive equilibria can exist.

\paragraph*{Necessity of bundle consistency.}
The proof of this step is based on applying a similar construction to the previous step,
but where preferences for agents other than the two exhibiting a bundle inconsistency are constructed to be effectively linear in bundles in the bundling that exhibits a bundle inconsistency, rather than additive.

We prove the contrapositive.
Suppose that $\mathcal{V}$ is unit-consistent and satisfies the property that each pair of goods is consistent, but that $\mathcal{V}$ is not bundle-consistent;
we show that there exist valuations in $\mathcal{V}$ and endowments for which competitive equilibria do not exist.

Let $\B$ be a bundling consisting of relevant bundles that contains two bundles that are inconsistent.
Among all such bundlings,
consider one that includes as many elementary basis vectors as possible.
Let $\B = \{\b^1,\ldots,\b^{|I|}\}$ where $\b^1$ and $\b^2$ are inconsistent bundles.

Consider the matrix $G$ whose columns are $\b^1,\ldots,\b^{|I|}$ in that order.
It follows from Lemma~\ref{lem:nonUnimodToInconsistent} that $\B$ is totally unimodular.
Hence, the determinant of $G$ is $\pm 1$.

Let $V^1 \in \mathcal{V}$ (resp.~$V^2 \in \mathcal{V}$) be a valuation for which $\b^1$ and $\b^2$ are not substitutable (resp.~complementary) given bundling $\B$.
Given a valuation $\valFn$,
define a transformed $G^*\valFn: G^{-1} X^j \to \mathbb{R}$ by $(G^*\valFn)(\q) = \val{G \q}$
\citep[page 885]{BaKl:19}.
By construction, demand for $G^*\valFn$ at price vector $\tp$ is $\dB{\p}{\B}$. 
Hence,
the first and second goods are not substitutes (resp.~complements) for $G^*V^1$ (resp.~$G^*V^2$).
By the \ref{state:demType}$\implies$\ref{state:ourDef} implication of Lemma~\ref{lem:equivDefs}, if $G^*V^1$ (resp.~$G^* V^2$) is of demand type $\D$,
there must exist $\dvec \in \D$ with $\dvecComp{1}\dvecComp{2} > 0$ (resp.~$\dvecComp{1}\dvecComp{2} < 0$).
\citet[Proposition 3.11]{BaKl:19} implies that $\valFn$ is of demand type $\D$ if and only if that $G^*\valFn$ is of demand type $\D$.
Hence, if $V^1$ (resp.~$V^2$) is of demand type $\D$,
then there must exist $\dvec \in \D$ with $(G^{-1} \dvec)_1 (G^{-1} \dvec)_2 > 0$ (resp.~$(G^{-1} \dvec)_1 (G^{-1} \dvec)_2 < 0$).
Thus, there exists a price vector $\p^1$ (resp. $\p^2$) such that $\Conv D^1(\p^1)$ (resp.~$\Conv D^2(\p^2)$) is one-dimensional and parallel to an integer vector $\dvec^1$ (resp.~$\dvec^2$) whose components have no non-trivial common factors such that $(G^{-1} \dvec^1)_1 (G^{-1} \dvec^1)_2 > 0$ (resp.~$(G^{-1} \dvec^2)_1 (G^{-1} \dvec^2)_2 < 0$).

Without loss of generality,
we can assume %
that $(G^{-1} \dvec^1)_1, (G^{-1} \dvec^2)_1> 0$.
By Proposition~\ref{prop:demTypeEquiv}\ref{part:demTypeOneDirectionNotConsistent}, $\pm \dvec^1$ is a price effect for $V^1 \in \mathcal{V}$ and an element of $\{-1,0,1\}^I$.
Due to the choice of $\B$,
applying the contrapositive of Lemma~\ref{lem:nonUnimodToInconsistent} to $S = (\B \cup \{\dvec^1\}) \ssm \{\e{i} \mid i \in I\}$,
we see that $\B \cup \{\dvec^1\}$ must be unimodular.
Applying $G^{-1}$,
we see that $\{\e{1},\ldots,\e{|I|},G^{-1} \dvec^1\}$ is  unimodular.
Hence, the components of $G^{-1} \dvec^1$ must each be $0$ or $\pm 1$.
Similarly, the components of $G^{-1} \dvec^2$ must each be $0$ or $\pm 1$.

In particular, we must have that $(G^{-1} \dvec^1)_1 = (G^{-1} \dvec^1)_2 = (G^{-1} \dvec^2)_1 = 1$ and that $(G^{-1} \dvec^2)_2 = -1$.
It follows that
\[\frac{1}{2} G^{-1} \dvec^1 + \frac{1}{2} G^{-1} \dvec^2 + \sum_{j = 3}^{|I|} \frac{-\left[(G^{-1}\dvec^1)_j + (G^{-1}\dvec^2)_j\right]}{2} \e{j} = \e{1}.\]
Multiplying both sides by $G$ on the left yields that
\begin{equation}
\label{eq:neccGenBundConsistentPseudoEquil}
\frac{1}{2} \dvec^1 + \frac{1}{2} \dvec^2 + \sum_{j = 3}^{|I|} \frac{-\left[(G^{-1}\dvec^1)_j + (G^{-1}\dvec^2)_j\right]}{2} \b^j = \b^1,
\end{equation}
where we have $\left|\frac{(G^{-1}\dvec^1)_j + (G^{-1}\dvec^2)_j}{2}\right| \le 1$ for $3 \le j \le |I|$ since $|(G^{-1}\dvec^1)_j|,|(G^{-1}\dvec^2)_j| \le 1.$

Since $\mathcal{V}$ contains the zero valuation $V_0$,
each bundle in $\B$ corresponds to a price effect for some valuation in $\mathcal{V}$.
For $3 \le j \le |I|$,
consider a valuation $\valFn$ for which $\b^j$ is a price effect.
By Proposition~\ref{prop:demTypeEquiv}\ref{part:demTypeOneDirectionConsistentEquiv},
$\valFn$ is of demand type $\D$ only if $\b^j \in \D$.
Thus, for $3 \le j \le |I|$,
there exists a price vector $\p^j$ such that $\Conv \dQL{\p^j}$ is one-dimensional and parallel to $\b^j$.
By invariance,
we can add linear functions to $V^j$ to ensure that $\p^j = \p \in \mathbb{R}_{>0}^I$ for $1 \le j \le |I|$.

For $1 \le j \le 2$,
let one endpoint of $\Conv D^j(\p)$ be $\bundowj$ such that $\Conv \dQL{\p}$ is a subset of the ray with endpoint $\bundowj$ and slope $\dvec^j$.
For $3 \le j \le |I|$, let one endpoint of $\Conv \dQL{\p}$ be $\bundowj$ such that $\Conv \dQL{\p}$ is a subset of the ray with endpoint $\bundowj$ and slope $-\b^j$ (resp.~$\b^j$)
if $(G^{-1}\dvec^1)_j + (G^{-1}\dvec^2)_j \ge 0$ (resp.~$(G^{-1}\dvec^1)_j + (G^{-1}\dvec^2)_j < 0$).
Consider also the valuation $V^{|I|+1}: \{0,1\}^I \to \mathbb{R}$ defined by $V^{|I|+1}(\bun) = \hp \cdot \bun,$
where
\[\hpComp{i} = \begin{cases}
2\pComp{i} & \text{if } \bComp{i}^1 \ge 0\\
0 & \text{if } \bComp{i}^1 < 0
\end{cases}\]
and the consumption vector
\[\bundow^{|I|+1} = \begin{cases}
1 & \text{if } \bComp{i}^1 > 0\\
0 & \text{if } \bComp{i}^1 \le 0
\end{cases}.\]
The valuation $V^{|I|+1}$ is in the domain $\mathcal{V}$ due to the invariance of $\mathcal{V}$.

Consider the economy with agents $j = 1,2,\ldots,|I|+1$,
where agent $j$ has valuation $\valFn$ and endowment $\bundowj$ of indivisible goods.
By construction, at the price vector $\p$:
\begin{itemize}
\item excess demand for agent 1 (resp.~agent 2) is a subset of $\mathbb{R}_{\ge 0} \dvec^1$ (resp.~$\mathbb{R}_{\ge 0} \dvec^2$) that includes at least one nonzero integer vector,
\item if $(G^{-1}\dvec^1)_j + (G^{-1}\dvec^2)_j \ge 0$, excess demand for agent $3 \le j \le |I|$ is a subset of $\mathbb{R}_{\le 0} \b^j$ that includes at least nonzero integer vector;
\item if $(G^{-1}\dvec^1)_j + (G^{-1}\dvec^2)_j < 0$, excess demand for agent $3 \le j \le |I|$ is a subset of $\mathbb{R}_{\ge 0} \b^j$ that includes at least nonzero integer vector; and
\item excess demand for agent $|I|+1$ is $-\b^1$.
\end{itemize}
Hence, it follows from \eqref{eq:neccGenBundConsistentPseudoEquil} that $\zero$ lies in the convex hull of aggregate excess demand at price vector $\p$.
However, as the vectors $\dvec^1,\dvec^2,\b^3,\b^4,\ldots,\b^{|I|}$ are linearly independent and the summands $\frac{1}{2} \dvec^1$ and $\frac{1}{2} \dvec^2$ are not integer,
$\zero$ does not lie in aggregate excess demand at price vector $\p$.
Hence,
by the contrapositive of Fact~\ref{fac:pseudoequilPrices},
no competitive equilibria can exist.

\subsection{Proof of the ``only if'' directions of Propositions~\ref{prop:dkmRelationshipConsistentToDc} and~\texorpdfstring{\ref{prop:demTypeRelationshipConsistentToUnimod}}{\ref{prop:dkmRelationshipConsistentToDc}'}}

Let $\D$ be totally unimodular.
Then, $\D' = \D \cup \{\pm \e{i} \mid i \in I\}$ is also totally unimodular.
As $\D'$ contains the elementary basis vectors,
the zero valuation $V_0$ is $\D'$-quasiconcave.
Hence, the class of $\D'$-quasiconcave utility functions is invariant.
Since $\D'$-quasiconcave utility functions form a domain for equilibrium existence (\citealp*[Theorems 2 and 4]{DaKoMu:01}; \citealp*[Theorem 3]{baldwin2020equilibrium}).
Hence, Theorem~\ref{thm:neccGen} implies that the class of all $\D'$-quasiconcave utility functions is unit- and bundle-consistent.
As $\D' \supseteq \D$, the class of all $\D$-quasiconcave utility functions is unit- and bundle-consistent.
As $\D$ was arbitrary, we have proved that every class of discrete convexity is unit- and bundle-consistent.

\subsection{Proof of the ``if'' direction of Proposition~\ref{prop:totallyunimodgen}}

Let $S$ be the set of all compensated price effects for all agents and goods that lie in $\{-1,0,1\}^I$,
and let $\D = S \cup -S$, which is totally unimodular by hypothesis.
The ``if'' direction of Proposition~\ref{prop:DquasiconcaveEquiv} implies that preferences are all $\D$-quasiconcave.
The ``only if'' direction of Proposition~\ref{prop:dkmRelationshipConsistentToDc} then implies that preferences are bundle-consistent.

\subsection{Proof of the ``only if'' directions of Propositions~\ref{prop:DquasiconcaveEquiv} and~\texorpdfstring{\ref{prop:demTypeEquivTotallyUnimod}}{\ref{prop:DquasiconcaveEquiv}'}}

Consider a $\D$-quasiconcave utility function $\utilFn$.
Since $\utilFn$ is of demand type $\D$,
Fact~\ref{fac:EEDlemma1} and \citet*[Propositions 1 and D.2]{baldwin2021consumer} imply that each pair of goods is either complements or substitutes.
The ``only if'' direction of Proposition~\ref{prop:dkmRelationshipConsistentToDc} implies that $\utilFn$ is unit-consistent.
Therefore, Proposition~\ref{prop:demTypeEquiv}\ref{part:demTypeOneDirectionConsistentEquiv} implies that all compensated price effects for agent $j$ that lie in $\{-1,0,1\}^I$ must lie in~$\D$.

\clearpage

\begin{center}
{\LARGE Online Appendix}
\end{center}

\section{A necessity result with income effects}
\label{oapp:neccIncomeEffs}

In this appendix,
we extend Theorem~\ref{thm:neccGen} to settings with income effects.
First, we extend the definition of invariance to domains of utility functions.

\begin{primedefn}{def:invariance}
A domain $\mathcal{U}$ of utility functions is \emph{invariant} if:
\begin{itemize}
\item for all $U \in \mathcal{U}$ and all price vectors $\p \in \mathbb{R}_{\ge 0}^I$, writing $U'(x_0,\bun) = U(x_0 + \p \cdot \bun,\bun)$, we have that $U' \in \mathcal{U}$; and
\item $\mathcal{U}$ contains the quasilinear utility function with valuation $V_0: \{0,1\}^I \to \mathbb{R}$ defined by $V_0(\bun) = 0$.
\end{itemize}
\end{primedefn}

With this notion of invariance, Theorem~\ref{thm:neccGen} extends to settings with income effects.

\begin{thm}%
\label{thm:neccGenInc}
If competitive equilibria exist in all economies in which agents have utility functions in an invariant domain $\mathcal{U}$,
then the utility functions in $\mathcal{U}$ are unit- and bundle-consistent.
\end{thm}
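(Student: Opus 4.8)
The plan is to deduce Theorem~\ref{thm:neccGenInc} from Theorem~\ref{thm:neccGen} by passing to Hicksian valuations. Let $\mathcal{V}$ be the set of all Hicksian valuations $\valH{\cdot}{u}$ obtained from utility functions $U^j \in \mathcal{U}$ and utility levels $u$. Two preliminary observations reduce the income-effects statement to the transferable-utility one. First, by Fact~\ref{fac:EEDlemma1}, Hicksian demand $\dH{\p}{u}$ coincides with demand for the valuation $\valH{\cdot}{u}$; since price effects, item substitutability, and bundled demand---and hence unit and bundle consistency---are all defined through Hicksian demand, a utility function $U^j$ is unit- and bundle-consistent if and only if every $\valH{\cdot}{u}$ is. Second, $\mathcal{V}$ is invariant in the sense of Definition~\ref{def:invariance}: Part~\ref{def:invarianceii} holds because the quasilinear $V_0$-utility lies in $\mathcal{U}$ and has Hicksian valuation $V_0$; Part~\ref{def:invariancei} holds because the utility-level invariance operation $\util{x_0,\bun}\mapsto\util{x_0+\p\cdot\bun,\bun}$ corresponds, by a one-line computation from $\valH{\bun}{u}=-(\utilFn(\cdot,\bun))^{-1}(u)$, exactly to replacing each $\valH{\cdot}{u}$ by $\valH{\cdot}{u}+\p\cdot\bun$.

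Given these, it suffices to establish the key transfer step: if competitive equilibria exist in all economies with utility functions in $\mathcal{U}$, then competitive equilibria exist in all transferable-utility economies with valuations in $\mathcal{V}$. Indeed, Theorem~\ref{thm:neccGen} applied to the invariant domain $\mathcal{V}$ would then force the valuations in $\mathcal{V}$ to be unit- and bundle-consistent, and the first observation transfers this back to $\mathcal{U}$. This transfer is where the difficulty concentrates, and it is the main obstacle: under income effects, equilibrium utility levels are endogenous. By Fact~\ref{fac:EEDlemma1}, any competitive equilibrium of an income-effects economy projects---at its realized utility levels---to a competitive equilibrium of the associated Hicksian transferable-utility economy; but those levels are determined by the equilibrium, so one cannot simply prescribe a valuation profile $\{\valH{\cdot}{u_j}\}\subseteq\mathcal{V}$ and read off a transferable-utility equilibrium.

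To carry out the transfer, I would fix a target profile $\valH{\cdot}{u_j}\in\mathcal{V}$ with endowments $\bundowj$ and realize the desired equilibrium as the projection of an income-effects equilibrium in which budgets bind at exactly the levels $u_j$. Writing $e^j(\p,u)=\min_{\bun}\{-\valH{\bun}{u}+\p\cdot\bun\}$ for the expenditure function (concave in $\p$ and strictly increasing in $u$), assigning agent $j$ the money endowment $\numerdowj=e^j(\p,u_j)-\p\cdot\bundowj$ makes level $u_j$ budget-binding at price $\p$, so that at $\p$ Walrasian demand equals $\dH{\p}{u_j}$. Since these endowments depend on the as-yet-unknown equilibrium price, I would align them with the equilibrium price through a fixed-point argument: define a map sending a candidate price $\p$ to the set of equilibrium prices of the income-effects economy whose money endowments are $\numerdowj=e^j(\p,u_j)-\p\cdot\bundowj$ (nonempty by hypothesis), and seek a fixed point, at which the realized utility levels are exactly $u_j$ and the equilibrium projects to the sought transferable-utility equilibrium. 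Compactness follows from boundedness of the consumption domains $\{0,\dots,M\}^I$, and the concavity of each $e^j(\cdot,u_j)$ supplies the potential-function structure that plays the role of Fact~\ref{fac:pseudoequilPrices} once utility levels are frozen.

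The hardest part will be the fixed-point step: the equilibrium-price correspondence of an income-effects economy need not be convex-valued (there is no global potential once utility levels vary with price), so Kakutani does not apply directly, and I expect to need either a convexification device or else to bypass the abstract transfer and instead re-run the three constructions in the proof of Theorem~\ref{thm:neccGen} verbatim, with the money-endowment pinning above, using that a utility function's demand type is a price-independent invariant shared by all of its Hicksian valuations to ensure that the parity obstruction underlying non-existence survives at whatever price and (endogenous) utility levels an equilibrium might occur. Establishing this uniform persistence of the obstruction across the endogenous levels---so that the equilibrium cannot escape to a profile of Hicksian valuations at which the inconsistency dissolves---is the crux I expect to be most delicate.
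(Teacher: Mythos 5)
Your overall architecture coincides with the paper's: define $\mathcal{V}$ as the class of Hicksian valuations of utility functions in $\mathcal{U}$, check that $\mathcal{V}$ is invariant (your one-line computation showing that $U(x_0+\p\cdot\bun,\bun)$ has Hicksian valuations $\valH{\cdot}{u}+\p\cdot\bun$ is exactly right), apply Theorem~\ref{thm:neccGen} to $\mathcal{V}$, and transfer consistency back to $\mathcal{U}$ via Fact~\ref{fac:EEDlemma1}. You have also correctly isolated the only nontrivial step: showing that equilibrium existence for all $\mathcal{U}$-economies implies equilibrium existence for all transferable-utility economies with valuations in $\mathcal{V}$, despite the endogeneity of equilibrium utility levels.

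The gap is that you do not actually close that step, and the paper does not close it by a new argument either---it invokes \citet*[Theorem 1]{baldwin2021equilibrium} (the ``equilibrium existence duality''), which states precisely that competitive equilibria exist for all endowments under a profile of utility functions if and only if they exist for all endowments in every associated Hicksian (transferable-utility) economy, at every profile of utility levels. Your proposed substitute---a fixed point of the map sending a candidate price $\p$ to the equilibrium prices of the economy with money endowments $e^j(\p,u_j)-\p\cdot\bundowj$---is the right circularity to try to break, but as you note it is not a proof: the equilibrium-price correspondence has no reason to be convex-valued or upper hemicontinuous on a compact convex domain, so no off-the-shelf fixed-point theorem applies, and your fallback of re-running the three constructions in the proof of Theorem~\ref{thm:neccGen} would require separately verifying that the parity obstructions survive at whatever utility levels an equilibrium realizes. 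Since the missing implication is exactly a theorem already in the literature, the correct completion is a citation rather than the fixed-point machinery you sketch; as written, however, the proposal leaves its own declared crux unresolved.
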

\begin{proof}
The class $\mathcal{V}$ of Hicksian valuations for utility functions in $\mathcal{U}$ is invariant.
By \citet*[Theorem 1]{baldwin2021equilibrium},
competitive equilibria exist in all economies in which agents have valuations in $\mathcal{V}$.
Theorem~\ref{thm:neccGen} therefore implies that the valuations in $\mathcal{V}$ are unit- and bundle-consistent.
The theorem follows by \eedlemma.
\end{proof}

\section{Necessity of bundling goods with sale opportunities}
\label{oapp:egs}

In our leading Example~\ref{eg:threecycle} (revisited in Section~\ref{sec:leading} in the main text),
we bundled only (strictly complementary) goods in order to reveal a bundle inconsistency. The following example shows that in order to reveal a bundle inconsistency,
one may need to bundle goods with (strictly complementary) opportunities to sell other goods.

\begin{eg}
\label{eg:hiddencomp}

Let the set of agents be $I = \{\apple,\banana,\coconut,\date,\eld\}$ denote the set of goods, and let the set of agents be $J=\{1,2,3,4,5\}$. Suppose that agents' preferences are quasilinear and are given by the following valuations: 
\begin{gather*}
V^1(\bun)=3\min\{\bunComp{\ga}+\bunComp{\gb},1\},\quad
V^2(\bun)=3\min\{\bunComp{\gb}+\bunComp{\gc},1\},\quad
V^3(\bun)=3\min\{\bunComp{\gc}+\bunComp{\gd},1\},\\
V^4(\bun)=3\min\{\bunComp{\gd}+\bunComp{\gel},1\},\quad
V^5(\bun)=3\min\{\bunComp{\ga},\bunComp{\gel}\}.
\end{gather*}
The first four agents view $\apple$ and $\banana$, $\banana$ and $\coconut$, $\coconut$ and $\date$, $\date$ and $\eld$ as (perfect) substitutes respectively while agent 5 views   $\apple$ and $\eld$ as (perfect) complements.  

Consider the bundling $\B$ that includes every good except the $\eld$ as well as a bundle of $\apple$ and $\eld$.%
\footnote{Formally,
$\B=\{(1,0,0,0,0),(0,1,0,0,0),(0,0,1,0,0),(0,0,0,1,0),(1,0,0,0,1)\}.$}
This bundling only bundles goods. We can verify that this bundling does not create any inconsistencies. For example, consider bundle prices $\tp=(1,1,1,1,4)$. At those prices the bundled demands of agents~4 and 5 are:
$$\widetilde{D}^4(\tp;\B)=(0,0,0,1,0)\quad\quad \widetilde{D}^5(\tp;\B)=(0,0,0,0,0).$$
Suppose that the price of $\apple$ and $\eld$ bundle decreases so the new bundle prices are $\tp'=(1,1,1,1,1)$. The bundled demand is now:
$$\widetilde{D}^4(\tp';\B)=(-1,0,0,0,1)\quad\quad \widetilde{D}^5(\tp';\B)=(0,0,0,0,1).$$
Agent~5 now simply wants to buy the bundle. But now the bundle is cheap enough that agent~4 (who wants either the $\date$ or $\eld$) wants to buy the bundle and then sell the $\apple$ to end up with the $\eld$.
There are no inconsistencies under this bundling.

Consider instead the bundling $\B'$ in which $\apple$ and $\eld$ can be traded separately, but $\banana$ is bundled with an opportunity to sell the $\apple$, $\coconut$ is bundled with an opportunity to sell the $\banana$, and $\date$ is bundled with the opportunity to sell the $\coconut$.%
\footnote{%
Formally,
$\B'=\{(1,0,0,0,0),(-1,1,0,0,0),(0,-1,1,0,0),(0,0,-1,1,0),(0,0,0,0,1)\}.$}
We claim that there is then an inconsistency.  To see this, 
let us start at bundle prices $\tp=(0.5,0.5,0.5,0.5,4)$. At these prices, the bundled demands of agents~4 and~5 are
$$\widetilde{D}^4(\tp;\B)=(1,1,1,1,0)\quad\quad \widetilde{D}^5(\tp;\B)=(0,0,0,0,0),$$
as agent~4 prefers buying four bundles in order to end up with $\date$ over buying the $\eld$.

Now suppose that the price of the bundle that only includes the $\eld$ falls, so we end up at bundle prices $\p'=(0.5,0.5,0.5,0.5,1)$. The bundled demands are then
$$\widetilde{D}^4(\tp';\B)=(0,0,0,0,1)\quad\quad \widetilde{D}^5(\tp';\B)=(1,0,0,0,1).$$
Agent~5 therefore increases their demand for $\apple$, while agent~4 switches from demanding $\apple$ and the three nontrivial bundles to demanding $\eld$. Hence, bundling goods with opportunities to sell reveals a bundle inconsistency that remains obscured if goods are only bundled with other goods.%
\hfill$\blacksquare$
\end{eg}

\end{document}